\documentclass[11pt,letterpaper]{article}
\usepackage[margin=1in]{geometry}
\usepackage{amsmath,amssymb,amsthm,mathtools}
\usepackage{booktabs,array,float}
\usepackage{xcolor}
\usepackage{enumitem,needspace,microtype}
\usepackage[backend=biber,style=alphabetic,sorting=nyt,sortcites=true,
  maxbibnames=99,maxalphanames=4,giveninits=true,doi=true,url=false,
  isbn=false,eprint=true]{biblatex}
\AtEveryBibitem{\ifentrytype{online}{\clearfield{doi}}{}}
\usepackage[hidelinks]{hyperref}
\usepackage[nameinlink,noabbrev,capitalise]{cleveref}
\allowdisplaybreaks
\usepackage{aliascnt}
\newtheorem{theorem}{Theorem}[section]
\newaliascnt{lemma}{theorem}
\newtheorem{lemma}[lemma]{Lemma}
\aliascntresetthe{lemma}
\newaliascnt{corollary}{theorem}
\newtheorem{corollary}[corollary]{Corollary}
\aliascntresetthe{corollary}
\newaliascnt{proposition}{theorem}
\newtheorem{proposition}[proposition]{Proposition}
\aliascntresetthe{proposition}
\theoremstyle{definition}
\newaliascnt{definition}{theorem}

\aliascntresetthe{definition}
\theoremstyle{remark}
\newaliascnt{remark}{theorem}

\aliascntresetthe{remark}
\crefname{theorem}{Theorem}{Theorems}
\crefname{lemma}{Lemma}{Lemmas}
\crefname{corollary}{Corollary}{Corollaries}
\crefname{proposition}{Proposition}{Propositions}
\crefname{definition}{Definition}{Definitions}
\crefname{remark}{Remark}{Remarks}
\newcommand{\FB}{\operatorname{FB}}
\newcommand{\SB}{\operatorname{SB}}
\newcommand{\GFT}{\operatorname{GFT}}
\newcommand{\cF}{\mathcal F}
\newcommand{\cX}{\mathcal X}

\newcommand{\1}{\mathbf 1}
\newcommand{\E}{\mathbb E}
\newcommand{\dd}{\,\mathrm d}

\newcommand{\pos}[1]{(#1)_+}
\hypersetup{pdftitle={From Bilateral Trade to Matching Markets: Sharp Gains from Trade},
  pdfauthor={Zhengyang Liu, Ying Qin, Zihe Wang}}
\title{From Bilateral Trade to Matching Markets:\\ Sharp Gains from Trade}
\author{%
Zhengyang Liu\thanks{Beijing Institute of Technology. Email: \texttt{zhengyang@bit.edu.cn}}
\and
Ying Qin\thanks{Renmin University of China. Email: \texttt{qinying0420@ruc.edu.cn}}
\and
Zihe Wang\thanks{Renmin University of China. Email: \texttt{wang.zihe@ruc.edu.cn}}
}
\date{}

\begin{document}
\pagenumbering{roman}
\maketitle

\begin{abstract}
\normalsize
We study gains from trade in matching markets with independent private
values and costs, Bayesian incentive compatibility, interim individual
rationality, and no expected budget deficit.   A  second-best
  guarantee for finite bilateral trade extends  without loss to matching
markets with  independent Borel priors,  arbitrary downward-closed
feasibility, and finite expected first-best gains. For bounded buyers with monotone hazard rates and
arbitrary bounded sellers, we determine the exact worst-case ratio of
second-best to first-best gains, approximately $0.72490721$. For binary
buyers and sellers with at most $m$ types, we determine the exact ratio for
every $m$, including $8/9$ when $m=2$ and a limit of $4/5$ as $m$ grows.
Both families of bounds are tight already in bilateral trade.
 \end{abstract}
\clearpage
\setcounter{tocdepth}{2}
\tableofcontents
\clearpage
\pagenumbering{arabic}
\section{Introduction}
A buyer values an item at $v$, and its seller incurs cost $c$ by giving it up.
Trading creates \emph{gains from trade} (GFT) of $v-c$; keeping the item creates
zero gains. This one-buyer, one-seller problem is \emph{bilateral trade}.
In a \emph{matching market}, a bipartite graph specifies which buyers can
trade with which sellers. Each agent participates in at most one trade, and
the gains of a matching are the sum of its edge gains. If values and costs
were known, a maximum-weight feasible matching would maximize these gains.
Its expectation is the \emph{first-best} benchmark, denoted by $\FB$.

Private information changes the problem: a mechanism must elicit values and
costs, choose trades, and set payments. Each agent's private value or cost,
called her \emph{type}, is drawn independently from a known distribution.
We require Bayesian incentive compatibility (BIC): truthful reporting
maximizes each type's expected utility when the other agents report
truthfully. Interim individual rationality (IIR) requires that expected
utility to be nonnegative. Ex-ante weak budget balance (WBB) requires
nonnegative expected net revenue, so the mechanism needs no subsidy on
average. The \emph{second-best} benchmark $\SB$ is the supremal expected
GFT subject to these three conditions. Payments may have either sign, and
there is one expected-budget constraint for the whole market, not one per
trade. These constraints can prevent first best even in bilateral
trade~\cite{MS83}.

In our separate work~\cite{LQRW26}, we establish that the worst-case ratio $\SB/\FB$ in bilateral trade is exactly $1/2$.
Bei, Li, Wu, and Zhou~\cite{BLWZ26} establish the same tight ratio in
matching markets. This equality raises a more general question:
\begin{quote}
  When local restrictions on the type distributions improve bilateral efficiency, do competing trades create an additional loss?
\end{quote}
We give a bilateral-to-matching reduction that preserves
the approximation factor under precisely stated changes of the local
distributions. Two applications determine sharp factors from a restriction on the shape
of buyer distributions and from a bound on the number of types. In each case, the worst ratio
in a matching market is already approached by a single buyer and seller.

\subsection{Main Results}
Our reduction separates a bilateral efficiency bound from the problem of
combining trades. Two applications then quantify the improvement from
monotone hazards and from small supports. They share the composition step,
but require different bilateral bounds.

For a mechanism $\mathcal M$, let $X$ be its random matching under truthful
reports and let $p_a$ be agent $a$'s net payment to the mechanism. Define
\[
 \GFT(\mathcal M)=\E\Bigl[\sum_{(i,j)\in X}(v_i-c_j)\Bigr],
 \qquad \operatorname{Rev}(\mathcal M)=\E\Bigl[\sum_a p_a\Bigr].
\]
Both expectations include the types and the mechanism's randomization.
To handle the shared budget, fix a multiplier $\alpha\ge0$ and consider
\[
 V_\alpha=\sup_{\mathcal M\ {\rm BIC,IIR}}
       \{\GFT(\mathcal M)+\alpha\operatorname{Rev}(\mathcal M)\},
\]
without imposing budget balance inside the supremum. For finite types,
linear programming duality gives $\SB=\min_{\alpha\ge0}V_\alpha$.
Thus bounds for all multipliers yield a second-best guarantee. The
parameter $\alpha$ measures how much weight we place on net revenue; the
function $\alpha\mapsto V_\alpha$ is the \emph{Lagrangian curve}.
A seller's \emph{virtual cost} $\tau(s)$ accounts for actual cost and
\emph{information rents},   {the utilities needed to make truthful
reporting optimal}; the finite formula
is given in \cref{eq:finite-virtual-scores}. Following \cite{LQRW26}, we call
a seller \emph{$\alpha$-weakly regular} if the combined cost
$s+\alpha\tau(s)$ is nondecreasing in $s$. This condition lets us order
sellers' costs and incentive payments consistently. Feasibility is
\emph{downward-closed} when deleting trades preserves feasibility.
The reduction below transfers a bilateral bound at each fixed multiplier.

\Needspace{6\baselineskip}
\begin{theorem}[Bilateral-to-Matching Reduction]\label{thm:bilateral-to-matching}

 Fix $\alpha,\beta\ge0$. If $V_\alpha\ge\beta\FB$ for every independent finite
bilateral-trade instance, then the same bound  holds for every independent finite
matching market with downward-closed feasibility.

 \end{theorem}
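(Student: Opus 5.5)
The plan is to compare two pointwise maximization problems. One is a \emph{relaxed Lagrangian matching problem} on ironed combined scores. The other is the first-best matching problem. I then reduce the comparison to bilateral instances by conditioning on the other agents' types.

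First, I characterize $V_\alpha$ in a finite matching market by the standard Myerson argument. For buyer $i$, define the combined score $w_i(v)=v+\alpha\varphi_i(v)$, where $\varphi_i$ is the finite virtual value. For seller $j$, define $z_j(c)=c+\alpha\tau_j(c)$, using \cref{eq:finite-virtual-scores}. Let $\bar w_i$ and $\bar z_j$ be the ironed versions of these scores. BIC is equivalent to monotone interim allocations plus the envelope payment formula. Taking IIR binding at the worst types, the objective $\GFT+\alpha\operatorname{Rev}$ becomes $\E\bigl[\sum_{(i,j)\in X}(w_i-z_j)\bigr]$. Ironing then gives
\[
 V_\alpha=\E\Bigl[\max_{X\ \text{feasible}}\sum_{(i,j)\in X}\bigl(\bar w_i(v_i)-\bar z_j(c_j)\bigr)_+\Bigr].
\]
The pointwise maximizer is interim monotone because the ironed scores are monotone and feasibility is downward-closed. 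Setting $m=n=1$ gives the bilateral formula $V_\alpha=\E[(\bar w-\bar z)_+]$.

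Second, I decompose $\FB$ edge by edge. Fix a measurable tie-breaking rule and let $X^*(v,c)$ be a max-weight feasible matching. Then $\FB=\sum_{(i,j)}\E\bigl[(v_i-c_j)\1\{(i,j)\in X^*\}\bigr]$. For a fixed edge $(i,j)$, condition on the types $\theta_{-ij}$ of all other agents. In the max-weight matching, $(i,j)$ is chosen only if $v_i-c_j$ beats the best alternative use of $i$ and $j$. That alternative is at least $\max(t_i,0)+\max(s_j,0)$, where $t_i$ is $i$'s best outside marginal and $s_j$ is $j$'s, both determined by $\theta_{-ij}$ (a crude but valid bound).

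Hence, for each fixed $\theta_{-ij}$, the edge contributes at most $\E[(v_i-t_i-c_j-s_j)_+]$. This is the first-best of a bilateral instance in which buyer $i$'s value is shifted down by $t_i$ and seller $j$'s cost is shifted up by $s_j$. Both shifted distributions are still independent and finite. I apply the hypothesis to this shifted instance. A constant shift of a buyer changes $\bar w_i$ by a constant, and likewise for a seller, so the bilateral Lagrangian value of the shifted instance is $\E[(\bar w_i-\tilde t_i-\bar z_j-\tilde s_j)_+]$ for appropriate shifted thresholds. This is at least $\beta$ times the edge's first-best share.

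Third, I reassemble the pieces. I would exhibit a single matching-market BIC/IIR mechanism whose Lagrangian value is at least the sum of these conditional bilateral Lagrangian values. The candidate is a sequential or greedy rule that offers edge $(i,j)$ exactly when the ironed surplus beats the corresponding ironed thresholds. Since $V_\alpha$ is the expected max-weight matching on ironed scores, it dominates any feasible rule, so it suffices to show this rule's Lagrangian value is at least $\beta\FB$.

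The main obstacle is this last step: making the thresholds consistent across edges. The first-best outside options $(t_i,s_j)$ are defined in value space, while the Lagrangian uses ironed-score space. The bilateral guarantee must therefore be applied with thresholds that are simultaneously valid in both spaces. I would first try LP duality for bipartite matching. Take dual prices $(y_i,x_j)$ for the ironed-score matching, so that $V_\alpha=\E[\sum_i y_i+\sum_j x_j]$, and charge each first-best edge to these duals. This reduces the needed inequality to the bilateral one with posted shifts. The remaining issue is that the dual prices must depend only on $\theta_{-ij}$, which should follow from choosing the duals via a VCG-style marginal-contribution decomposition.
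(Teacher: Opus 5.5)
\textbf{Gap 1: the first-best decomposition runs the wrong way.} On the event that first best selects $(i,j)$, the edge contributes its full surplus $v_i-c_j$ to $\FB=\sum_e\E[(v_i-c_j)\1\{e\in X^*\}]$. If selection forces $v_i-c_j\ge T:=\max(t_i,0)+\max(s_j,0)$, then
\[
(v_i-c_j)\1\{v_i-c_j\ge T\}=(v_i-c_j-T)_+ +T\,\1\{v_i-c_j\ge T\}\ge (v_i-c_j-T)_+ .
\]
So the shifted bilateral first best is a \emph{lower} bound on the edge's share, not an upper bound. Applying the premise to the shifted instances therefore cannot yield $\beta\FB$, because the outside options you subtract are exactly what must still be paid for. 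Separately, the claim that the alternative is worth at least $t_i^++s_j^+$ relies on buyer--seller complementarity. That holds for unrestricted bipartite assignment but fails for general downward-closed families: if $\mathcal F$ is the matchings of size at most one, the alternative is only $\max$ of the two outside edges, not their sum. The correct local picture is not a shift but a \emph{cap}. For fixed residual reports $\omega$, first best selects $(i,j)$ exactly for seller costs in an initial set $a(v_i)$ that expands with the buyer's report, so the edge's share is $\E[(v_i-c_j)\1\{c_j\in a(v_i)\}]$.

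\textbf{Gap 2: the composition step is left open, and it is where the theorem's content lies.} Your third step, which you correctly flag as the obstacle, is not resolved by the dual-price/VCG sketch. The premise bounds $V_\alpha$ only on average over the buyer prior. The Lagrangian also contains the buyer's information rent $\alpha\sum_{t<\ell}\Delta_t\,m(b_t)$, which depends on the allocation at every lower buyer report. Because the cap changes with the buyer's report, applying a bilateral bound separately at each cap has two problems. The resulting allocations need not be monotone across caps, so BIC can fail. And the rents at lower reports are computed under the wrong cap.

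The paper closes this gap with four ingredients, none of which (nor a substitute) appears in your proposal:
\begin{enumerate}[label=(\roman*),leftmargin=2em,itemsep=1pt]
\item It trades only on edges of the canonical first-best matching. Downward closure gives feasibility with no conflicts, and each agent has a fixed partner on her winning reports.
\item It applies the premise to the seller conditioned on each cap, for every buyer prior on the fixed support. Minimax then gives a single allocation meeting the Lagrangian target for every buyer type.
\item A smallest-mass, cheapest-first construction makes these allocations coordinatewise nondecreasing as the cap expands. The composed rule is then monotone, and the rent correction $\alpha\sum_{t<\ell}\Delta_t(\mu_t^{r_\ell}-\mu_t^{r_t})$ is nonnegative.
\item Each seller's full distribution is ironed once, which preserves $V_\alpha$ and weakly raises $\FB$. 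This gives conditioned sellers consistent nondecreasing scores.
\end{enumerate}
A threshold rule over all edges, as you suggest, would also have to pay for the displaced alternatives that your accounting has already discarded. The VCG-style duals that are supposed to do this are never constructed.
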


The premise ranges over every buyer distribution. For restricted results,
we vary probabilities on a fixed buyer \emph{support}, the set of possible
types. Following \cite[Definition~4.8]{BRTW26},   {a \emph{cap} is  an
initial set of seller costs: for example, the cap through $s_2$   contains
 $s_1,s_2$ from $s_1<s_2<s_3$. }  Its boundary may be included or excluded.
Bilateral bounds use the conditional seller prior below the cap, with
probabilities renormalized; composition uses the original probability weights.

A family of local rules is \emph{cap-monotone} if, at fixed buyer and seller
reports, enlarging the cap never decreases the trade probability. This
extends the cited definition to randomized rules. Our construction derives
such a family from bilateral value bounds (\cref{thm:compatible-targets}),
after regularizing each seller's full distribution once
(\cref{prop:regularization-main}).

  \begin{corollary}[Second-Best Bounds]
 \label{thm:finite-main-v2}\label{cor:reduction-consequences}

 For $\gamma\ge0$,   if every independent finite bilateral-trade instance satisfies
 $\SB\ge\gamma\FB$, then  every independent finite matching market with
 downward-closed feasibility  satisfies the same bound. The bound also holds
for independent Borel type distributions with
$\FB<\infty$. In particular, $\SB\ge\FB/2$, and the factor is tight
already bilaterally.

 \end{corollary}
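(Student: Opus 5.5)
The finite statement follows from \cref{thm:bilateral-to-matching} by Lagrangian duality, applied at every multiplier. The Borel extension comes from discretizing each type distribution monotonically. The factor $1/2$ is then imported from \cite{LQRW26}.

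\emph{Finite case.} Fix a finite independent matching market with downward-closed feasibility, and assume $\SB\ge\gamma\FB$ for every finite bilateral instance. For a bilateral instance and any $\alpha\ge0$, the duality $\SB=\min_{\alpha'\ge0}V_{\alpha'}$ gives $V_\alpha\ge\SB\ge\gamma\FB$. Hence the premise of \cref{thm:bilateral-to-matching} holds with $\beta=\gamma$ at every $\alpha\ge0$. The theorem then yields $V_\alpha\ge\gamma\FB$ in the matching market for every $\alpha$. Applying the duality once more, now in the matching market where it holds because types are finite, gives $\SB=\min_\alpha V_\alpha\ge\gamma\FB$.

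\emph{Borel case.} Fix $\varepsilon>0$ and proceed in three steps.
\begin{enumerate}[label=(\roman*)]
\item Since $\FB<\infty$, truncate values and costs to a bounded window. Dominated convergence (the realized maximum matching weight is integrable) shows the truncated first best loses at most $\varepsilon$.
\item Coarsen each agent's type to a finite grid. Each buyer value is rounded \emph{down} to the left endpoint of its cell, and each seller cost is rounded \emph{up} to the right endpoint. This gives a finite instance whose first best is within $\varepsilon$ of the truncated one once the mesh is small.
\item By the finite case, the coarse instance has a BIC, IIR, WBB mechanism achieving at least $(\gamma-\varepsilon)\FB$. Run it on the original agents, each of whom reports his cell.
\end{enumerate}

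The main obstacle is checking that step (iii) preserves BIC, IIR and the budget \emph{exactly}. My plan is as follows.
\begin{itemize}
\item First, take the coarse optimal mechanism in the standard form: interim allocations are monotone in the cell index, and payments are given by the envelope formula with zero rent at the worst type. With finite types the supremum is attained, so this normalization loses nothing.
\item In the original market, a buyer's interim allocation is then a nondecreasing step function of his true value. The Myerson payment for such a step function charges exactly the cell thresholds, which are the left endpoints. These coincide with the payments of the coarse mechanism, so revenue, and hence WBB, is unchanged. The same argument applies to sellers, mirrored, using right endpoints.
\item Given identical payments, BIC and IIR in the original market follow from monotonicity of the step allocation with threshold payments.
\item GFT can only rise, since true values are at least the rounded values and true costs are at most the rounded costs.
\end{itemize}
Letting $\varepsilon\to0$ gives $\SB\ge\gamma\FB$ for Borel priors with $\FB<\infty$.

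\emph{The factor $1/2$.} By \cite{LQRW26}, every finite bilateral instance satisfies $\SB\ge\FB/2$, and a sequence of bilateral instances has $\SB/\FB\to1/2$. The first fact plugged into the statement gives $\SB\ge\FB/2$ in every market covered above. The second shows the factor cannot be improved even with a single buyer and seller.
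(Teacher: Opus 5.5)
Your finite case, and your appeal to \cite{LQRW26} for the factor $1/2$ and its bilateral tightness, coincide with the paper's proof. There is a genuine gap in step (iii) of the Borel case, for agents whose types lie \emph{outside} the bounded window $[L,U]$.

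If truncation means clipping, as your dominated-convergence step suggests, then a buyer with $v<L$ is sent to the bottom cell with rounded value $L$, and a seller with $c>U$ to the top cell with rounded cost $U$. For these agents your claim that true values are at least the rounded values and true costs at most the rounded costs is false, so the GFT comparison fails. IIR fails as well. With zero rent at the bottom type the buyer pays $LQ(L)$, so a true type $v<L$ obtains $(v-L)Q(L)<0$ whenever $Q(L)>0$. Symmetrically, a seller with $c>U$ obtains $(U-c)Y(U)<0$ whenever $Y(U)>0$. Raising the boundary rents would need an unbounded subsidy when the support is unbounded, and would in any case break the exact revenue equality you rely on. If you instead meant to exclude such agents, the reports faced by everyone else no longer follow the coarse prior. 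The interim allocations and payments would then no longer match those of the finite mechanism.

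The missing ingredient is \cref{lem:positive-surplus-finite-optimum}: the finite coarse market has a second-best optimum that trades only when the rounded value strictly exceeds the rounded cost. To see this, note that each Lagrangian maximizer $X^\alpha$ of \cref{lem:exact-ironed-optimizer} already trades only on such pairs. Adding the constraints $x_{ij}=0$ on nonpositive-surplus pairs therefore leaves every $V_\alpha$ unchanged, and finite duality on the restricted polytope returns the same $\SB$.

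Since all rounded types lie in $[L,U]$, this forces $Q_i(L)=0$ and $Y_j(U)=0$, hence zero boundary payments. The consequences are:
\begin{itemize}
\item agents clipped on the unfavorable side receive no service and pay nothing;
\item agents clipped on the favorable side (e.g.\ $v>U$) gain the nonnegative term $(v-U)\bigl(Q(U)-Q(b)\bigr)$ against any deviation $b$;
\item every trade has positive true surplus (\cref{lem:interval-projection-extension}).
\end{itemize}
Two routine points remain. You need $\FB<\infty$ to make truthful utilities integrable (\cref{lem:interval-projection-integrability}). Reports in zero-probability cells must be mapped to an occupied type, as the maps $\rho_i,\sigma_j$ do in \cref{lem:grid-extension-v2}. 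With these additions your plan becomes the paper's proof of \cref{lem:finite-borel-efficiency-transfer}.
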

\begin{proof}

For finite priors, $V_\alpha\ge\SB\ge\gamma\FB$ in every bilateral
market. Apply  \cref{thm:bilateral-to-matching} for each $\alpha$ and use
$\SB=\inf_{\alpha\ge0}V_\alpha$. The  extension from finite matching priors
to Borel priors with finite expected first best is proved in
\cref{lem:finite-borel-efficiency-transfer}. The  factor $1/2$ and its
bilateral tightness follow from our separate work~\cite{LQRW26}, recovering
the matching-market half guarantee of \cite{BLWZ26}.

 \end{proof}

  \paragraph{Application: generalized offering mechanisms.}
The uniform mixture of the generalized seller- and buyer-offering mechanisms
has a particularly simple comparison at $\alpha=1$. For every allocation $x$,
\[
 \GFT(x)+\Lambda(x)
 =\E\!\left[\sum_{(i,j)}(\phi_i(v_i)-c_j)x_{ij}\right]
 +\E\!\left[\sum_{(i,j)}(v_i-\tau_j(c_j))x_{ij}\right].
\]
The two terms are exactly the objectives underlying the generalized seller-
and buyer-offering rules, so \cref{lem:offering-comparison} gives
$2G_{\rm offer}\ge V_1$. Consequently, any transferred bound
$V_1\ge\beta\FB$ yields $G_{\rm offer}\ge\beta\FB/2$. The value
$\alpha=1$ is special because only then does
$(v-c)+\alpha(\phi-\tau)$ split into these two unweighted offering
objectives; for a general multiplier there is no analogous decomposition
for the same pair of mechanisms.

  {The  extension of the half guarantee  to independent Borel priors with finite
expected first-best gains  is \cref{thm:main-v2}. }
 The next two results ask how local type restrictions improve this factor. We first study \emph{monotone hazard rates} (MHR), a distributional condition
used in earlier bilateral-trade bounds~\cite{BM16,Fei22}.   For a buyer with
 density $f$ and distribution function $F$,   MHR means that the hazard rate
 $f(x)/(1-F(x))$ is nondecreasing   on the interior of the support. An  atom at
the largest value is allowed. This application constructs rules directly
on continuous report intervals and then uses the same composition argument.

\begin{theorem}[MHR Buyers]\label{thm:mhr-intro}
For independent bounded priors, MHR buyers and arbitrary sellers satisfy
$\SB\ge\gamma_{\rm MHR}\FB$ under every downward-closed matching constraint,
where $\gamma_{\rm MHR}\simeq0.72490721$ is defined exactly in
\cref{mhr:eq:gamma}. The factor is sharp already in bilateral trade.
\end{theorem}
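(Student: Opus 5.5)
The plan has two parts: a bilateral lower bound and a bilateral upper bound, joined by the composition machinery. Throughout I work through the Lagrangian curve. Since the MHR class is not closed under arbitrary reweighting of buyer probabilities, I cannot invoke \cref{thm:bilateral-to-matching} verbatim. Instead I will follow the route sketched for this application: for each multiplier $\alpha\ge0$, build cap-monotone local rules directly on continuous buyer report intervals, and verify a bilateral value bound for them.

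\textbf{Step 1: regularize the seller.} Fix $\alpha$. I first regularize each seller's full distribution once, using \cref{prop:regularization-main}, so that the seller is $\alpha$-weakly regular. The ironed combined cost $s+\alpha\tau(s)$ is then nondecreasing, and conditioning on any cap preserves this ordering.

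\textbf{Step 2: bilateral bound for each cap.} For every cap and every MHR buyer, I construct a randomized local rule, namely a posted-price or threshold mixture between the buyer-side virtual value $v+\alpha\phi(v)$ and the seller-side combined cost. The rule should achieve
\[
\GFT+\alpha\operatorname{Rev}\ \ge\ \gamma_{\rm MHR}\,\E[(v-c)_+]
\]
against the renormalized seller prior below the cap. I then check that enlarging the cap never lowers the trade probability at fixed reports, so the family is cap-monotone.

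\textbf{Step 3: compose.} The composition argument behind \cref{thm:compatible-targets} and \cref{thm:bilateral-to-matching} then lifts this to $V_\alpha\ge\gamma_{\rm MHR}\FB$ in every downward-closed matching market. This step uses the original probability weights and runs each buyer's rule against the cap of sellers still available. Taking the infimum over $\alpha$ gives $\SB\ge\gamma_{\rm MHR}\FB$, first for finite sellers and then, via a discretization argument in the spirit of \cref{lem:finite-borel-efficiency-transfer}, for bounded priors. Because MHR buyers are continuous, I would discretize only the sellers and keep the buyer intervals exact.

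\textbf{Step 4: the bilateral bound is the main obstacle.} The hard part is the analytic core of Step 2: identifying the worst case and proving the inequality with the exact constant. My approach is to reduce the seller to extremes. Using linearity of $V_\alpha$-type objectives in the seller distribution after ironing, I would argue that the worst seller is a point mass (a deterministic cost $c$), or at worst a two-point mixture. For the buyer, a standard MHR extremality argument says the worst case is a shifted exponential truncated at the upper bound with an atom at the top, so the hazard rate is constant on the interior.

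This leaves a low-dimensional optimization over the parameters $(c,\lambda,\text{truncation point},\alpha)$. I would compute it explicitly: $\FB$ is an exponential integral, and $V_\alpha$ is the best over thresholds of gains plus $\alpha$ times revenue. The value $\gamma_{\rm MHR}$ of \cref{mhr:eq:gamma} should emerge as the min–max of this explicit function, presumably characterized by a transcendental first-order condition, which matches the non-closed-form numeral $0.72490721$.

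\textbf{Step 5: tightness.} For the upper bound I would take the extremal bilateral instance found in Step 4: a truncated exponential buyer with a top atom, and the worst seller. I then compute $\SB$ exactly, for example via the minimizing $\alpha$ in $\SB=\min_\alpha V_\alpha$, and show that the ratio approaches $\gamma_{\rm MHR}$.

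I expect the delicate points to be the MHR extremality reduction, which requires showing that the ratio is monotone under hazard-rate flattening, and proving cap-monotonicity of the randomized rule simultaneously for all caps.
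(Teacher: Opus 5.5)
Your Step~3 fails in the orientation you chose, and this is the central gap. In \cref{thm:edge-composition} the seller cap moves with the buyer's report, and the rent correction in \cref{eq:unified-rent-correction} is controlled report by report. So what composes is a \emph{typewise} bound $C^a(b)\ge T^a(b)$ at every buyer report $b$, not a per-cap bound averaged over an MHR buyer prior. Your Step~2 rule, which thresholds $v+\alpha\phi(v)$, depends on the buyer's prior and so is not of this form. \cref{thm:compatible-targets} produces typewise bounds by minimax only because its premise holds for \emph{every} buyer vector $g$. That is exactly the closure under reweighting that you yourself note fails for MHR.

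Moreover, a typewise buyer-side bound never uses the buyer's prior: averaging it over any distribution on the report interval gives $V_\alpha\ge\beta\FB$ for that distribution. So if the MHR agent is the buyer whose report selects caps of an arbitrary seller, any factor you compose also holds for arbitrary buyers and sellers. The resulting second-best factor therefore cannot exceed the unrestricted value $1/2$, and the MHR hypothesis is never used.

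The missing idea is reflection, $b\mapsto U-b$ and $c\mapsto U-c$, which preserves surplus, incentives and revenue. The MHR buyer becomes a seller whose CDF has concave logarithm, and this agent is now the \emph{capped} side; a cap yields $F_a(y)=F(\min\{y,a\})$, which is still log-concave. The arbitrary seller becomes the buyer, and for her one proves $C(v)\ge c_\alpha T(v)$ at every value $v$. In this orientation the reflected seller is automatically regular, so no regularization is needed. The rules are built directly on intervals, and cap-monotonicity follows from the strictly decreasing ratio $K_F(s,v)/F(s)$.

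Your Step~4 reduction to extreme sellers also runs the wrong way. For a fixed pointwise-monotone rule with normalized envelope payments, gains plus $\alpha$ times revenue is linear in the seller's prior. Hence $V_\alpha$ is convex in that prior, and the ratio $V_\alpha/\FB$ of a mixture of sellers is at most the largest ratio among its components. Extreme points are therefore where the ratio is \emph{largest}, not smallest.

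Concretely, a deterministic seller cost $c$ gives $\SB=\FB$: post price $c$ to the buyer. A two-point seller is, after exchanging sides, the binary side of \cref{thm:curve}, where the factor is already at least $4/5>\gamma_{\rm MHR}$. The extremal sellers are continuous. In the paper's tight family the original seller is $S=H-X$, where $X$ has tail $(\delta/x)^{1-t}$ on $(\delta,z)$ (where its reflected Lagrangian score vanishes) and an exponential tail beyond $z$. The buyer is $\min\{\mathrm{Exp}(1),H\}$, as you guessed. The ratio reaches $c_\alpha$ only in the limit $\delta\downarrow0$, $H\to\infty$, with $z=z_t$ and $k=z_t/(1-t)-1$. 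So your Step~5 instance would not approach $\gamma_{\rm MHR}$.

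On the lower-bound side, the exact constant does not come from optimizing a few posted-price parameters. It comes from three ingredients: the cutoff rule $F(s(v))=c_\alpha K_F(s(v),v)$, where $K_F$ is a $u^{t-1}$-weighted average of $F$ over $[s,v]$; an exact expression for $C'(v)-c_\alpha F(v)$; and the normalized single-crossing \cref{mhr:lem:crossing}, which makes affine $\log F$ the worst case only after the weighted averages are matched.

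Finally, with continuous buyers, finite LP duality does not give $\SB=\inf_\alpha V_\alpha$. The paper recovers budget feasibility by a separation argument that uses a fixed positive-revenue posted-price rule.
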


This determines the exact factor between the previous bilateral lower and
upper bounds $1/(e-1)$ and $2/e$~\cite{Fei22,BM16}, and shows that competing
trades cause no further loss. \Cref{thm:sharp-mhr} gives the sharp factor for
every $V_\alpha$. The same worst-case factors hold if buyers must have densities and hence
no atoms. We also extend the lower bounds to unbounded types when all types
have a common finite lower bound and expected first-best gains are finite
(\cref{app:clipping}). The formal statement in \cref{thm:sharp-mhr} specifies
the continuous report model.

Our second application asks how the number of possible types limits the
efficiency loss, without a hazard or other shape restriction. A \emph{binary} agent
  {has two support points; the support points  and probabilities may differ across
agents. }  Varying probabilities and conditioning on caps preserve support-size
bounds, so this application uses the finite reduction directly.

\Needspace{9\baselineskip}
\begin{theorem}[Binary Buyers]\label{thm:binary-intro}
For each integer $m\ge1$, the sharp second-best factor for independent finite
priors with binary buyers and at most $m$ types per seller, under
downward-closed matching constraints, is
\[
 \gamma_m=\left[\max_{0\le q<1}(1+q-q^2-q^m)\right]^{-1}.
\]
In particular, $\gamma_1=1$, $\gamma_2=8/9$, $\gamma_3=27/32$, and
$\gamma_m\downarrow4/5$. Each bound is tight bilaterally.
\end{theorem}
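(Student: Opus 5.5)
The proof splits into a lower bound, obtained by a bilateral bound plus a transfer to matching markets, and a matching bilateral upper-bound instance.

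\textbf{Transfer to matching markets.} The premise of \cref{thm:bilateral-to-matching} quantifies over all bilateral instances, so it cannot be invoked verbatim here. Instead I would use the restricted version built from cap-monotone local rules (\cref{thm:compatible-targets} together with \cref{prop:regularization-main}). The point is that binary buyers, and sellers with at most $m$ types, form a class closed under the operations that version uses: varying buyer probabilities on a fixed support, regularizing the seller once, and conditioning the seller on a cap. Conditioning on a cap keeps at most $m$ support points, and the buyer support stays two points. The composition step then carries a bilateral bound $V_\alpha\ge\gamma_m\FB$, valid for every $\alpha$, to every downward-closed matching market, and $\SB=\min_\alpha V_\alpha$ finishes the lower bound. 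So the real content is the bilateral inequality $V_\alpha\ge\gamma_m\FB$ for every binary buyer and every $m$-type (regular) seller.

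\textbf{Bilateral lower bound.} Fix $\alpha$. Normalize the buyer to values $0\le v_L<v_H$ with $\Pr[v_H]=p$. The buyer's virtual values are then $\phi(v_H)=v_H$ and $\phi(v_L)=v_L-\frac{p}{1-p}(v_H-v_L)$. I would compare two families of mechanisms.
\begin{enumerate}[label=(\roman*)]
\item A posted-price-type rule that trades only with the high buyer, with the seller facing the price $v_H$. It loses only the low-buyer surplus $\E[(v_L-c)_+]$ and incurs no information rent on the buyer side.
\item The rule maximizing $\E[(v-c)+\alpha(\phi(v)-\tau(c))]$, which sorts on virtual surpluses.
\end{enumerate}
By LP duality, the worst case over $(\alpha$, the seller distribution$)$ should be attained by an extremal seller whose $m$ costs equalize the competing mechanisms. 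I would guess that this extremal seller is geometric, with tail probabilities $q,q^2,\dots,q^m$ on costs that make every threshold price equally profitable (an ``equal-revenue'' discrete seller). For such a seller, $\FB$ against the binary buyer normalizes to $1+q-q^2-q^m$ while $V_\alpha\le 1$; this is exactly the source of the formula for $\gamma_m$. To prove the bound for general sellers, I would write $V_\alpha\ge\max$ over the (i)/(ii) family of threshold mechanisms and show that $\FB\le\gamma_m^{-1}\max(\cdot)$. I would do this by a revenue-curve argument: each threshold mechanism's objective is a linear functional of the seller's cumulative probabilities $G(s_k)$. Taking the worst $G$ subject to all threshold values being at most $1$ gives an LP in $m$ variables, and its solution is the geometric one, with the $q^m$ term coming from the truncation at $m$ types.

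\textbf{Main obstacle and tightness.} The main obstacle is solving that extremal LP cleanly for every $\alpha$ simultaneously. The difficulty is handling the buyer's low type, whose virtual value can be negative and which interacts with $\alpha$. I would first try to show that the worst $\alpha$ makes the buyer-side and seller-side terms balance, reducing the problem to a one-parameter optimization in $q$.

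For tightness, I would take the geometric seller at the maximizing $q$ and a binary buyer with $v_L$ at the smallest cost. Then I would exhibit an explicit $\alpha$, via the dual, with $V_\alpha\le\gamma_m\FB$ up to $\varepsilon$.

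The special values are simple checks. $m=1$ gives $\max(1-q^2+q-q)=1$. $m=2$ gives $\max(1+q-2q^2)=9/8$ at $q=1/4$. $m=3$ gives $32/27$ at $q=1/3$. As $m\to\infty$ the limit is $\max(1+q-q^2)=5/4$, and monotonicity in $m$ holds because $q^m$ decreases in $m$.
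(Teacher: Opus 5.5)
\textbf{Gap: the bilateral inequality is never proved.} Your transfer step is the paper's route. Binary buyer supports and sellers with at most $m$ types are closed under changing buyer probabilities on a fixed support, full-distribution seller regularization, and conditioning on a cap, so \cref{thm:compatible-targets} and composition carry a bilateral bound to matching markets.

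You correctly call the bilateral inequality the real content, but your plan for it does not work:
\begin{itemize}
\item Your two-family comparison gives nothing. Rule (ii) is the Lagrangian maximizer itself, so bounding $V_\alpha$ below by it is circular, and rule (i) alone ignores the low type.
\item The ``LP in $m$ variables'' is not an LP. The adversary chooses costs and masses jointly, and $\FB$ and the cutoff values are bilinear in them. The binary buyer's rent also couples the two buyer types.
\item The bound must hold at every $\alpha$, so there is no ``worst $\alpha$'' to choose.
\end{itemize}

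The paper needs no extremal problem. It uses two steps (\cref{hs:lem:cutoff,hs:lem:bilateral}):
\begin{enumerate}
\item[(a)] Fix one buyer value $v$ and set $A_k=\sum_{h\le k}f_h(v-s_h)$. Testing prefix $k$ in $K_\alpha(v)$ and dropping $\alpha F_{k-1}(v-s_k)$ gives $A_k\le(1-q)K_\alpha(v)+qA_{k-1}$. Iterating over at most $m$ costs yields $H(v)\le(1-q^m)K_\alpha(v)$, with no regularity assumption.
\item[(b)] For buyer values $x<y$ with $\Pr[y]=t$, run prefix cutoffs at the combined scores $(1+\alpha)z$ and $(1+\alpha)y$, where $z=x-q\frac{t}{1-t}(y-x)$. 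With ties broken toward larger prefixes these rules are monotone, so $V_\alpha\ge(1-t)K_\alpha(z)+tK_\alpha(y)$. The low type's loss satisfies $(1-t)(H(x)-H(z))\le qt(y-x)F(x)$. Charge it to the high type via $K_\alpha(y)\ge(1+\alpha)(y-x)F(x)$; this produces exactly the extra $q(1-q)$ in $D_m(q)$.
\end{enumerate}

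\textbf{Gap: the tightness construction fails as stated.}
\begin{itemize}
\item If $v_L$ equals the smallest cost, the low type has no surplus. Posting price $v_H$ to both agents then attains $\FB$ with zero budget, so $\SB=\FB$. Normalize the lowest cost to $0$ and the high value to $1$. The low value must then sit strictly above the lowest cost, at $x=\alpha t/(1+\alpha-t)$. There its combined score is exactly zero, so it adds the $q(1-q)$ term to $\FB/V_\alpha$ but nothing to $V_\alpha$.
\item A seller with fixed geometric probabilities keeps its lowest atom bounded away from zero. The term dropped in step (a) is negligible only when $F_{k-1}/F_k\to0$, and \cref{thm:mass-finite} shows that a positive lower bound on the lowest atom gives a strictly better factor. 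Your ``$V_\alpha\le1$'' normalization for that seller is also unverified.
\end{itemize}

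The paper's family (\cref{lem:family}) takes $F_k=\eta^{m-1-k}$ with $\eta\to0$. Its costs $1-(\alpha\eta/(1+\alpha-\eta))^k$ give every non-bottom seller type combined score exactly $1+\alpha$. Then $V_\alpha=(1+\alpha)tf_0$ exactly and $\FB/V_\alpha\to D_m(q)$ as $\eta,t\to0$. Weak duality, $\SB\le V_\alpha$ at the maximizing $q$, then gives the matching upper bound.

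Your constant checks are right. The limit $\gamma_m\downarrow4/5$ still needs the squeeze $5/4-2^{-m}=D_m(1/2)\le\max_qD_m(q)\le5/4$.
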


\Cref{thm:curve} gives the full multiplier curve and rational tight instances
in $[0,1]$. Exchanging the sides preserves the result, so one binary side
suffices for the tight $4/5$ guarantee with arbitrary finite support on the
other side. \Cref{sec:further-consequences} discusses refinements   and
{ guarantees for specific offering mechanisms }. All ratios exclude
$\FB=0$. The two applications identify different bilateral improvements;
the reduction explains why matching competition preserves them.

\subsection{Technical Overview}
The main obstacle is incentive compatibility when a buyer's report changes
both her chance to trade and the local distribution she faces. Fix an edge
and the reports of all agents outside that edge. First best selects the edge
for an initial set of seller costs, ending at a threshold determined by the
buyer's report. This set expands as the buyer's value rises. Applying a bilateral guarantee separately
at each cap need not work: the chosen allocations may fail to be monotone
across caps, and changing the allocation at lower buyer reports changes the
information rent owed to higher types.

For finite types, the minimax theorem turns a bound valid for every buyer
distribution on a fixed support into a single allocation satisfying an
inequality for each buyer type. We call such inequalities \emph{typewise}
bounds, to distinguish them from bounds averaged over the buyer's prior.
We meet them recursively, allocating the smallest mass of the cheapest
seller types that covers the current target and the rents already required
at lower buyer types. If a larger cap selected less mass, that mass would
also meet the smaller cap's requirements, contradicting minimality. The
resulting rules are cap-monotone. On first-best edges, lower buyer reports
use smaller caps and hence require no more rent than the fixed-cap bound
allows. Summing the bounds therefore loses no factor. Irregular sellers are
regularized once on their full distributions, preserving $V_\alpha$ and
weakly increasing $\FB$; regularizing each cap separately would not suffice.

The two applications supply different local bounds. For MHR buyers,
\emph{reflection} exchanges the two sides: for a common upper bound $U$,
a buyer value $b$ becomes a seller cost $U-b$, and a seller cost $c$ becomes
a buyer value $U-c$. This preserves gains and turns the MHR buyer prior
into a seller prior. A cutoff defined by a weighted-average identity
balances the allocation's gains and information rents. A comparison with an exponential distribution, in which the two cumulative
distribution functions cross at most once, proves the required typewise
inequality and cap-monotonicity directly, without discretization. For binary
buyers, a seller-cutoff recurrence gives the factor $1-q^m$, where
$q=\alpha/(1+\alpha)$. Accounting for the buyer information rent adds
$q(1-q)$, yielding the denominator in
\cref{thm:binary-intro}. Explicit bilateral families match both bounds.
Thus the finite reduction and the MHR analysis use the same composition
argument, but different proofs of the local inequalities.

\subsection{Related Work}
Table~\ref{tab:comparison} separates the known tight half guarantees from
the sharper distribution-dependent factors and the reduction proved here.
All efficiency ratios in the table compare second best with first best.
\begin{table}[htbp]
\centering
\begin{tabular}{@{}>{\raggedright\arraybackslash}p{.22\linewidth}>{\raggedright\arraybackslash}p{.34\linewidth}>{\raggedright\arraybackslash}p{.38\linewidth}@{}}
\toprule
Setting & Previous Results & This Paper \\
\midrule
Arbitrary independent types
 & Tight $1/2$ bilaterally (our separate work~\cite{LQRW26}); tight $1/2$ in matching markets~\cite{BLWZ26}
 & Recovers the matching bound through a general bilateral-to-matching reduction \\
\addlinespace
MHR buyer; arbitrary seller
 & Bilaterally, $1/(e-1)\le\inf\SB/\FB\le2/e$~\cite{Fei22,BM16}
 & Exact factor $\gamma_{\rm MHR}\simeq0.7249$, both bilaterally and in matching markets \\
\addlinespace
Binary buyer and seller
 & Bilateral numerical grid minimum $0.89189$~\cite{Schott23}
 & Exact factor $8/9$ in both settings; sharp formula for every seller support size \\
\addlinespace
Bilateral-to-matching reduction
 & Requires supplied cap-monotone rules~\cite{BRTW26}
 & Derives such rules from bilateral Lagrangian inequalities \\
\bottomrule
\end{tabular}
\caption{Independent types and the BIC/IIR benchmark with one expected
budget. The half factors are prior results. Fei's seller-pricing guarantee
implies the stated MHR lower bound; the reported binary grid minimum is
numerical, rather than an exact worst-case factor.}
\label{tab:comparison}
\end{table}

Our separate bilateral half theorem and local allocation construction
\cite{LQRW26} provide the starting point. Bei, Li, Wu, and Zhou
\cite{BLWZ26} prove the tight matching-market half theorem. Babaioff,
Rubinstein, Tan, and Wang give a \emph{meta-auction}, a matching-market
mechanism that combines supplied cap-monotone bilateral rules~  {\cite{BRTW26}}. Our local theorem derives the
needed rules from bilateral value inequalities and preserves gains plus
revenue at every multiplier. This stronger reduction yields the sharp MHR
and finite-support factors, beyond the already known half guarantee.

For MHR buyers, Fei's tight $1/(e-1)$ seller-pricing guarantee is not tight
for second best: its zero-cost-seller example has $\SB=\FB$~\cite{Fei22}.
Blumrosen and Mizrahi give the $2/e$ second-best upper bound, even with
both priors MHR; their $2/e$ seller-offering lower bound additionally
requires a concave buyer hazard~\cite[Theorems~3.10 and 4.1, full version]{BM16}.
We impose MHR on buyers alone. Schottm\"uller numerically evaluates the
second-best benchmark for fixed binary priors within an information-design
study~\cite{Schott23}. We determine its exact infimum and recover the
reported grid instance in \cref{app:tight-family}.

Random offerer chooses the buyer or seller uniformly to make a
profit-maximizing offer. Its guarantees~\cite{DMSW22,Fei22,HW25} and
upper-bound examples~\cite{BDK21,CGLM26} were improved by Jo~\cite{Jo26}
to $1/\pi\le\inf\operatorname{RO}/\FB<0.460243$. The connection to our
reduction is a separate $V_1$ comparison with generalized offering mechanisms
(\cref{app:offering-consequences}); random-offerer upper bounds do not bound
$\SB$. Robust bilateral mechanisms and stronger incentive requirements
appear in \cite{HR87,BD21}. Fixed-price welfare bounds~\cite{KPV22,CW23,LRW23,GK26}
include the seller's endowment, so they do not imply the same multiplicative
GFT bounds. Sample-based offers~\cite{DMSWW25} use a different access model
from our known priors.

For larger markets, trade reduction~\cite{McA92} and modular double
auctions~\cite{DTR17} give structured mechanisms; strongly budget-balanced
welfare approximations appear in \cite{CKLT16,CGKLRT20}.
Generalized offering mechanisms approximate second-best GFT~\cite{BCWZ17},
and \cite{BCGZ18} also obtains asymptotic first-best efficiency. We instead
seek exact worst-case factors without a large-market limit.
Multi-dimensional GFT~\cite{CGMZ21,RTZ26} permits private item-specific
values; our types remain scalar, even with public match-specific values.
Myerson's payment identities~\cite{Mye81} and the budget conversions
in \cite{BCWZ17,BN09} underlie our formulation.

\paragraph{Organization.}
\Cref{sec:model} sets out the model and budget dual.
\Cref{sec:reduction} proves the reduction in three steps: composition,
construction of local rules, and seller regularization.
\Cref{sec:mhr-overview,hs:sec:sharp} then establish the sharp MHR and
finite-support bounds. \Cref{sec:further-consequences} discusses consequences
and open questions. \Cref{app:foundations,app:general-priors} supply the
deferred finite proofs and the extensions to general priors.
\Cref{app:refinements,app:further-scope} develop the additional refinements
and applications discussed in \cref{sec:further-consequences}.

\section{  {Preliminaries} }\label{sec:model}
To prove the reduction, we need an allocation-based expression for the
shared budget constraint. We first give the finite model and its payment
identities, then explain the continuous report convention used for MHR buyers.
A market $\mathcal I$ has a bipartite compatibility graph $G=(B,S,E)$. A buyer receives
at most one item and values every compatible item equally; a seller supplies
one item. Let $\mathcal F$ be the allowed matchings. Throughout the structural results, it is any nonempty downward-closed
family of matchings of $G$. Priors and feasibility are public and do not depend on reports.

Buyer $i$ has support $b_i^1<\cdots<b_i^{n_i}$   with probabilities  $g_{i\ell}>0$;
seller $j$ has support $s_j^1<\cdots<s_j^{m_j}$   with probabilities  $f_{jk}>0$.
An allocation $x$ assigns at every profile a   probability distribution  over $\mathcal F$, with edge probabilities

 $x_{ij}$. A report profile $\theta$ lists all agents' reports, and
$\theta_{-i}$ omits agent $i$. Write $q_i=\sum_jx_{ij}$, $y_j=\sum_ix_{ij}$ and
$Q_i(b)=\E_{\theta_{-i}}[q_i(b,\theta_{-i})]$,
$Y_j(s)=\E_{\theta_{-j}}[y_j(s,\theta_{-j})]$.
Here $q_i,y_j$ are trade probabilities at a report profile, while $Q_i,Y_j$
average over the other agents' types and are called \emph{interim} trade
probabilities. Agents   maximize  expected utility. Utilities are
\emph{quasi-linear}: a signed payment $p_a$ is from agent $a$ to the mechanism,
and buyer and seller utilities are $v_iq_i-p_i$ and $-c_jy_j-p_j$,
respectively.
BIC and IIR have the meanings given in the introduction. Universal
dominant-strategy incentive compatibility (universal DSIC) requires truthful reporting for every fixed report-independent random seed
and every opponents' report profile; ex-post IR requires nonnegative
truthful utility at every profile and random realization.

We use $(w)_+=\max\{w,0\}$ for the positive part of  $w$.
Since GFT depends only on the allocation, we also write $\GFT(x)$ for the
expected gains of any mechanism with allocation rule $x$. Our benchmarks are
\begin{equation}\label{eq:gft-and-first-best}
 \GFT(x)=\E\Bigl[\sum_{(i,j)\in E}(v_i-c_j)x_{ij}\Bigr],\qquad
 \FB=\E\Bigl[\max_{M\in\mathcal F}\sum_{(i,j)\in M}(v_i-c_j)\Bigr].
\end{equation}
\begin{equation}\label{eq:second-best-definition}
 \SB=\max\{\GFT(x):(x,p)\text{ is feasible, BIC, IIR, ex-ante WBB}\}.
\end{equation}
Ex-ante WBB requires $\E[\sum_a p_a]\ge0$. Strong budget balance (SBB) below
means $\sum_ap_a=0$ at every report profile and internal realization. Signed
transfers impose no limited-liability restriction. The finite-support linear program
ensures attainment of $\SB$.

Let $\mathcal X$ be the polytope of feasible allocations with $Q_i$ nondecreasing
and $Y_j$ nonincreasing. These monotonicity conditions are exactly those
needed for BIC implementability with suitable payments. The \emph{virtual
value} $\phi_i$ and \emph{virtual cost} $\tau_j$ express the corresponding
expected payments as weighted trade probabilities:
\begin{equation}\label{eq:finite-virtual-scores}
 \begin{aligned}
 \phi_i(b_i^\ell)&=b_i^\ell-(b_i^{\ell+1}-b_i^\ell)
          \frac{\sum_{r>\ell}g_{ir}}{g_{i\ell}}, & \phi_i(b_i^{n_i})&=b_i^{n_i},\\
 \tau_j(s_j^k)&=s_j^k+(s_j^k-s_j^{k-1})
          \frac{\sum_{r<k}f_{jr}}{f_{jk}}, & \tau_j(s_j^1)&=s_j^1.
 \end{aligned}
\end{equation}
The buyer formula applies below her largest type and the seller formula
above her smallest type. Each correction accounts for the utility that
other types   {must receive for truthful reporting}. The strict tails account for
the discrete gaps between consecutive reports. The next lemma uses these
payment identities to replace budget feasibility by one linear inequality.
\begin{lemma}[Payments and Budget Duality]\label{lem:allocation-characterization}
For $x\in\mathcal X$, the largest expected net revenue compatible with BIC and
IIR is
\begin{equation}\label{eq:finite-revenue-identity}
 \Lambda(x)=\E\Bigl[\sum_i\phi_i(v_i)q_i\Bigr]-\E\Bigl[\sum_j\tau_j(c_j)y_j\Bigr].
\end{equation}
An allocation is implementable with BIC, IIR, and ex-ante WBB exactly when it
belongs to $\mathcal X$ and $\Lambda(x)\ge0$. Moreover,
\begin{equation}\label{eq:lagrange-duality}
 \SB=\min_{\alpha\ge0}V_\alpha,\qquad
 V_\alpha=\max_{x\in\mathcal X}\{\GFT(x)+\alpha\Lambda(x)\}.
\end{equation}
The minimum is attained. Some optimum is a mixture of at most two Lagrangian
maximizers at a common minimizing multiplier.
\end{lemma}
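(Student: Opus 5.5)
We first establish the revenue formula via the Myerson payment identities for discrete types, then handle implementability, and finally obtain the duality by finite-dimensional LP/minimax arguments.

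\textbf{Revenue identity.} Fix $x\in\mathcal X$ and a buyer $i$ with interim trade probability $Q_i$ and interim payment $P_i(b)=\E_{\theta_{-i}}[p_i]$. On the finite support, BIC requires
\[
b^\ell Q_i(b^\ell)-P_i(b^\ell)\ge b^\ell Q_i(b^{\ell'})-P_i(b^{\ell'})
\]
for all $\ell,\ell'$.

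Monotonicity of $Q_i$ lets us reduce to the adjacent downward constraints. Indeed, with $Q_i$ nondecreasing, the binding adjacent-downward payments
\[
P_i(b^{\ell+1})=P_i(b^\ell)+b^{\ell+1}\bigl(Q_i(b^{\ell+1})-Q_i(b^\ell)\bigr)
\]
satisfy all BIC constraints. This is the standard telescoping argument: the upward adjacent constraint holds because $b^\ell\le b^{\ell+1}$ and $Q_i$ is monotone, and nonadjacent constraints follow by chaining. Conversely, every BIC payment rule has $P_i(b^{\ell+1})-P_i(b^\ell)\le b^{\ell+1}(Q_i(b^{\ell+1})-Q_i(b^\ell))$.

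IIR binds only at the lowest type, since utility is nondecreasing in type. So the maximal revenue sets $U_i(b^1)=0$ and uses the binding chain. This gives buyer utility
\[
U_i(b^\ell)=\sum_{r<\ell}(b^{r+1}-b^r)Q_i(b^r).
\]
Summing $g_{i\ell}\bigl(b^\ell Q_i(b^\ell)-U_i(b^\ell)\bigr)$ and swapping the order of summation yields $\E[\phi_i(v_i)Q_i(v_i)]$.

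The seller case is symmetric. Here $Y_j$ is nonincreasing, IIR binds at the highest cost, and the rents accumulate from the top. Summing gives $-\E[\tau_j(c_j)Y_j]$.

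Each agent's payment schedule can be chosen separately, for instance by setting $p_a(\theta)=P_a(\theta_a)$ depending only on the agent's own report. The interim maxima are therefore jointly attained, and total expected revenue is $\Lambda(x)$ by the tower property, recovering \cref{eq:finite-revenue-identity}.

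\textbf{Implementability.} If $(x,p)$ is BIC and IIR, the standard argument (adding the two BIC inequalities between any pair of types) forces $Q_i$ nondecreasing and $Y_j$ nonincreasing, so $x\in\mathcal X$. By maximality, $\E[\sum p]\le\Lambda(x)$, so WBB implies $\Lambda(x)\ge0$. Conversely, when $x\in\mathcal X$ and $\Lambda(x)\ge0$, the maximal-revenue payments above are BIC and IIR and have nonnegative expected revenue.

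\textbf{Duality.} Both $\GFT$ and $\Lambda$ are linear in $x$, and $\mathcal X$ is a nonempty compact polytope in a finite-dimensional space. It is nonempty because $x\equiv0$ lies in $\mathcal X$ with $\Lambda=0$. This zero allocation also gives $\SB\ge0$ and feasibility of the program.

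The second-best problem is the linear program $\max\{\GFT(x):x\in\mathcal X,\ \Lambda(x)\ge0\}$, which has a single inequality constraint. By LP strong duality, $\SB=\min_{\alpha\ge0}\max_{x\in\mathcal X}\{\GFT+\alpha\Lambda\}$. Strong duality needs no constraint qualification for finite LPs, since the primal is feasible and bounded; it also follows from Sion's minimax theorem applied to the compact convex set $\mathcal X$. Duality additionally gives attainment of the minimizing multiplier $\alpha^*$.

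\textbf{Two-maximizer structure.} Let $x^*$ be an optimum. By complementary slackness, $x^*$ maximizes $\GFT+\alpha^*\Lambda$ over $\mathcal X$, and $\alpha^*\Lambda(x^*)=0$. The maximizer set at $\alpha^*$ is a face $F$ of $\mathcal X$, and the optimum is $\max\{\GFT(x):x\in F,\ \Lambda(x)\ge0\}$.

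If $\alpha^*=0$, any maximizer with $\Lambda\ge0$ works on its own, since it is a trivial mixture. If $\alpha^*>0$, then $\GFT=\mathrm{const}-\alpha^*\Lambda$ on $F$. Minimizing over $F$ attains $\Lambda=0$, so $F$ contains points $x^-$ with $\Lambda(x^-)\le0$ and $x^+$ with $\Lambda(x^+)\ge0$. The constraint set $F\cap\{\Lambda\ge0\}$ is a polytope whose optimum is attained at a vertex, which lies on an edge of $F$ cut by the hyperplane $\Lambda=0$. That vertex is therefore a convex combination of two vertices of $F$, each a Lagrangian maximizer.

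The main obstacle is the discrete payment-identity step: verifying that the adjacent-constraint payments satisfy all nonadjacent BIC constraints. This is exactly where the monotonicity defining $\mathcal X$ enters. The remaining steps are routine linear-programming arguments.
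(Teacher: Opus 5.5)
Your proposal is correct. The payment-identity and implementability parts follow the paper's argument: adjacent BIC inequalities, the lower adjacent envelope normalized at $U_1=0$ for buyers and $U_m=0$ for sellers, telescoping by monotonicity, and an exchange of summation to obtain $\phi_i$ and $\tau_j$.

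The duality part takes a different route. The paper does not cite LP duality here. It lists the vertices of $\mathcal X$ with gain--budget pairs $(G_r,L_r)$ and writes $V_\alpha=\max_r(G_r+\alpha L_r)$ as a convex piecewise-affine function. The nonnegative eventual slope supplied by no trade gives attainment of the minimum. The paper then reads the two-point structure off the one-sided derivatives at $\alpha^*$: the active slopes straddle zero, and two active vertices with $L_-<0<L_+$ are mixed with explicit weights $L_+/(L_+-L_-)$ and $-L_-/(L_+-L_-)$. Weak duality finishes the proof.

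You instead cite strong duality for the single dualized constraint, which gives attainment of $\alpha^*$ directly. (Sion alone would not give this over the noncompact range of $\alpha$, but your LP citation does.) You then use complementary slackness and a face argument. This is shorter for $\SB=\min_\alpha V_\alpha$. Your complementary-slackness step already shows that every optimum $x^*$ is itself a Lagrangian maximizer at $\alpha^*$, so the literal ``at most two maximizers'' claim is immediate. The face argument is needed only to upgrade this to a mixture of two vertices, such as deterministic rules. The paper's construction delivers that upgrade with explicit weights and no polyhedral facts.

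In your face argument, one case is omitted. An optimal vertex of $F\cap\{\Lambda\ge0\}$ may be a vertex of $F$ with $\Lambda=0$, rather than a point on an edge of $F$ cut by $\Lambda=0$. It is then a single Lagrangian-maximizing vertex, so nothing breaks. Likewise, at $\alpha^*=0$, if you want a vertex rather than $x^*$ itself, maximize $\Lambda$ over $F$.
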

The lemma reduces the second-best problem to optimizing a weighted sum
of gains and revenue and then choosing the multiplier. Write
$\mathcal L_\alpha(x)=\GFT(x)+\alpha\Lambda(x)$ for the
Lagrangian of an allocation; when a local prior is specified, the same
notation uses that prior in both expectations.

The finite-type payment identities and linear-programming argument are proved in
\cref{sec:finite-calculus}.

For a fixed multiplier, set $a_i^\alpha(b)=b+\alpha\phi_i(b)$ and
$d_j^\alpha(s)=s+\alpha\tau_j(s)$. These scores combine actual gains with
incentive payments. They need not increase with the type. To restore
monotonicity, \emph{ironing}   repeatedly merges adjacent blocks whose
 probability-weighted means   violate monotonicity, producing nondecreasing
scores $\bar a_i^\alpha,\bar d_j^\alpha$.
A seller is $\alpha$-weakly regular exactly when $d_j^\alpha$ is already
nondecreasing, so its ironing changes nothing. For each profile let $X^\alpha$ maximize
$\sum_{(i,j)\in M}(\bar a_i^\alpha(v_i)-\bar d_j^\alpha(c_j))$ over
$\mathcal F$, then minimize cardinality and use a fixed lexicographic order
on edge sets. The next lemma shows that this matching rule solves the
Lagrangian problem while   {preserving}  the monotonicity needed for incentives.
\begin{lemma}[Fixed-Multiplier Optimization]\label{lem:exact-ironed-optimizer}
The rule $X^\alpha$ is deterministic and attains $V_\alpha$. With all
other reports fixed, each buyer's trade probability is nondecreasing in her
value and each seller's is nonincreasing in her cost (\emph{pointwise
monotonicity}).
It trades only when $v_i>c_j$.  { For rational inputs and $\alpha$, its  evaluation uses one call to an oracle that maximizes edge weight over $\mathcal F$. }  With opponents fixed, an agent's partner is fixed throughout the
reports at which she trades.
\end{lemma}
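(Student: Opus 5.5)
The plan has three steps: rewrite the Lagrangian as a pointwise score, show that the ironed scores do not lower its value on $\mathcal X$, and then read off the properties of the matching rule.

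\textbf{Step 1: pointwise form of the objective.} By \cref{lem:allocation-characterization},
\[
\mathcal L_\alpha(x)=\E\Bigl[\sum_{(i,j)}\bigl(a_i^\alpha(v_i)-d_j^\alpha(c_j)\bigr)x_{ij}\Bigr],
\]
because $v-c+\alpha(\phi-\tau)$ splits into a buyer part and a seller part. Each buyer part depends on $x$ only through the interim quantity $Q_i$, and each seller part only through $Y_j$.

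\textbf{Step 2: ironing does not lower the value on $\mathcal X$.} I use the standard Myerson summation by parts, type by type.
\begin{itemize}[nosep]
\item For buyer $i$, write $A(\ell)=\sum_{r\le\ell}g_{ir}a_i^\alpha(b_i^r)$ and let $\bar A$ be its convex hull. Then
\[
\sum_\ell g_{i\ell}\,a\,Q=\sum_\ell g_{i\ell}\,\bar a\,Q+\sum_\ell (A-\bar A)(\ell)\bigl(Q(\ell)-Q(\ell+1)\bigr),
\]
with the boundary term vanishing at the top.
\item For every nondecreasing $Q$ the last sum is $\le 0$. It vanishes whenever $Q$ is constant on each ironing interval.
\item The seller side is symmetric, using $Y$ nonincreasing and a concave hull.
\end{itemize}
Hence $\mathcal L_\alpha(x)\le \bar{\mathcal L}_\alpha(x)$ on $\mathcal X$, where $\bar{\mathcal L}_\alpha$ uses the ironed scores. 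Pointwise maximization of $\bar{\mathcal L}_\alpha$ is achieved by $X^\alpha$. It therefore remains to show that $X^\alpha\in\mathcal X$ and that $X^\alpha$ is constant on ironing intervals, so that equality holds.

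\textbf{Step 3: properties of $X^\alpha$.}
\begin{itemize}[nosep]
\item \emph{Constancy on ironing intervals.} The scores $\bar a,\bar d$ are constant on each interval. The tie-breaking (maximize weight, then minimize cardinality, then a fixed lexicographic order on edge sets) depends only on the scores. So the selected matching is identical for types in the same interval.
\item \emph{Pointwise monotonicity.} Raising $\bar a_i$ by $\delta$ raises the weight of each matching that contains $i$ by exactly $\delta$ and leaves other matchings unchanged. So if $i$ was matched, she remains matched. Moreover, the chosen matching is unchanged, since ties among matchings containing $i$ keep the same weight differences, cardinalities and lexicographic order. This also yields the fixed-partner claim.
\item \emph{Sellers.} The symmetric argument gives monotonicity in the cost.
\item \emph{Trade only when $v_i>c_j$.} By minimal cardinality, every chosen edge has $\bar a_i(v_i)>\bar d_j(c_j)$. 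Otherwise deleting it keeps feasibility (downward-closed) and weakly raises the weight.
\item \emph{Relating ironed scores to raw values.} I need $\bar a_i(b)\le b$, or at least that positive ironed weight forces $v>c$. With $\alpha\ge0$ we have $\phi\le b$ pointwise, hence $a\le(1+\alpha)b$. Ironing preserves upper bounds by nondecreasing functions, since the hull of a sequence lying below a nondecreasing sequence stays below it. So $\bar a_i(b)\le(1+\alpha)b$ and, symmetrically, $\bar d_j(s)\ge(1+\alpha)s$. Positive weight then gives $(1+\alpha)(v-c)>0$.
\item \emph{Determinism and oracle use.} Determinism is immediate. The rule can be evaluated with one weighted-matching oracle call by perturbing the weights lexicographically with rational epsilons: subtract a tiny amount per edge, plus tie-break terms. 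These are rational because the inputs are.
\end{itemize}

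The main obstacle is that the raw, unperturbed tie-breaking rule has to be monotone and constant on ironing intervals at the same time. I will handle this by showing that the ordering among matchings containing $i$ is invariant under shifting $\bar a_i$. The single-oracle-call claim needs a careful choice of the perturbation scale.
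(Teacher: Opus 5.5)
Your proposal is correct and follows essentially the same route as the paper. You bound the Lagrangian on $\mathcal X$ by the ironed pointwise objective; your summation by parts is equivalent to the paper's merge-by-merge computation. You then show the canonical rule is monotone with a fixed partner via the shift/intercept argument, constant on ironing blocks, and trading only on positive ironed (hence, by $\bar a\le(1+\alpha)b$ and $\bar d\ge(1+\alpha)s$, positive actual) surplus. Your explicit lexicographic weight perturbation for the single oracle call fills in a detail the paper only asserts.
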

\noindent Ironing upper-bounds every interim-monotone allocation; $X^\alpha$
is constant on every ironing block and attains equality.   {Ironing also gives}
 $\bar a_i^\alpha(b)\le(1+\alpha)b$ and
$\bar d_j^\alpha(s)\ge(1+\alpha)s$, so positive edge weights imply positive
actual surplus. The fixed tie rule makes the best served matching independent
of the varying own coefficient. Details are in \cref{app:optimizer}.

For independent types and unrestricted signed transfers, ex-ante WBB and
profilewise SBB have the same allocation benchmark~\cite{BN09,BCWZ17}.
\Cref{app:sbb-conversion} recalls the conversion; it does not preserve
universal DSIC or ex-post IR.

\paragraph{Continuous Report Intervals.}
  For the continuous MHR analysis we use the analogous bounded-interval
formulation, with monotone  interim trade probabilities  implemented by the
  envelope formulas in \cref{eq:borel-envelope}; general Borel priors are treated
in \cref{app:general-priors}.

\section{The Bilateral-to-Matching Reduction}\label{sec:reduction}
The budget dual reduces our task to preserving a bilateral bound for each
multiplier. We first show how to combine suitable local allocation rules,
then construct these rules from finite bilateral value bounds, and finally
remove the regularity assumption on sellers. The MHR application will use
the same composition step with a direct continuous construction.

\subsection{Composing Local Allocation Rules}\label{sec:cap-composition}
Fix $\alpha\ge0$. Start with a \emph{canonical} first-best matching:
among maximum-GFT matchings, choose one of minimum cardinality and then
use a fixed, report-independent order to break remaining ties. The next
lemma explains why its edges define consistent local problems.
\begin{lemma}[First-Best Thresholds]\label{lem:stable-partner-v2}
Use a first-best matching with minimum cardinality and a fixed final order.
With all other reports fixed, its matching is constant throughout the
reports at which an agent trades (her \emph{winning reports}). For an edge, fix the reports of all agents outside it. The accepted seller
types form an initial set, increasing with the buyer report. Every selected edge
has positive surplus.
\end{lemma}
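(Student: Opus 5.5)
Fix all reports outside a chosen agent (or edge) and regard the canonical first-best rule as a function of the remaining report(s). The argument rests on one exchange inequality. Let $M$ be the selected matching at some profile and $M'$ the selected matching at another profile that differs only in one agent's type. Any competitor's weight changes by the same amount as the agent's own contribution, or not at all. Optimality at each profile then pins down how the selection can move. Throughout I write $w(M)=\sum_{(i,j)\in M}(v_i-c_j)$ and use the tie rule: maximum weight first, then minimum cardinality, then the fixed report-independent order.

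\textbf{Positivity.} Suppose a selected edge had $v_i-c_j\le0$. Deleting it gives a matching that is feasible (downward-closed), has weight at least as large, and has strictly smaller cardinality. This contradicts either maximality or minimal cardinality, so every selected edge has positive surplus.

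\textbf{Constant matching on winning reports.} Fix $\theta_{-a}$ and suppose agent $a$ is a buyer with value $b$; sellers are symmetric with $-c$. Split $\mathcal F$ into matchings containing $a$ and matchings not containing $a$. For $M\ni a$ we have $w_b(M)=b+K(M)$, where $K(M)$ does not depend on $b$. For $M\not\ni a$ the weight $w(M)$ does not depend on $b$ at all. Now compare matchings within each class.
\begin{itemize}
\item Among matchings containing $a$, the ranking by weight, then cardinality, then the fixed order is the same for every $b$, since all weights shift by the common amount $b$. So the best matching containing $a$, call it $M^*_a$, is the same for every $b$.
\item Among matchings not containing $a$, the best one, $M^*_{\neg a}$, is likewise independent of $b$.
\end{itemize}
The selected matching is always either $M^*_a$ or $M^*_{\neg a}$. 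Hence on every report where $a$ trades, the selected matching equals $M^*_a$, and in particular her partner is fixed. Comparing $b+K(M^*_a)$ with $w(M^*_{\neg a})$ under the tie rule shows that the winning set is an up-set in $b$: it is a threshold, with the boundary tie resolved by the report-independent comparison of cardinality and order. For sellers the winning set is a down-set in $c$.

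\textbf{Caps.} Fix an edge $(i,j)$ and all reports outside it, with buyer report $b$ and seller report $c$. Apply the same decomposition to matchings containing the edge $(i,j)$ versus those not containing it. The latter split into matchings using neither $i$ nor $j$, matchings using only $i$, matchings using only $j$, and matchings using both with other partners. Matchings containing $(i,j)$ have weight $b-c+K_0$. The competitors have weights of the following forms:
\begin{itemize}
\item $K_1$ (using neither $i$ nor $j$),
\item $b+K_2$ (using only $i$),
\item $-c+K_3$ (using only $j$),
\item $b-c+K_4$ (using both, with other partners).
\end{itemize}
Edge $(i,j)$ is selected if and only if $b-c+K_0$ beats each of these competitor bests under the tie rule. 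Against the competitor of weight $b-c+K_4$, the comparison is independent of both $b$ and $c$. Against $b+K_2$, the condition is an upper bound on $c$ only. Against $-c+K_3$ and against $K_1$, the condition is of the form $c<b+\text{const}$ or $c\le b+\text{const}$. Each comparison therefore defines an initial set of seller costs. Their intersection is an initial set, and each threshold is nondecreasing in $b$, so the accepted set grows with $b$. Boundary inclusion is fixed by the report-independent tie comparison, consistent with the included-or-excluded boundary convention for caps.

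\textbf{Main obstacle.} The main point to check is that tie-breaking never depends on reports in a way that breaks the threshold structure. The cardinality and fixed order of matchings are report-independent, so every pairwise comparison is lexicographic on (weight difference, fixed data), and this suffices. The selected edge has positive surplus by the positivity argument above.
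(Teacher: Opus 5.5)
Your proof is correct. The positivity and constant-matching steps are the paper's argument. Served matchings all shift by the common amount $b$ (or $-c$), so the best served matching does not depend on the report and crosses the best unserved value at most once. Deleting a nonpositive edge contradicts maximality or minimal cardinality.

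For the cap property you take a more explicit route. You partition $\mathcal F$ into the class containing $(i,j)$ and four competitor classes. You then use, implicitly and by the same common-shift argument, that each class has a best element not depending on the reports, and you intersect the resulting half-line conditions. The paper instead uses a single comparative-statics observation. Suppose a matching containing $(i,j)$ is selected at $(b,c)$. At $b'\ge b$, $c'\le c$ its weight rises by $(b'-b)+(c-c')$, while every competitor rises by at most that amount. Since tie-breaking does not depend on reports, it remains selected. This gives the initial-set and nesting properties at once and handles boundary ties in one stroke. Your version is longer, but it describes the cap endpoint explicitly through finitely many affine matching weights. That is the form the paper later relies on when it asserts measurability of first-best cap endpoints.

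One case in your list is misstated. Against the competitor of weight $-c+K_3$ the cost cancels, so the condition is $b+K_0>K_3$. Under the tie rule it may instead be $b+K_0\ge K_3$. Either way it is a condition on $b$ alone, not one of the form $c<b+\mathrm{const}$. The conclusion survives. Such a condition accepts either every seller cost or none, and once satisfied it stays satisfied as $b$ increases. So it still defines an initial set that is nondecreasing in $b$, but the sentence should be corrected.
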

\begin{proof}
Every matching serving a buyer has weight equal to her report plus a fixed
intercept. The best served matching is therefore fixed; it crosses the best
unserved constant at most once. Sellers have slope $-1$ instead. Lowering a
selected seller cost or raising the selected buyer value preserves the same
matching, proving the stated monotonicity and nesting properties. Deleting a nonpositive-surplus edge
improves the objective or reduces cardinality and is feasible by downward closure.
\end{proof}

The lemma lets us replace each first-best edge by a local trade decision
without changing an agent's candidate partner as she changes a winning
report. Fix an edge and the reports of all agents outside it; denote this
\emph{residual profile} by $\omega$. We suppress the edge and $\omega$
until the market-level sum. A cap $a$ is an initial
set of seller reports, possibly empty. It consists of the costs below a
threshold, with the boundary either included or excluded. Write
$z^a(b,s)\in[0,1]$ for a local trade probability, zero outside $a$, and
$m^a(b)=\E_s[z^a(b,s)]$. This expectation uses the original seller
probabilities: we do not divide by $\Pr[s\in a]$.

We next express the local objective using incentive-compatible payments.
The \emph{buyer envelope} is the minimum utility required by truthfulness
when the lowest buyer type receives zero utility. For a trade-probability
function $m$, write this utility as
\[
 (\mathcal Bm)(b_\ell)=\sum_{t<\ell}(b_{t+1}-b_t)m(b_t)
 \quad\hbox{or}\quad
 (\mathcal Bm)(b)=\int_L^b m(u)\,\dd u,
\]
for a finite support or a full bounded report interval $[L,U]$, respectively.
For fixed $b$, if $z(b,\cdot)$ is nonincreasing, the corresponding
seller receipt (money paid to the seller) is
\[
 P_S[z](b,s_k)=s_kz(b,s_k)+\sum_{h>k}(s_h-s_{h-1})z(b,s_h),
 \quad\hbox{or}\quad
 P_S[z](b,s)=sz(b,s)+\int_s^U z(b,u)\,\dd u.
\]
The seller formula sets the highest seller type's utility to zero.
These payment formulas, which we call \emph{normalized envelope payments},
use the full report domain, including reports where $z$ is zero. At a fixed
buyer report $b$, the expected buyer payment is $bm^a(b)-(\mathcal Bm^a)(b)$.
Thus gains plus $\alpha$ times net revenue equal
\begin{equation}\label{eq:fixed-cap-lagrangian}
 C^a(b)=(1+\alpha)b\,m^a(b)
       -\E_s\bigl[sz^a(b,s)+\alpha P_S[z^a](b,s)\bigr]
       -\alpha(\mathcal Bm^a)(b).
\end{equation}

We will lower-bound this expression separately for each buyer report.
The following theorem shows that these local bounds add across the
first-best matching, provided the rules are cap-monotone.

\begin{theorem}[Edge-by-Edge Composition]\label{thm:edge-composition}
Consider independent scalar priors, either on finite report supports or on
full common bounded report intervals, with downward-closed matching
feasibility. For each edge and residual profile, suppose a family $z^a$
is nondecreasing in $b$, nonincreasing in $s$, and coordinatewise
nondecreasing as the cap expands. Suppose also $C^a(b)\ge T^a(b)$ for
specified targets, with the empty-cap allocation and target zero.
In the interval model require joint Borel measurability (so the expectations are defined) in residual
reports, the threshold and whether it is included, and the buyer and seller
reports.
If $a_{e,\omega}(b)$ is the first-best cap, then
\begin{equation}\label{eq:composition-bound}
 V_\alpha(\mathcal I)\ge
 \sum_{e=(i,j)}\E_{\omega,b_i}\bigl[T^{a_{e,\omega}(b_i)}_{e,\omega}(b_i)\bigr].
\end{equation}
In particular, targets $T^a(b)=\beta\E_s[(b-s)_+\1\{s\in a\}]$
give $V_\alpha\ge\beta\FB$ with no loss in $\beta$.
\end{theorem}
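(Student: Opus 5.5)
Given the cap-monotone families, we build a global allocation, show it is BIC-implementable with the normalized envelope payments, and evaluate its Lagrangian as a sum of the local quantities $C^a(b)$.

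\textbf{Global allocation.} At each full profile, compute the canonical first-best matching of \cref{lem:stable-partner-v2}. For each selected edge $e=(i,j)$ with residual profile $\omega$, trade with probability $z^{a_{e,\omega}(b_i)}_{e,\omega}(b_i,s_j)$; otherwise no trade. Since $z^a$ vanishes outside $a$, we may write this as $x_{ij}=z^{a_{e,\omega}(b_i)}_{e,\omega}(b_i,s_j)$ at every profile. The realized trades are a subset of a feasible matching, so they are feasible by downward closure. Each agent lies on at most one first-best edge, so $q_i$ and $y_j$ are single terms.

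\textbf{Monotonicity.} Fix buyer $i$ and all other reports. By \cref{lem:stable-partner-v2}, her first-best partner $j$ is constant on her winning reports, and the caps $a_{e,\omega}(b)$ expand with $b$. Hence $b\mapsto z^{a(b)}(b,s_j)$ is nondecreasing: the cap grows, $z^a$ is nondecreasing in the cap, and $z^a$ is nondecreasing in $b$. Below her winning reports it equals zero. For seller $j$, the first-best edge is constant on winning reports and the cap does not depend on $s_j$, so nonincreasing $z^a$ in $s$ gives nonincreasing $y_j$. Pointwise monotonicity implies interim monotonicity, so $x\in\mathcal X$. By \cref{lem:allocation-characterization} (or its interval analogue), $V_\alpha\ge\GFT(x)+\alpha\Lambda(x)$, where $\Lambda$ is achieved by envelope payments with lowest-buyer and highest-seller utilities zero.

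\textbf{Decomposition and the main obstacle.} Write $\GFT+\alpha\Lambda$ as a sum over buyers and sellers. For the seller of edge $e$, conditioning on $\omega$ and $b_i$, her allocation is $z^{a}(b_i,\cdot)$ over her entire report domain. Her expected receipt is therefore $\E_s P_S[z^a](b_i,s)$, which matches the corresponding term of \eqref{eq:fixed-cap-lagrangian}.

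The hard part is the buyer rent. With $\omega$ fixed, the buyer's global trade probability along her reports is $\tilde m(b)=m^{a(b)}(b)$, not $m^{a}(\cdot)$ for a fixed cap. Her true envelope rent $(\mathcal B\tilde m)(b)$ uses smaller caps at lower reports. Since $a(t)\subseteq a(b)$ for $t\le b$ and the family is cap-monotone, $\tilde m(t)=m^{a(t)}(t)\le m^{a(b)}(t)$, so $(\mathcal B\tilde m)(b)\le(\mathcal B m^{a(b)})(b)$.

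We also need care about which reports share the same edge. Buyer rent is computed over all her reports with other agents' reports fixed. Reports at which she is unmatched, or matched on a different edge, must contribute zero. By the stable-partner property there is a single edge for all winning reports, and nonwinning reports are those below a threshold with empty cap, where $m=0$. So the envelope integral runs only over reports with the fixed partner, with the residual profile of that edge. One subtlety is that $\omega$ for edge $e$ must be defined consistently as $b_i$ varies, which holds because $\omega$ excludes $i$ and $j$.

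Thus the Lagrangian contribution of edge $e$ conditioned on $(\omega,b_i)$ is at least $C^{a(b_i)}(b_i)\ge T^{a(b_i)}(b_i)$. Taking expectations and summing over edges gives \eqref{eq:composition-bound}; measurability in the interval model is exactly the stated joint Borel hypothesis.

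\textbf{Specialization.} For the last claim, substitute $T^a(b)=\beta\E_s[(b-s)_+\1\{s\in a\}]$. At a fixed edge and $\omega$, the first-best cap $a(b_i)$ is exactly the set of seller costs at which the edge is chosen, and chosen edges have positive surplus. So $\sum_e\E[T]=\beta\,\E[\sum_{e\in \text{FB}}(v_i-c_j)]=\beta\FB$.
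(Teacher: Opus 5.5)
Your proposal is correct and follows the paper's proof essentially step for step: the same allocation $x_e=z^{a_{e,\omega}(b)}(b,s)$ on canonical first-best edges, the same monotonicity argument via \cref{lem:stable-partner-v2}, seller terms matching the fixed cap $a(b)$, and the same buyer-rent comparison $\mathcal B\widetilde m\le\mathcal B m^{a(b)}$ from nested caps, which is exactly the nonnegative correction in \cref{eq:unified-rent-correction}. One small wording point: the per-edge split of each agent's envelope payment is simply linearity of the envelope operators (the paper's ``envelope payments add over edges''), and at a nonwinning buyer report the cap need not be empty, only $s_j\notin a(w)$, which the averaged quantity $\widetilde m$ already accounts for.
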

\begin{proof}
Set $x_e(b,s,\omega)=z^{a_{e,\omega}(b)}(b,s)$ and   {keep}  each edge of the canonical
first-best matching with this probability. This is feasible by
downward closure. \Cref{lem:stable-partner-v2} and the two monotonicities
give monotone trade probabilities in every own report, hence the normalized envelope payments
implement BIC and IIR. Fix the edge and residual profile and write
$\widetilde m(w)=m^{a(w)}(w)$. At current buyer report $b$, the seller
receipt and cost terms equal those for the fixed cap $a(b)$. The actual
conditional Lagrangian is therefore
\begin{equation}\label{eq:unified-rent-correction}
 C^{a(b)}(b)+\alpha\mathcal B
       \bigl[m^{a(b)}(\cdot)-\widetilde m(\cdot)\bigr](b)
 \ \ge\ C^{a(b)}(b).
\end{equation}
Indeed, $a(w)\subseteq a(b)$ for $w<b$, and $\mathcal B$ has nonnegative
weights. Envelope payments add over edges. Independence leaves the endpoint
priors unchanged after residual reports are fixed; averaging and summing
prove \cref{eq:composition-bound}. Boundedness justifies all integrals.
For the stated surplus targets, each selected edge has positive surplus,
so their sum is exactly $\beta\FB$.
\end{proof}
This theorem composes local allocation rules satisfying the stated typewise
Lagrangian bounds; it does not compose arbitrary bilateral mechanisms. The next
step shows when finite value inequalities alone force such rules;
\cref{sec:mhr-overview} instead constructs them analytically. Passing from
fixed multipliers to an expected-budget guarantee uses finite duality or,
for bounded intervals, the separation argument in \cref{mhr:sec:cap-composition}.

\subsection{Constructing the Local Rules}\label{sec:cap-monotonicity}\label{app:finite-compatibility}\label{app:automatic-proof}
The composition theorem needs one family of rules whose probabilities
increase as the seller cap expands. We now show that, for finite
types, bilateral value bounds alone guarantee such a family.
Fix $\alpha$, a buyer support $b_1<\cdots<b_n$, and a seller with costs
$s_1<\cdots<s_m$ and masses $f_k>0$. Put $F_0=0$, $F_r=\sum_{k\le r}f_k$,
$\Delta_t=b_{t+1}-b_t$, and $d_k=s_k+\alpha\tau(s_k)$. In this subsection
$d_k$ is nondecreasing. Let $S^{\le r}$ be the seller distribution conditioned on the first $r$
types, so its probabilities are $f_k/F_r$ for $k\le r$, and
$\mathcal X_r$ the arrays in $[0,1]^{r\times n}$ nonincreasing in seller
index and nondecreasing in buyer index, extended by zero outside the cap.
For $z\in\mathcal X_r$ write
\begin{equation}\label{eq:type-contribution}
 \mathcal D_\ell^r(z)=\sum_{k\le r}f_k((1+\alpha)b_\ell-d_k)z_{k\ell}
       -\alpha\sum_{t<\ell}\Delta_t\sum_{k\le r}f_kz_{kt}.
\end{equation}
Expanding the buyer envelope gives, for every positive buyer probability
vector $g$,
\begin{equation}\label{eq:local-envelope}
 \sum_\ell g_\ell\mathcal D_\ell^r(z)
       =F_r\mathcal L_\alpha(z;g,S^{\le r}).
\end{equation}
Here $\mathcal D_\ell^r$ is the fixed-cap objective
$C^a(b_\ell)$ from \cref{eq:fixed-cap-lagrangian}, written with discrete
virtual costs. Conditioning on an initial set leaves the   {included}  virtual
costs unchanged, since the cumulative probabilities and point masses are
scaled by the same factor. Multiplication by $F_r$ returns the objective
to the original probability scale.

The next theorem exchanges the order of two choices: instead of selecting
a rule for each buyer prior, we obtain one rule that meets a target for
every buyer type. Moreover, the rules can be chosen consistently across
all seller caps. The numbers $C_\ell^r$ below are target lower bounds,
not the local objectives $C^a(b)$.

\begin{theorem}[Bilateral Bounds and Cap Monotonicity]\label{thm:compatible-targets}
Let $C_\ell^0=0$ and $0\le C_\ell^r\le C_\ell^{r+1}$ for every $\ell,r$.
The following statements are equivalent:
\begin{enumerate}[label=(\roman*),leftmargin=2em,itemsep=1pt]
\item For every nonempty cap and every positive probability vector $g$,
$F_rV_\alpha(g,S^{\le r})\ge\sum_\ell g_\ell C_\ell^r$.
\item There are $z^r\in\mathcal X_r$ with
$\mathcal D_\ell^r(z^r)\ge C_\ell^r$ for all $r,\ell$ and
$z^r\le z^{r'}$ coordinatewise whenever $r\le r'$.
\end{enumerate}
For rational data and targets, such a family is constructible in polynomial
time in the explicit local input and output size.
\end{theorem}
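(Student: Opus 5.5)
The direction (ii)$\Rightarrow$(i) is immediate. Given $z^r$ with $\mathcal D_\ell^r(z^r)\ge C_\ell^r$ for all $\ell$, multiply by $g_\ell$ and sum. By \cref{eq:local-envelope} this gives $F_r\mathcal L_\alpha(z^r;g,S^{\le r})\ge\sum_\ell g_\ell C_\ell^r$. Since $z^r\in\mathcal X_r$ is feasible and monotone, the bilateral Lagrangian satisfies $V_\alpha(g,S^{\le r})\ge\mathcal L_\alpha(z^r;g,S^{\le r})$.

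For (i)$\Rightarrow$(ii) I would proceed in two stages: first obtain, for each fixed cap $r$, a single typewise rule, and then make these rules nested.

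\textbf{Stage 1 (one cap).} The set $\mathcal X_r$ is a compact convex polytope, and $(z,g)\mapsto\sum_\ell g_\ell(\mathcal D_\ell^r(z)-C_\ell^r)$ is bilinear. By (i) and \cref{eq:local-envelope}, for every $g$ in the interior of the simplex we have $\max_z\sum_\ell g_\ell(\mathcal D^r_\ell(z)-C^r_\ell)\ge0$. The left side is a maximum of linear functions in $g$ over a fixed polytope, hence continuous, so the inequality extends to the closed simplex. The minimax theorem then gives some $z\in\mathcal X_r$ with $\min_g\ge0$, i.e.\ $\mathcal D_\ell^r(z)\ge C_\ell^r$ for every $\ell$. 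Call $\mathcal Z_r=\{z\in\mathcal X_r:\mathcal D^r_\ell(z)\ge C^r_\ell\ \forall\ell\}$; it is a nonempty polytope.

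\textbf{Stage 2 (nesting).} This is the main obstacle, and I would follow the overview's minimal-mass idea. Because $d_k$ is nondecreasing, cheaper seller types are the most valuable, both for the gain term $(1+\alpha)b_\ell-d_k$ and because rent depends only on the mass $M_t=\sum_k f_kz_{kt}$. This suggests restricting to \emph{greedy} columns: for buyer type $t$, fill seller types $k=1,2,\dots$ fully up to total mass $M_t$. Such columns are nonincreasing in $k$. With this restriction, $\mathcal D^r_\ell$ depends on the masses $(M_1,\dots,M_n)$ only, through $\Phi^r(M_\ell)-\alpha\sum_{t<\ell}\Delta_tM_t$, where $\Phi^r(M)$ is the concave gain of the greedy fill of mass $M\le F_r$. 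Replacing any feasible column by the greedy column of the same mass weakly raises $\mathcal D^r_\ell$, so greedy solutions exist.

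I would then define the masses recursively in $\ell$: $M^r_\ell$ is the smallest $M\ge M^r_{\ell-1}$ with $M\le F_r$ and $\Phi^r(M)\ge C^r_\ell+\alpha\sum_{t<\ell}\Delta_tM^r_t$. Taking the smallest value is optimal because it minimizes the rents imposed on later types. A standard exchange argument shows feasibility whenever $\mathcal Z_r\neq\emptyset$: by induction, the minimal masses are pointwise below those of any greedy feasible solution, so later constraints remain satisfiable. Monotonicity in $\ell$ holds by construction.

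For nesting across caps, note that $\Phi^r$ and $\Phi^{r'}$ agree on $[0,F_r]$, since the greedy fill uses the cheapest types first. Suppose $r\le r'$ and, for the first $\ell$, we had $M^{r'}_\ell<M^r_\ell$. Earlier masses satisfy $M^{r'}_t\ge M^r_t$, so the rent at cap $r'$ is at least the rent at cap $r$, and $C^{r'}_\ell\ge C^r_\ell$. Hence $M^{r'}_\ell$ would already meet the cap-$r$ requirement, contradicting the minimality of $M^r_\ell$. It remains to handle the case where the smaller value $M^{r'}_\ell$ fails the floor $M\ge M^r_{\ell-1}$; this is ruled out because $M^{r'}_\ell\ge M^{r'}_{\ell-1}\ge M^r_{\ell-1}$. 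Since greedy columns are monotone in mass, we get $M^r_\ell\le M^{r'}_\ell$, and therefore $z^r\le z^{r'}$ coordinatewise.

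\textbf{Computation.} Each $M^r_\ell$ is found by inverting a piecewise-linear concave function at a breakpoint segment, which is rational arithmetic in polynomial size. Stage 1 is needed only to certify feasibility; the recursion itself produces the family.
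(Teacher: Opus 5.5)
Your proposal is correct and follows essentially the paper's proof: a per-cap minimax reference array, cheapest-first filling, the minimal-mass recursion with the reference-mass induction for existence, and the nesting-by-minimality contradiction, including the floor check $M^{r'}_\ell\ge M^{r'}_{\ell-1}\ge M^r_{\ell-1}$. One step needs an explicit citation. \cref{eq:local-envelope} is only an identity on $\mathcal X_r$, so deducing $\max_{z\in\mathcal X_r}\sum_\ell g_\ell\mathcal D^r_\ell(z)\ge\sum_\ell g_\ell C^r_\ell$ from (i) also requires that the bilateral $V_\alpha$ is attained by a pointwise-monotone rule, which is what the paper invokes \cref{lem:exact-ironed-optimizer} for.
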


\begin{proof}[Proof of \cref{thm:compatible-targets}]
At a fixed cap, \cref{lem:exact-ironed-optimizer} supplies a pointwise
bilateral optimizer. Let $\Delta_n$ denote the probability simplex on the
$n$ buyer types, including zero probabilities. Thus (i), continuity at zero
buyer masses, and the finite minimax theorem imply
\[
 0\le\min_{g\in\Delta_n}\max_{z\in\mathcal X_r}
       \sum_\ell g_\ell(\mathcal D_\ell^r(z)-C_\ell^r)
 =\max_{z\in\mathcal X_r}\min_\ell(\mathcal D_\ell^r(z)-C_\ell^r).
\]
This gives one reference array satisfying every type inequality for that cap.
The minimax domains are compact and convex, and the displayed expression is
bilinear; no boundary prior is assigned a virtual value by division by zero.

For mass $\mu\in[0,F_r]$, fill the cheapest seller types:
\[
 T_k^r(\mu)=\min\{1,\max\{0,(\mu-F_{k-1})/f_k\}\}\quad(k\le r),
 \qquad T_k^r(\mu)=0\quad(k>r).
\]
Let $A_\ell^r(\mu)=\sum_{k\le r}f_k((1+\alpha)b_\ell-d_k)T_k^r(\mu)$.
Since $d_k$ increases, this column maximizes the current contribution at its
prescribed mass, even if some coefficients are negative. Starting from
$\mu_0^r=0$, choose
\begin{equation}\label{eq:greedy-mass}
 \mu_\ell^r=\min\left\{\mu\in[\mu_{\ell-1}^r,F_r]:
 A_\ell^r(\mu)\ge C_\ell^r+\alpha\sum_{t<\ell}\Delta_t\mu_t^r\right\}.
\end{equation}
Inductively the earlier chosen masses do not exceed the reference masses.
The reference's current mass is at least the lower endpoint, satisfies the
target and the larger reference rents, and benefits from cheapest-first
rearrangement. It is a feasible test mass, proving that the minimum exists.

Compare $r<r'$ by induction on $\ell$. If $\mu_\ell^{r'}<\mu_\ell^r$, that
mass lies below $F_r$, where both cheapest-first objectives agree. The
target for the larger cap and its earlier rents are no smaller. Moreover
$\mu_\ell^{r'}\ge\mu_{\ell-1}^{r'}\ge\mu_{\ell-1}^r$. Hence that mass would
satisfy the test for the smaller cap, contradicting minimality. The threshold columns
$z_{k\ell}^r=T_k^r(\mu_\ell^r)$ give both type monotonicity and cap-monotonicity.
They satisfy the inequalities by construction. Conversely (ii) implies (i)
by \cref{eq:local-envelope}. Linear programming finds reference arrays;
each first crossing scans at most $r$ affine pieces. Sequential rational
operations have polynomial bit length in the finite local instance.
\end{proof}

\Cref{cor:local-coefficient} shows that the largest uniform factor over these
buyer distributions and seller caps is attained and computable by a
polynomial-size LP. This LP does not compute a whole market's second best
or characterize every optimal matching mechanism. When the separate caps
admit different bounds, \cref{ct:app:targets} gives stronger cumulative
targets and the corresponding market-level guarantee.

\paragraph{Applying the Composition Theorem.}
The constructed family now meets all the monotonicity requirements of
\cref{thm:edge-composition}. To identify its objective, recall that for finite seller priors, the normalized receipt of a monotone column $z$ is
$s_kz_k+\sum_{h>k}(s_h-s_{h-1})z_h$. Its expectation is
$\sum_k f_k\tau(s_k)z_k$. Consequently \cref{eq:type-contribution} is
exactly the typewise Lagrangian expression in \cref{eq:fixed-cap-lagrangian}. Use
\cref{thm:edge-composition} with the family from
\cref{thm:compatible-targets}. Writing $r_\ell$ for the number of types in the first-best cap and
$\mu_t^r=\sum_kf_{jk}z_{kt}^r$, its only correction is
\begin{equation}\label{eq:rent-correction}
 \alpha\sum_{t<\ell}(b_i^{t+1}-b_i^t)
 (\mu_t^{r_\ell}-\mu_t^{r_t})\ge0.
\end{equation}
The general target conclusion is
\begin{equation}\label{eq:cap-target-bound}
 V_\alpha(\mathcal I)\ge
 \sum_{e=(i,j)}\E_\omega\Bigl[\sum_\ell g_{i\ell}C_{e,\ell}^{r_{e,\omega}(\ell)}\Bigr].
\end{equation}
At targets $C_{e,\ell}^r=\beta\sum_{k\le r}f_{jk}(b_i^\ell-s_j^k)_+$,
this is $\beta\FB$. No additional cap-probability factor is introduced.

\subsection{Regularizing Sellers Before Choosing Caps}
\label{sec:global-ironing-v2}
The local construction assumed nondecreasing seller scores. To handle
arbitrary sellers, we transform each full seller distribution once, before
conditioning on any initial set of costs. This keeps the scores consistent
across caps and preserves the probabilities of the ordered types.
The next proposition states the two comparisons needed for the reduction.

\begin{proposition}[Seller Regularization]\label{prop:regularization-main}\label{prop:global-ironing-package}
At each fixed $\alpha$, each finite seller prior admits an
$\alpha$-weakly regular replacement with the same masses and number of
ordered types. Applying these replacements independently gives a market
$\widehat{\mathcal I}^{\alpha}$ satisfying
\begin{equation}\label{eq:regularization-summary}
 V_\alpha(\mathcal I)=V_\alpha(\widehat{\mathcal I}^{\alpha}),\qquad
 \FB(\mathcal I)\le\FB(\widehat{\mathcal I}^{\alpha}).
\end{equation}
The replacement depends only on the full seller prior and $\alpha$.
Under the coupling by type indices, its combined seller scores $s+\alpha\tau(s)$
are the original ironed scores. The canonical maximizing matching
is therefore the same at every index profile, for every buyer prior and
compatibility graph with the stated feasibility constraints.
\end{proposition}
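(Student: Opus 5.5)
The plan is to construct the replacement seller explicitly from the ironed scores, index by index, and then check the two comparisons in \cref{eq:regularization-summary}.

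\emph{Construction.} Fix a seller with costs $s_1<\cdots<s_m$, masses $f_k$, cumulative masses $F_k$, and ironed combined scores $\bar d_1\le\cdots\le\bar d_m$. I look for costs $\hat s_1,\dots,\hat s_m$ with the same masses whose combined scores $\hat d_k=\hat s_k+\alpha\hat\tau(\hat s_k)$ equal $\bar d_k$. By \cref{eq:finite-virtual-scores}, this is the triangular system
\[
 \hat d_1=(1+\alpha)\hat s_1,\qquad
 \hat d_k=(1+\alpha)\hat s_k+\alpha(\hat s_k-\hat s_{k-1})F_{k-1}/f_k .
\]
It is solved forward, so $\hat s_k$ depends only on $\bar d_k$ and $\hat s_{k-1}$. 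The solution satisfies
\[
 \hat s_k-\hat s_{k-1}=\frac{\bar d_k-(1+\alpha)\hat s_{k-1}}{1+\alpha+\alpha F_{k-1}/f_k}.
\]
For $k-1\ge2$, the displayed formula for $\hat d_{k-1}$ gives $\hat d_{k-1}>(1+\alpha)\hat s_{k-1}$ strictly once the earlier gap is positive. Together with $\bar d_k\ge\bar d_{k-1}$, this propagates strict increase. The replacement then depends only on the full prior and $\alpha$, and $\hat d_k=\bar d_k$ is nondecreasing, so the seller is $\alpha$-weakly regular.

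\emph{Equality of $V_\alpha$ and the matching.} Couple the two markets by type indices. By \cref{lem:allocation-characterization} and \cref{lem:exact-ironed-optimizer}, $V_\alpha$ is the maximum over $\mathcal X$ of $\E\bigl[\sum_{(i,j)}(\bar a^\alpha_i-\bar d^\alpha_j)x_{ij}\bigr]$. This expression depends on a seller only through her masses and ironed scores. Ironing the replacement changes nothing, because its scores are already monotone and equal $\bar d$. The monotonicity constraints defining $\mathcal X$ concern only index order. So the two optimization problems coincide under the coupling, and $X^\alpha$, being defined through these scores with a fixed tie rule, is the same index-matching for every buyer prior, graph and downward-closed $\mathcal F$.

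\emph{First best.} I would use that ironing replaces the cumulative function $k\mapsto\sum_{h\le k}f_hd_h$ by its greatest convex minorant. Since $\sum_{h\le k}f_h\tau(s_h)=F_ks_k$, this gives, for every $k$,
\[
 \sum_{h\le k}f_h\hat s_h+\alpha F_k\hat s_k\le\sum_{h\le k}f_hs_h+\alpha F_ks_k ,
\]
with equality at the ends of ironing blocks. I would first try to upgrade this to pointwise $\hat s_k\le s_k$ by induction on $k$. The induction would run within ironing blocks, using that on a block the original $d$ exceeds its average at the start and equality holds at the block's right end. If that works, then $\FB$, which is nonincreasing in each cost under the index coupling, weakly increases. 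If pointwise domination fails, the fallback is to hold all other reports fixed. As a function of one seller's cost, first best is convex and nonincreasing: it is a maximum of affine functions with slope $0$ or $-1$. A convex-order comparison then suffices, and it would follow from the partial-sum inequalities via Abel summation.

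\emph{Main obstacle.} The delicate point is a nontrivial ironing block at the bottom. There $\bar d_1=\bar d_2$ forces $\hat s_2=\hat s_1$ in the recursion above, so the ordered types are no longer strict. I would handle this either by allowing coincident support points that carry distinct labels, which the index coupling tolerates, or by a vanishing perturbation of $\bar d$ with continuity of $V_\alpha$ and $\FB$ in the costs. The pointwise cost comparison is the other step I expect to require real work.
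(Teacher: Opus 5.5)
Your construction is the paper's recursion $\widehat C_k+\alpha F_k\widehat s_k=\overline\Phi(F_k)$ in differenced form. Here $\overline\Phi$ is the greatest convex minorant of the points $(F_k,\Phi_k)$, with $\Phi_k=C_k+\alpha F_ks_k$ and $C_k=\sum_{h\le k}f_hs_h$. Your argument for equality of $V_\alpha$ and of the canonical matching is also the paper's. The problem is your ``main obstacle'': it never occurs. Neither of your workarounds would prove the statement as written anyway. Coincident support points break the strictly ordered support with the same number of types, and a perturbation gives only approximate equality of $V_\alpha$ and of the scores. The missing observation is that the lowest type is never pooled. We have $d_1=(1+\alpha)s_1$, while $\tau(s)\ge s$ gives $d_h\ge(1+\alpha)s_h>(1+\alpha)s_1$ for every $h\ge2$. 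So every weighted mean of later scores strictly exceeds $d_1$. Hence $\bar d_1=d_1<\bar d_2$, i.e.\ $\widehat s_1=s_1<\widehat s_2$, which is exactly the base case your propagation step needs.

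For first best, drop the pointwise route: $\widehat s_k\le s_k$ is false whenever some ironing block is nontrivial. Write $D_k=C_k-\widehat C_k$ and $g_k=\Phi_k-\overline\Phi(F_k)$. Your displayed inequality is $g_k\ge0$. Solving it for $D_k$ gives the positive recursion $D_k=(f_kg_k+\alpha F_kD_{k-1})/(f_k+\alpha F_k)$, so $D_k\ge0$. Moreover $D_b>0$ at the right end $b$ of a nontrivial block, since some $g_k>0$ inside the block and $\alpha F_k>0$ propagates positivity. Your own equality $g_b=0$ then reads $\alpha F_b(\widehat s_b-s_b)=D_b>0$. For instance, take $\alpha=1$, costs $0,1,2$, and masses $1/2,1/10,2/5$. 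The scores $0,7,11/2$ iron to $0,29/5,29/5$, and $\widehat s_3=493/245>2$.

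Your fallback is the paper's argument, but it needs the cost partial-sum inequalities $D_k\ge0$, not the $\Phi$-inequalities you displayed. The recursion above supplies that conversion. Abel summation then gives $\E[(\kappa-\widehat S)_+]\ge\E[(\kappa-S)_+]$ for every $\kappa$; alternatively use the identity $\int_0^1(\kappa-a(u))_+\dd u=\max_q\{\kappa q-\int_0^qa(u)\dd u\}$ for quantile functions. Conditional first best is $A+(\kappa-c)_+$ in one seller's cost. So replacing sellers one at a time under independence proves $\FB(\mathcal I)\le\FB(\widehat{\mathcal I}^{\alpha})$.
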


The construction and its proof are given in \cref{app:regularization-proof}.
Its key idea is to iron the full seller score sequence and then reconstruct
actual costs whose combined scores equal those ironed scores. The
reconstructed costs are strictly increasing and   {preserve}  the original masses.
Thus the two markets have the same optimized ironed objective at every
profile of type indices, which preserves $V_\alpha$.

To compare first-best gains, fix every report except one seller's cost $c$.
If no feasible matching uses that seller, replacing her prior has no effect.
Otherwise, the best matching without that seller has some value $A$, and the best
matching using her has value $B-c$. Their maximum is
$A+(B-A-c)_+$. The reconstructed prior weakly increases the expectation of
$(\kappa-c)_+$ for every threshold $\kappa$, so replacing sellers one at a
time can only increase $\FB$. The transformation may move the   {support
points}; this is why restricted applications must allow those   {support points}
 to change, as support-size bounds do.

The reduction now follows by applying the local bounds in the transformed
market and returning through equality of optimized values. No allocation is
reinterpreted at the original seller costs.

\begin{proof}[Proof of \cref{thm:bilateral-to-matching}]
Apply \cref{prop:regularization-main}. Every normalized prefix of a transformed
seller is weakly regular, so the uniform premise supplies all local
inequalities for all buyer distributions on the fixed support. Use \cref{thm:compatible-targets,eq:cap-target-bound}
in the transformed market and then \cref{eq:regularization-summary}.
The original market's own optimizer from \cref{lem:exact-ironed-optimizer}
attains the bound using a deterministic allocation with positive-surplus
trades. Markets with no positive surplus use no trade.
\end{proof}

This completes the finite reduction. We next turn to the MHR application,
where a direct continuous construction supplies the local rules for
the same composition theorem.

\section{Sharp Bounds for MHR Buyers}\label{sec:mhr-overview}\label{sec:sharp-mhr}
We first apply the composition theorem to buyers with monotone hazard rate
(MHR). We construct bilateral rules directly on continuous report intervals,
then prove the same sharp efficiency bound for bilateral trade and matching
markets.

For the   MHR priors considered here, the continuous part has density $f$
and distribution function $F$ with nondecreasing hazard
$f(b)/(1-F(b))$; an atom at the upper endpoint is allowed.  To state the exact constant,
for $0<t<1$ define
\begin{equation}\label{mhr:eq:M}
 M_t(z)=t\int_0^1 u^{t-1}e^{zu}\,\dd u
       =\sum_{n=0}^\infty\frac{t}{t+n}\frac{z^n}{n!}.
\end{equation}
Let $z_t$ be the unique root in $(1-t,2(1-t))$ of
\begin{equation}\label{mhr:eq:balance-root}
 (z_t+2t-1)M_t(z_t)=te^{z_t},
\end{equation}
and set
\begin{equation}\label{mhr:eq:sharp-curve}
 c_0=1,\qquad c_\alpha=M_t(z_t)^{-1},\quad t=(1+\alpha)^{-1}
 \quad(\alpha>0).
\end{equation}
Thus computing the bound for a given multiplier requires solving one scalar
equation. We prove below that its root exists and is unique.

\begin{theorem}[Sharp MHR Bounds]\label{thm:sharp-mhr}
For every $\alpha\ge0$, the infimum of $V_\alpha/\FB$ over independent
bounded bilateral priors with an MHR buyer and an arbitrary seller equals
$c_\alpha$. The same infimum holds over bounded downward-closed matching
markets with all buyers MHR and arbitrary sellers. In both classes,
\begin{equation}\label{mhr:eq:gamma}
 \inf_{\FB>0}\frac{\SB}{\FB}
 =\gamma_{\rm MHR}:=\min_{0<t<1}\frac{1}{M_t(z_t)}
 \simeq 0.72490721.
\end{equation}
The minimum is attained in the interior. Requiring bounded absolutely
continuous buyers leaves all these infima unchanged. The lower bounds extend
to a common finite lower bound and finite expected first best.
\end{theorem}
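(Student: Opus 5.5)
Prove the lower bound first and the upper bound second, at each fixed $\alpha$; then pass to the budget guarantee. The lower bound goes through \cref{thm:edge-composition} in the interval model. The upper bound is a bilateral family.

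\emph{Reflection.} Use the reflection described in the overview: with common upper bound $U$, send buyer value $b\mapsto U-b$ and seller cost $c\mapsto U-c$. Gains are unchanged, and the MHR buyer becomes a seller whose reflected distribution has nondecreasing hazard from below. This lets us reason about a cap on the arbitrary side while the structured prior sits on the other side. To apply \cref{thm:edge-composition} we need, for each buyer report $b$ and each seller cap $a$ (threshold $\kappa$), a local rule $z^a(b,s)$ with
\[
 C^a(b)\ge c_\alpha\,\E_s[(b-s)_+\1\{s\in a\}].
\]
Here the rule must be monotone in $b$, $s$ and $a$. Because the seller is arbitrary, first regularize each seller once, in the continuous analogue of \cref{prop:regularization-main}. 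The key simplification is that, at a fixed buyer report, the seller side enters only through the cheapest-first mass curve, exactly as in $A_\ell^r(\mu)$ of \cref{thm:compatible-targets}.

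\emph{Local rule.} Take threshold rules: buyer report $b$ trades with seller costs below a cutoff $\sigma(b)\le\kappa$. Choose $\sigma$ by a weighted-average identity that equates the marginal gain $(1+\alpha)b-d(s)$ against the accumulated buyer rent $\alpha\int_L^b m$. Substituting $t=1/(1+\alpha)$ should reduce the extremal case to an exponential buyer. For an exponential tail the buyer envelope integrates to $\int_0^1 u^{t-1}e^{zu}\,du$, which is where $M_t$ enters. The worst seller should be a two-point or power-law seller ($F(s)\propto s^{1/t}$-type), and the balance root \cref{mhr:eq:balance-root} is then the first-order condition for the cutoff.

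\emph{Main obstacle: the typewise inequality for general MHR buyers.} Compare the MHR buyer's survival function with the exponential having matching parameters at the cutoff. MHR means $-\log(1-F)$ is convex, so the two CDFs cross at most once. A single-crossing argument then shows the rent term is dominated by the exponential case, while the gain term is at least as large. This yields the inequality with constant $1/M_t(z_t)$. Cap-monotonicity follows because enlarging $\kappa$ only raises the target, so the minimal cutoff is nondecreasing, mirroring the minimality argument in \cref{thm:compatible-targets}. Measurability is routine because cutoffs are defined by monotone first-crossing equations. I would first verify existence and uniqueness of $z_t\in(1-t,2(1-t))$ by a sign check at the endpoints and monotonicity of the ratio $te^{z}/M_t(z)$.

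\emph{Tightness and the budget guarantee.} For tightness, take a bilateral instance with an exponential buyer, truncated at a large $U$ with an atom at the top, and the extremal seller. Compute $V_\alpha$ via \cref{lem:exact-ironed-optimizer} (ironed virtual scores are explicit) and show $V_\alpha/\FB\to c_\alpha$. For the budget bound, since $V_\alpha\ge c_\alpha\FB$ for all $\alpha$, the interval separation argument gives $\SB\ge\inf_\alpha c_\alpha\FB$. The same extremal family at the minimizing $t$ gives $\SB/\FB\to\gamma_{\rm MHR}$. An interior minimizer exists because $c_\alpha\to1$ at both ends, which can be checked from the limits of $M_t(z_t)$ as $t\to0,1$. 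For absolutely continuous buyers, smear the atom; the extension to unbounded types is by clipping.
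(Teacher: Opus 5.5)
\textbf{The composition step is oriented the wrong way, and as you set it up the central inequality fails.}

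In \cref{thm:edge-composition}, the local objective $C^a(b)$ of \cref{eq:fixed-cap-lagrangian} depends on the buyer only through her report $b$ and the report interval (via the envelope $\int_L^b m^a$). Her probabilities never enter. A bound that holds for \emph{every} buyer report therefore cannot use the MHR property of the buyer, so your comparison of ``the MHR buyer's survival function with an exponential'' has nothing to act on.

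Worse, the inequality you pose cannot hold for all sellers. You put caps on an arbitrary (regularized) seller and require $C^a(b)\ge c_\alpha\E_s[(b-s)_+\1\{s\in a\}]$ for every buyer report. Averaging this over an \emph{arbitrary} buyer prior would give $V_\alpha\ge c_\alpha\FB$ for all independent bilateral priors. Hence $\SB\ge\gamma_{\rm MHR}\FB>\FB/2$, contradicting the tight factor $1/2$.

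Reflection serves exactly the opposite purpose from the one you state. After $b\mapsto U-b$, $c\mapsto U-c$, the MHR buyer becomes a seller whose CDF $F$ has concave logarithm, and this structured side is the one that is capped. The cap $F_a(y)=F(\min\{y,a\})$ keeps $\log F_a$ concave. Such a seller is automatically regular, since $Ay+\alpha/r(y)$ with $r=(\log F)'$ is nondecreasing, so no continuous regularization is needed. Meanwhile the arbitrary original seller becomes the buyer, and the bound must hold for every one of its reports.

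\textbf{Even with the orientation fixed, the substantive pieces are still missing.}

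\emph{The rule.} It is $m(v)=F(s(v))$ with $F(s)=\beta K_F(s,v)$ and $K_F(s,v)=t\int_0^1u^{t-1}F(s+u(v-s))\dd u$, plus an atom case $m(v)=\beta K_F(0,v)$. The weight $u^{t-1}$ comes from solving $m+Avm'=\beta F$, and this is where $M_t$ enters.

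\emph{The typewise bound.} $C(v)\ge\beta T(v)$ is proved through $C'(v)-\beta F(v)=\frac{m s'}{t}\bigl(\E_\mu[(1-u)h'(u)]-(1-t)\bigr)$, with concave $h(u)=\log F(s+(v-s)u)-\log F(s)$. The sign comes from \cref{mhr:lem:crossing}: a single crossing of $e^{h}$ and $e^{z_tu}$ \emph{after} their $u^{t-1}$-weighted averages are matched. It is not obtained from separate comparisons of the gain and rent terms.

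\emph{Cap-monotonicity.} This does not follow by analogy with the minimal-mass argument of \cref{thm:compatible-targets}; the paper lists the continuous counterpart of that argument as open. For the explicit rule, it follows from the strict decrease of $K_F(s,v)/F(s)$ in $s$ together with $F_a\le F_b$ for $a\le b$.

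\emph{Tightness.} The extremal original seller is not a two-point or $s^{1/t}$ law. It is $S=H-X$, where $\Pr[X\ge x]=(\delta/x)^{1-t}$ on $(\delta,z]$, then $(\delta/z)^{1-t}e^{-(x-z)/k}$ up to an atom at $H$, with $z=z_t$ and $k=z_t/(1-t)-1$. The balance equation is exactly what makes the limiting ratio equal $1/M_t(z_t)$.

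Your budget-separation, interior-minimum, atom-smearing, and clipping steps do match the paper.
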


The theorem gives the optimal bound at each budget multiplier and the
resulting efficiency guarantee. The exact constant is
\eqref{mhr:eq:gamma}; the decimal is illustrative, and uniqueness of the
minimizing $t$ is not asserted.

We first bound the bilateral Lagrangian separately for \emph{every} report
of the unrestricted agent. We then prove cap-monotonicity and apply
the composition theorem, before constructing tight bilateral instances.
At $\alpha=0$, first best is optimal; the construction treats $\alpha>0$.

It is convenient to exchange the two sides before constructing the rule.
Reflect all types around a common upper bound $U$: a buyer value $b$ becomes
seller cost $U-b$, and seller cost $c$ becomes buyer value $U-c$. Surplus is
unchanged. Moreover,  if  an MHR buyer   is reflected into the seller cost $U-B$, then
the logarithm of the reflected seller's  cumulative distribution function
(CDF)   is concave on its positive region, since
 $\Pr[U-B\le y]=\Pr[B\ge U-y]$. Such a CDF is locally absolutely continuous
in the interior of its positive region and can have an atom only at its
lower support endpoint; deterministic distributions are also allowed.
We work in these reflected coordinates until applying the composition theorem.
The model's assumptions on independence, bounded reports, and transfers
remain in force; $V_\alpha$ is a supremum over BIC/IIR mechanisms without
a budget constraint.

\subsection{ { An Exponential Comparison for   Reflected MHR  CDFs } }\label{mhr:sec:crossing}
We first justify the constant and prove the comparison inequality needed
below. The reference case has   affine $\log F$; concavity of $\log F$  extends the
bound to every permitted CDF.
Write $\eta=1-t$ and $A=1/t=1+\alpha$. Integration by parts gives
\begin{equation}\label{mhr:eq:M-ode}
 zM_t'(z)+tM_t(z)=t e^z.
\end{equation}
For $Q_t(z)=(z+2t-1)M_t(z)-t e^z$,
\[
 Q_t'(z)=\eta M_t(z)+(2t-1)M_t'(z)>0,
\]
since $0<M_t'(z)<M_t(z)$; when $t<1/2$, the right side is greater than
$tM_t(z)$. Also $Q_t(\eta)=t(M_t(\eta)-e^\eta)<0$. At $z=2\eta$,
\[
 Q_t(2\eta)=\eta\bigl(M_t(2\eta)-2M_t'(2\eta)\bigr)>0.
\]
For the last sign, pair $u$ and $1-u$ in the integral. For $0<u<1/2$, their
weight ratio is
\[
 \frac{u^{t-1}e^{2\eta u}}{(1-u)^{t-1}e^{2\eta(1-u)}}
 =\exp\!\left(\eta\left[\log\frac{1-u}{u}-2(1-2u)\right]\right)>1.
\]
Hence the weighted mean of $u$ is below $1/2$. This proves existence and
uniqueness of $z_t$ in the stated interval. Fix $\alpha>0$ from now on and write
$z=z_t$, $\beta=c_\alpha$. We will use
\begin{equation}\label{mhr:eq:balance}
 \beta M_t(z)=1,\qquad t\beta e^z=z+2t-1.
\end{equation}

The next lemma compares a concave function $h$ with the line $zu$ after
matching their weighted exponential averages. It gives the derivative
inequality needed to bound information rents.
\begin{lemma}\label{mhr:lem:crossing}
Let   $h$  be nondecreasing, concave, and Lipschitz  on $[0,1]$, with $h(0)=0$.
Suppose
\[
 \beta t\int_0^1u^{t-1}e^{h(u)}\dd u=1.
\]
Then
\begin{equation}\label{mhr:eq:shape-ineq}
 \beta t\int_0^1u^{t-1}(1-u)h'(u)e^{h(u)}\dd u\ge1-t.
\end{equation}
\end{lemma}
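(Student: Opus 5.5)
The plan is to write both sides in a form where the constraint and the target integral differ only by a weight $1/u$, and then compare $h$ with the reference line $u\mapsto zu$ by a single-crossing argument. The reference case $h(u)=zu$ should give equality. Indeed, $M_t'(z)=t\int_0^1u^te^{zu}\dd u$, so the left side of \cref{mhr:eq:shape-ineq} equals $\beta z(M_t(z)-M_t'(z))$. Using \cref{mhr:eq:M-ode} to replace $zM_t'(z)$ by $te^z-tM_t(z)$, and then \cref{mhr:eq:balance}, this becomes $z+t-(z+2t-1)=1-t$. The constraint also holds in this case, by $\beta M_t(z)=1$.

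\emph{Step 1: integration by parts.} I write $(1-u)h'(u)e^{h(u)}\dd u=(1-u)\dd(e^{h(u)}-1)$ and integrate against $u^{t-1}$. The boundary terms vanish: at $u=1$ because of the factor $1-u$, and at $u=0$ because $0\le e^{h(u)}-1\le e^{L}Lu$ by the Lipschitz bound and $h(0)=0$, so $u^{t-1}(e^h-1)=O(u^t)\to0$. Since $-\frac{\dd}{\dd u}[u^{t-1}(1-u)]=(1-t)u^{t-2}+tu^{t-1}$, the left side of \cref{mhr:eq:shape-ineq} equals
\[
 \beta t(1-t)\int_0^1u^{t-2}(e^{h}-1)\dd u+\beta t^2\int_0^1u^{t-1}(e^h-1)\dd u .
\]
By the normalization, the second term equals $t(1-\beta)$ for every admissible $h$. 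Hence it suffices to show that $J(h)=\int_0^1u^{t-2}(e^{h(u)}-1)\dd u$ is at least $J(zu)$. Here $z$ is the slope from \cref{mhr:eq:balance}, and the line has the same constraint value $\beta t\int u^{t-1}e^{zu}\dd u=\beta M_t(z)=1$. The integrand of $J$ is $O(u^{t-1})$, so $J$ is finite.

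\emph{Step 2: single crossing.} Since $h$ is concave with $h(0)=0$, the ratio $h(u)/u$ is nonincreasing on $(0,1]$. Therefore $h(u)-zu=u\,(h(u)/u-z)$ is nonnegative on some interval $(0,u^*]$ and nonpositive on $[u^*,1]$; the degenerate cases $u^*\in\{0,1\}$ are allowed. It follows that $D(u)=e^{h(u)}-e^{zu}$ has the same sign pattern, and $\int_0^1u^{t-1}D(u)\dd u=0$ because both functions satisfy the constraint. The weight $1/u$ is decreasing, so
\[
 \int_0^1u^{t-2}D\,\dd u=\int_0^1\Bigl(\tfrac1u-\tfrac1{u^*}\Bigr)u^{t-1}D\,\dd u\ge0,
\]
because the integrand is pointwise nonnegative. (When $u^*=0$ or $u^*=1$, the equal-mean condition forces $D=0$ almost everywhere.) This gives $J(h)\ge J(zu)$. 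Combining with Step 1 and the equality in the reference case yields \cref{mhr:eq:shape-ineq}.

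I expect the only delicate points to be integrability and the boundary terms near $u=0$. The exponent $t-2<-1$ makes $u^{t-2}$ alone non-integrable, so the argument must work with $e^h-1$ rather than $e^h$, and it uses the Lipschitz hypothesis essentially. Those estimates also justify splitting the integrals and applying the single-crossing comparison with a $1/u$ weight. Monotonicity of $h$ is only needed to ensure $e^h-1\ge0$ and $h'\ge0$, which keeps every integrand signed and finite.
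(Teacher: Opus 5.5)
Your proposal is correct and follows essentially the paper's argument. Both use the same single-crossing comparison of $e^{h}$ with $e^{zu}$ under the decreasing weight $1/u$, the same integration by parts producing the $u^{t-2}$ and $u^{t-1}$ terms, and the same evaluation of the linear reference case as $1-t$. The only difference is cosmetic: you integrate by parts on $e^{h}-1$ for each function and note that the $u^{t-1}$ term is fixed by the normalization, whereas the paper integrates by parts directly on the difference $e^{h}-e^{zu}$.
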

\begin{proof}
Put $g(u)=e^{h(u)}-e^{zu}$. The normalization and \eqref{mhr:eq:balance} give
$\int_0^1u^{t-1}g(u)\dd u=0$. Concavity makes $h(u)/u$ nonincreasing, so
$g$ changes sign at most once, from nonnegative to nonpositive. Multiplication
by the decreasing function $1/u$ therefore gives
\begin{equation}\label{mhr:eq:cross-moment}
 \int_0^1u^{t-2}g(u)\dd u\ge0.
\end{equation}
Explicitly, if $a$ is a crossing point, subtract
$a^{-1}\int u^{t-1}g=0$; the remaining integrand
$u^{t-1}(u^{-1}-a^{-1})g(u)$ is nonnegative. If there is no nontrivial crossing,
the normalization forces $g=0$ almost everywhere. The integral is finite
because $g(u)=O(u)$ at zero.

Integration by parts, with both boundary terms vanishing, now gives
\[
 \int_0^1u^{t-1}(1-u)g'(u)\dd u
 =(1-t)\int_0^1u^{t-2}g(u)\dd u
   +t\int_0^1u^{t-1}g(u)\dd u\ge0.
\]
Thus the left side of \eqref{mhr:eq:shape-ineq} is at least its value for $h(u)=zu$.
That value is
\[
 \beta z(M_t(z)-M_t'(z))
 =z+t-t\beta e^z=1-t,
\]
by \eqref{mhr:eq:M-ode} and \eqref{mhr:eq:balance}.
\end{proof}

The normalization is essential: the comparison is not a pointwise tangent
bound. It compares the weighted derivatives after the corresponding weighted averages have
been matched. This is precisely the derivative quantity that appears in the
information-rent calculation below.

\subsection{Typewise Bilateral Rules and Bounds}\label{mhr:sec:typewise-bound}
We now construct a bilateral rule and prove its bound separately for each
buyer value. This typewise guarantee is what the composition theorem needs;
it also allows the buyer's distribution to be arbitrary.

Translate the lower support endpoint of a reflected seller to zero. Its CDF is
$F$, supported on $[0,H]$, with $F(v)=1$ for $v\ge H$. The extension
$\log F=0$ above $H$ remains concave. Write $p=F(0)$ for the probability
of the bottom atom, and let $T(v)=\E[(v-Y)_+]$ be first-best surplus
conditional on buyer value $v$. In particular,
\[
 T(v)=\int_0^v F(w)\dd w\quad(v\ge0),\qquad T(v)=0\quad(v\le0).
\]
Deterministic $Y=0$ is allowed. For a buyer value $v>0$, define the weighted
average of the CDF between a candidate cutoff $s$ and $v$ by
\[
 K_F(s,v)=t\int_0^1u^{t-1}F(s+u(v-s))\dd u,
 \qquad 0\le s\le v.
\]
Let $m(v)$ denote the total probability of trade over the seller prior,
and let $s(v)$ denote the largest selected seller cost. We choose them by
\begin{equation}\label{mhr:eq:construction}
 \begin{cases}
 m(v)=\beta K_F(0,v),\quad s(v)=0,
       &\beta K_F(0,v)\le p,\\[1mm]
 F(s(v))=\beta K_F(s(v),v),\quad m(v)=F(s(v)),
       &\beta K_F(0,v)>p.
 \end{cases}
\end{equation}
In the second case choose $s(v)\in(0,\min\{v,H\})$.
Set $m(v)=0$ for $v\le0$. In the first case, trade only at the bottom atom,
with probability $m(v)/p$ there; we call this the \emph{atom case}.
In the second, trade exactly at seller costs $Y\le s(v)$; we call this the
\emph{interior case}. When $p=0$, the atom case is absent for $v>0$.
Thus the rule trades only when $Y<v$. The weighted-average identity will
give the required surplus bound.

\paragraph{Reports Outside the Seller Support.}
Translation need not make zero the lowest \emph{allowed report}.
For $v>0$, extend the allocation constantly to seller reports $y<0$:
its trade probability is $m(v)/p$ in the atom case when $p>0$, and one in the
interior case. Reports above the selected cutoff receive zero allocation.
For $v\le0$ set the allocation to zero at every seller report. This extension
is monotone on the full report domain and trades only at positive surplus.
It changes neither the seller mass nor receipts at supported costs: the
seller envelope integrates from the reported cost upward. The buyer envelope
below zero also vanishes. Thus the payment identities below apply on the full report domains,
including reports that have zero probability under the prior.

\paragraph{Existence, Monotonicity, and Regularity.}
We next verify that the cutoff is well-defined and that higher buyer values
and lower seller costs make trade weakly more likely. For fixed $v$, define
\[
 R_F(s,v)=\frac{K_F(s,v)}{F(s)}.
\]
On the positive, strictly increasing part of $F$, this is continuous and strictly
decreasing in $s$. Indeed, if $r=(\log F)'$, concavity gives, almost everywhere,
\[
 \frac{\partial}{\partial s}\log\frac{F(s+u(v-s))}{F(s)}
 =(1-u)r(s+u(v-s))-r(s)\le-u r(s)<0.
\]
At $s=v$ the ratio is one; at $s=H<v$ it is also one. In the second case of
\eqref{mhr:eq:construction}, its limiting value at zero is greater than $1/\beta$,
including $+\infty$ when $p=0$. This proves existence and uniqueness of the root.
The root satisfies $m(v)=\beta K_F(s(v),v)\le\beta<1$.

Increasing $v$ increases $R_F(s,v)$ at a fixed $s$, hence increases the cutoff.
Shifting both arguments to the right, while keeping $v-s$ fixed, decreases
$R_F$ by concavity of $\log F$. Consequently, in the interior case,
\[
 0\le s(v_2)-s(v_1)\le v_2-v_1 \qquad(v_2\ge v_1).
\]
Thus the cutoff is locally Lipschitz, and so is $m$ away from zero.
The bottom-atom mass is also nondecreasing. At the transition the mass tends to
$p$ and the cutoff to zero, so the allocation is monotone in both types.
For completeness, the seller's continuous virtual cost is $y+1/r(y)$,
where $r=(\log F)'=F'/F$ is the reverse hazard rate. Because $r$ is
nonincreasing, the seller's Lagrangian score $Ay+\alpha/r(y)$ is
nondecreasing; the bottom-atom score is zero. Thus the reflected seller is
regular. No regularity of the opposite-side prior is needed.

\paragraph{The Information-Rent Inequality.}
We now show that, after accounting for incentive payments, the Lagrangian
grows at least as fast as $\beta$ times conditional first-best surplus.
Let $D(m)$ be the allocated seller cost plus $\alpha$ times the seller's
threshold-payment receipt, averaged over the seller prior. In the atom case
$D(m)=0$; for $m=F(s)>p$, integration by parts gives
\[
 D(F(s))=\int_{[0,s]}y\dd F(y)+\alpha sF(s)
       =A sF(s)-T(s).
\]
Define
\begin{equation}\label{mhr:eq:C}
 I(v)=\int_0^v m(w)\dd w,\qquad
 C(v)=Avm(v)-D(m(v))-\alpha I(v).
\end{equation}
The desired bound, which does not depend on the buyer prior, is
\begin{equation}\label{mhr:eq:typewise-bound}
 C(v)\ge\beta T(v)\qquad\text{for every }v.
\end{equation}
It suffices to prove $C'(v)\ge\beta F(v)$ almost everywhere and then
integrate from zero. In the atom case,
$m(v)=\beta t v^{-t}\int_0^v w^{t-1}F(w)\dd w$, so
\[
 C'(v)=m(v)+Avm'(v)=\beta F(v)
\]
at almost every $v$.

In the interior case, write $s=s(v)$ and $d=v-s>0$, and put
\[
 h(u)=\log F(s+du)-\log F(s),\qquad
 \dd\mu(u)=\beta t u^{t-1}e^{h(u)}\dd u.
\]
Equation~\eqref{mhr:eq:construction} says $\mu$ is a probability measure. Differentiating
that equation along $v$ gives
\begin{equation}\label{mhr:eq:implicit-derivative}
 r(s)s'=\E_\mu\bigl[r(s+du)\{u+(1-u)s'\}\bigr].
\end{equation}
An integration by parts in $u$ gives
\begin{equation}\label{mhr:eq:moment-identity}
 d\E_\mu[u r(s+du)]=t\left(\beta\frac{F(v)}{F(s)}-1\right).
\end{equation}
Since $m'=m r(s)s'$ and $A=1/t$, differentiating \eqref{mhr:eq:C} first gives
\[
 C'(v)=m(v)+\frac{m(v)s'(v)}{t}\bigl(d\,r(s)-(1-t)\bigr).
\]
Multiply \eqref{mhr:eq:implicit-derivative} by $d$ and substitute
\eqref{mhr:eq:moment-identity}; the terms $m-\beta F(v)$ cancel. This gives
\begin{equation}\label{mhr:eq:key-derivative}
 \begin{split}
 C'(v)-\beta F(v)
 &=\frac{m(v)s'(v)}t
    \left(\E_\mu[(1-u)h'(u)]-(1-t)\right)\\
 &\ge0.
 \end{split}
\end{equation}
The last sign is Lemma~\ref{mhr:lem:crossing}. In other words, an affine
$\log F$ is the worst case for this derivative
bound. Any additional concavity contributes a nonnegative term once the
cutoff satisfies the weighted-average identity.

All differentiations hold almost everywhere. On compact interior ranges,
$\log F$ is Lipschitz and $r(s)>0$; the implicit relation has a nonzero derivative
in $s$, and the preceding shift comparison gives the required local absolute
continuity. A corner of $\log F$ or the flattening at $H$ causes no jump in $C$.
The two cases join continuously, $C(0)=T(0)=0$, and the possible jump of $m$ at
zero is multiplied by $v=0$. Integrating \eqref{mhr:eq:key-derivative} therefore
proves \eqref{mhr:eq:typewise-bound}. The case $p=0$ uses the same proof on each compact
range $v>0$, then lets its left endpoint tend to zero; both $C(v)$ and $T(v)$
tend to zero. Deterministic $Y=0$ has $m(v)=\beta$ for $v>0$ and satisfies equality.

\paragraph{Interpretation as a Bilateral Mechanism.}
The buyer's interim envelope payment is $v m(v)-I(v)$, where $I(v)$ is
the information rent needed for truthful reporting. In the interior
case each selected seller receives the threshold payment $s(v)$; in the atom
case the receipt is zero.
Substituting these payments shows that $C(v)$ is exactly the conditional
value of $\GFT+\alpha\operatorname{Rev}$. Monotonicity gives incentive
compatibility for both agents, and averaging
\eqref{mhr:eq:typewise-bound} over any independent buyer prior yields
$V_\alpha\ge\beta\FB$. That prior may have atoms or singular parts.
If its report interval begins below zero, the allocation vanishes there,
so the same envelope formula applies. We have therefore proved the bilateral
lower bound with a rule suitable for the composition argument.

\subsection{From Bilateral Rules to Matching Markets}\label{mhr:sec:cap-composition}
To compose the bilateral rules, allowing additional seller costs must never
reduce trade at a fixed report pair. We verify this property, apply the
composition theorem, and recover budget feasibility from the multiplier bounds.

\paragraph{Monotonicity Across Seller Caps.}
As in the composition theorem,   {keep}  the original seller probabilities
when restricting the seller to an initial segment $Y\le a$. Its cumulative
mass is
\[
 F_a(y)=\Pr[Y\le y,\,Y\le a]=F(\min\{y,a\}),\qquad y\ge0.
\]
Thus   $\log F_a$ is concave on its positive region and $F_a$  has total mass
$F(a)$; dividing by $F(a)$   preserves this concavity property. The construction is
homogeneous: multiplying $F$ by a positive constant multiplies $m$, $C$, and $T$ by that constant and leaves its cutoff
unchanged. Apply \eqref{mhr:eq:construction} to $F_a$ and write the resulting mass
as $m^a(v)$.

For $a\le b$, one has $F_a\le F_b$. The allocated mass in the atom case
consequently increases.
If the smaller cap has an interior cutoff $s<a$, its denominator $F(s)$ is
unchanged by either cap, while the numerator $K_{F_b}(s,v)$ is at least
$K_{F_a}(s,v)$. The decreasing-root characterization forces the cutoff for
the larger cap, and hence its allocated mass, to be no smaller. Thus
\begin{equation}\label{mhr:eq:cap-nesting}
 m^a(v)\le m^b(v)\quad(a\le b),
\end{equation}
and the trade probability at every fixed pair of reports is nondecreasing
as the seller cap expands. This is exactly cap-monotonicity.
When the cap consists only of the bottom atom, the allocated
mass is $\beta p$ for $v>0$.

Use the extension to reports outside the support for each positive-mass
cap, and assign zero allocation to every zero-mass cap.
At an interior endpoint, open and closed caps have the same mass,
since there is no interior atom, and the selected cutoff lies strictly below
that endpoint. At the bottom atom, excluding the endpoint gives zero
allocation, whereas including it gives the atom case. These conventions
preserve cap-monotonicity on the full report domain.

The integral $K_{F_a}(s,v)$ is Borel in its arguments. In the interior
case its strictly decreasing ratio characterizes the cutoff by rational
level tests; in the atom case the mass is an explicit integral.
Thus the cutoffs, allocations and targets are jointly Borel measurable.
The first-best cap endpoint is measurable as well: it is obtained by
comparing finitely many affine matching weights, with the fixed tie order
deciding inclusion. These facts verify the measurability hypothesis of
\cref{thm:edge-composition}.

\paragraph{Composition Across First-Best Edges.}
 { Reflect each original MHR buyer into a   seller whose CDF has concave logarithm and translate}
 its lower support to zero. The construction above gives the measurable
cap-monotone family required by \cref{thm:edge-composition}. Its conditional
Lagrangian is \cref{mhr:eq:C}, so \cref{mhr:eq:typewise-bound} supplies the
target for every seller cap and opposite report. The theorem applies with the fixed
first-best tie order; the cap is empty whenever the buyer cannot
trade in first best. In its interval form,
\cref{eq:unified-rent-correction} is precisely
\begin{equation}\label{mhr:eq:cap-rent}
 \alpha\int_L^v\bigl(m^{a(v)}(w)-m^{a(w)}(w)\bigr)\,\dd w\ge0.
\end{equation}
The integrand is nonnegative because a lower buyer report selects a smaller
seller cap. Consequently, the local bound gives
$V_\alpha\ge c_\alpha\FB$ in the reflected matching market.

To transfer this result back to the original market, recall that reflection
preserves surplus. It also preserves incentive compatibility and revenue.
Write all transfers as payments \emph{to} the mechanism, including the
seller's transfer. Under $(b',s')=(U-c,U-b)$, reflected transfers
$(p'_B,p'_S)$ become
$p_B=Ux+p'_S$ and $p_S=p'_B-Ux$.
These identities preserve both agents' utilities and total revenue, and
apply to service probabilities in a matching as well. Equivalently, when
seller transfers are written as positive \emph{receipts}, the formulas are
$P_B=Ux-P'_S$ and $P_S=Ux-P'_B$.
Thus the same fixed-multiplier bound holds for the original independent priors.

\paragraph{Recovering Budget Feasibility.}
It remains to obtain nonnegative expected revenue. We prove this directly
for continuous priors, without assuming that the finite linear program's
attainment argument applies.

Let $\mathcal K$ be the closure of the convex hull of the gain--revenue pairs
of the mechanisms constructed above, no trade, and a fixed positive-revenue
posted-price rule. Such a rule exists whenever $\FB>0$: on some allowed
edge, buying at one price and selling at a strictly higher price results in
trade with positive probability. Normalized threshold or interim-envelope
payments and bounded types make $\mathcal K$ compact. For all $\alpha\ge0$,
\[
 \max_{(G,R)\in\mathcal K}(G+\alpha R)\ge c_\alpha\FB
          \ge\gamma_{\mathrm{MHR}}\FB.
\]
If $\mathcal K$ were disjoint from $\{G\ge\gamma_{\mathrm{MHR}}\FB,R\ge0\}$,
strict separation would give nonnegative coefficients $(u,v)$ with
$\max_{\mathcal K}(uG+vR)<u\gamma_{\mathrm{MHR}}\FB$. If $u=0$, the positive
revenue point contradicts this; if $u>0$, divide by $u$ and use the displayed
inequality. Thus the sets intersect. Approximating by actual mixtures and
repairing any vanishing expected deficit with the fixed positive-revenue rule
gives budget-feasible mechanisms whose gains approach
$\gamma_{\mathrm{MHR}}\FB$. This proves the desired second-best guarantee,
with $\SB$ interpreted as a supremum.

The budget conversion in \cref{app:sbb-conversion} uses only independence
and integrable interim payments, so it also applies to these bounded
continuous priors. It gives profilewise strong budget balance while
preserving BIC and IIR.

\paragraph{Unbounded Upper Supports.}
The bounded result also gives the stated extension when first-best surplus
has finite expectation. Translate a common finite lower bound to zero and
clip all types at $K$.
  {Clipping preserves the MHR property of buyers.}  The bounded
rules have positive-surplus support, so \cref{lem:clipping-extension}
extends them with unchanged payments and no smaller gains;
\cref{lem:clipping-integrability} ensures integrability. By
\cref{lem:capped-fb-convergence}, clipped first best increases to true first
best. Taking limits proves the second-best bound. The incentive and
payment calculation in the proof of \cref{lem:clipping-extension} does not
use budget nonnegativity. It therefore also extends each fixed-multiplier
witness with unchanged revenue and no smaller GFT, proving the $V_\alpha$
bound.

\subsection{Tight Bilateral Instances}\label{mhr:sec:upper}
We now show tightness at every fixed multiplier. Bilateral instances
suffice, since a single-edge market is also an allowed matching market. Fix $t\in(0,1)$, let
$\eta=1-t$, $\alpha=\eta/t$, and choose $k\in(0,1)$ and
\[
 z=\eta(1+k),\qquad 0<\delta<z<H.
\]
In reflected coordinates let $Y$ have CDF $F_H(y)=e^{y-H}$ on $[0,H]$, including
the atom $e^{-H}$ at zero. Independently, give $X\in[\delta,H]$ the   {tail probability}
 function
\begin{equation}\label{mhr:eq:adversary}
 \Pr[X\ge x]=
 \begin{cases}
 1,&x\le\delta,\\
 (\delta/x)^\eta,&\delta<x\le z,\\
 (\delta/z)^\eta e^{-(x-z)/k},&z<x\le H,\\
 0,&x>H.
 \end{cases}
\end{equation}
There is an atom at $H$, but not at $\delta$ or $z$. The original buyer is
$B=H-Y=\min\{\operatorname{Exp}(1),H\}$, which is MHR, and the original seller has cost
$S=H-X\in[0,H-\delta]$.

This family makes the comparison step tight: $\log F_H$ is affine, and
the buyer prior is chosen to have a simple Lagrangian score. Write these
scores as $d_Y(y)$ and $a_X(x)$. The reflected seller score $d_Y$ is zero at
its bottom atom and $Ay+\alpha$ in the interior. The reflected buyer score
$a_X$ is zero on $(\delta,z)$, is $Ax-\alpha k$ on $(z,H)$, and is $AH$ at
its top atom. Both score functions are nondecreasing. Trading exactly when
the buyer score exceeds the seller score is therefore monotone and maximizes
the Lagrangian. For $z<x<H$, its conditional value is
\[
 \E_Y[(a_X(x)-d_Y(Y))_+]=e^{-H}(Ae^{x-z}-1),
\]
and at $x=H$ it is $e^{-H}(Ae^{H-\eta}-1)$.
Set $P=(\delta/z)^\eta$ and $Q=e^{-(1/k-1)(H-z)}$. Integration gives
\begin{align}
 \frac{V_\alpha}{e^{-H}P}
 &=\frac{A}{1-k}(1-Q)-1+Ae^{z-\eta}Q,\label{mhr:eq:upper-N}\\
 \frac{\FB}{e^{-H}P}
 &=\frac{z}t M_t(z)
   +z^\eta\delta^t\left(\frac{e^\delta-1}{\delta}
                         -\frac{M_t(\delta)}t\right)
   +e^z\frac{k}{1-k}(1-Q).\label{mhr:eq:upper-D}
\end{align}
For the denominator, use
$\E[(X-Y)_+]=e^{-H}\E[e^X-1]$ and the tail integral. On $[\delta,z]$,
\[
 \frac{\E[e^X-1]\text{ contribution}}{P}
 =z^\eta\!\int_\delta^z e^x x^{t-1}\dd x
   +z^\eta\delta^{-\eta}(e^\delta-1),
\]
which is the first two terms of \eqref{mhr:eq:upper-D}. The remaining exponential
segment is a geometric integral. For the numerator, integrate the conditional
value against density $P e^{-(x-z)/k}/k$ and add the top atom.

First let $\delta\downarrow0$, then $H\to\infty$. Equations
\eqref{mhr:eq:upper-N}--\eqref{mhr:eq:upper-D} yield
\begin{equation}\label{mhr:eq:upper-ratio}
 \frac{V_\alpha}{\FB}\longrightarrow
 \frac{1/t-1+k}{(1-k)(z/t)M_t(z)+k e^z}.
\end{equation}
Now take $z=z_t$ and $k=z_t/(1-t)-1\in(0,1)$.
The balance equation \eqref{mhr:eq:balance-root} makes the last ratio exactly
$1/M_t(z_t)=c_\alpha$. Thus every fixed-multiplier lower bound is tight in
bounded bilateral instances. At $\alpha=0$, $V_0=\FB$.

The root is continuous in $t$ by its strict $z$ derivative. As $t\to1$,
$z_t\to0$; as $t\to0$, $z_t$ remains bounded and $M_t(z_t)\to1$ by its positive
series. Hence $c_\alpha\to1$ at both ends, while $c_\alpha<1$ in the interior.
The minimum in \eqref{mhr:eq:gamma} is attained. Fix a minimizing multiplier in the
upper family and use $\SB\le V_\alpha$ to match the second-best lower bound.
A bilateral instance is itself a matching market, proving the two equalities
in Theorem~\ref{thm:sharp-mhr}.

\paragraph{Removing the Endpoint Atom.}
The tight family uses an atom at the buyer's upper endpoint. We can replace
it by a short continuous tail while preserving MHR and changing the ratio
arbitrarily little.
For a fixed $H,\delta$, replace the buyer atom $e^{-H}$ at $H$ by a uniform
conditional value on $[H,H+\epsilon]$, $0<\epsilon\le1$. The   {tail probability}  is $e^{-b}$
below $H$ and $e^{-H}(1-(b-H)/\epsilon)$ on the added interval. The hazard
jumps from one to $1/(H+\epsilon-b)$ and is nondecreasing. This buyer is bounded
and absolutely continuous. Below $H$ its Lagrangian score is unchanged; on the new
tail it differs from the old atom's score by at most $(1+\alpha)\epsilon$.
Both the old and new buyer scores, as well as the constructed seller score,
are nondecreasing, so the positive-part formula computes the optima. Coupling the old atom to the new tail
bounds the change of $V_\alpha$ by $(1+\alpha)e^{-H}\epsilon$, and the change
of $\FB$ by $e^{-H}\epsilon$. Letting $\epsilon\downarrow0$ for a fixed
positive-first-best instance preserves its ratio. This proves the atom-free
assertion of Theorem~\ref{thm:sharp-mhr}.

The next section applies the same composition theorem to finite supports,
where a cutoff recurrence replaces the analytic comparison used here.

\section{Sharp Bounds for Binary Buyers}\label{hs:sec:sharp}
We next replace the distributional assumptions of the MHR setting by a
bound on the number of possible values: each buyer has at most two types.
The finite-support reduction again makes the bilateral bound sufficient,
and it reveals exactly how the number of seller types affects efficiency.

Let $\mathfrak M_{2,m}$ be the class of markets with at most two types per buyer,
at most $m$ types per seller, independent finite priors, and any nonempty
downward-closed family of feasible matchings. Let $\mathfrak B_{2,m}$ be its
bilateral subclass. For the budget multiplier $\alpha\ge0$, write
$q=\alpha/(1+\alpha)$ and $D_m(q)=1+q-q^2-q^m$.

\begin{theorem}[Sharp Bounds for Binary Buyers]\label{thm:curve}
For every integer $m\ge1$ and every $\alpha\ge0$,
\begin{equation}\label{eq:exact-curve}
 \inf_{\mathcal I\in\mathfrak M_{2,m},\,\FB>0}\frac{V_\alpha(\mathcal I)}{\FB(\mathcal I)}
 =\inf_{\mathcal I\in\mathfrak B_{2,m},\,\FB>0}\frac{V_\alpha(\mathcal I)}{\FB(\mathcal I)}
 =D_m(q)^{-1}.
\end{equation}
Consequently the exact worst-case second-best ratio is
\begin{equation}\label{eq:gamma-definition}
 \gamma_m=\left[\max_{0\le q<1}D_m(q)\right]^{-1}.
\end{equation}
In particular,
\[
 \gamma_1=1,\qquad \gamma_2=\frac{8}{9},\qquad
 \gamma_3=\frac{27}{32},\qquad \gamma_m\downarrow\frac{4}{5}.
\]
All upper bounds are approached by finite rational bilateral instances in
$[0,1]$. The same statements hold after exchanging the two sides.
\end{theorem}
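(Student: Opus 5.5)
The plan is to prove \eqref{eq:exact-curve} in three stages: a bilateral lower bound at each fixed $\alpha$, transfer to $\mathfrak M_{2,m}$ through the reduction, and matching bilateral upper instances. Then pass from the curve to $\gamma_m$ using $\SB=\min_{\alpha}V_\alpha$ from \cref{lem:allocation-characterization}.

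\emph{Transfer and the second-best step.} The bilateral class $\mathfrak B_{2,m}$ is closed under two operations:
\begin{itemize}
\item varying probabilities on a fixed binary buyer support;
\item replacing a seller by the regularized prior of \cref{prop:regularization-main}, which keeps the number of ordered types, followed by conditioning on a cap, which only lowers it.
\end{itemize}
So the argument behind \cref{thm:bilateral-to-matching} (regularize, use \cref{thm:compatible-targets} with targets $C^r_\ell=\beta\sum_{k\le r}f_k(b_\ell-s_k)_+$, then \eqref{eq:cap-target-bound}) only needs the premise inside $\mathfrak B_{2,m}$. This gives the first equality in \eqref{eq:exact-curve}, with the matching infimum at least the bilateral one. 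The reverse inequality holds because a single edge is a matching market.

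Granting the curve, every instance satisfies $\SB/\FB=\min_\alpha V_\alpha/\FB\ge\min_q D_m(q)^{-1}=\gamma_m$. For tightness, fix $q^\ast$ maximizing $D_m$ (the endpoint $q\to1$ gives $D_m\to1$, and the maximum is interior for $m\ge2$). Take instances with $V_{\alpha^\ast}/\FB\to D_m(q^\ast)^{-1}$ and use $\SB\le V_{\alpha^\ast}$.

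The constants are elementary:
\begin{itemize}
\item $D_1\equiv1$, so $\gamma_1=1$.
\item $D_2=1+q-2q^2$ is maximized at $q=1/4$ with value $9/8$, so $\gamma_2=8/9$.
\item $D_3'=1-2q-3q^2$ vanishes at $q=1/3$, where $D_3=32/27$, so $\gamma_3=27/32$.
\item $D_m(q)$ increases in $m$ for each $q$ and converges to $1+q-q^2$, whose maximum is $5/4$ at $q=1/2$. Hence $\gamma_m\downarrow4/5$, with the convergence of the maxima following from uniform convergence on $[0,1/2+\epsilon]$.
\end{itemize}
The side exchange follows by the reflection of \cref{mhr:sec:cap-composition}, which preserves surplus, incentives, revenue and support sizes.

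\emph{Bilateral lower bound.} By the equivalence in \cref{thm:compatible-targets}, it suffices to show $V_\alpha\ge\FB/D_m(q)$ for each bilateral instance with a binary buyer $b_1<b_2$ and a weakly regular seller with $m$ types. I would divide the Lagrangian by $1+\alpha$, so that a seller score becomes $s_k+q(\tau(s_k)-s_k)$ and the low buyer type's virtual-value correction carries weight $q$. Then I would use the ironed optimizer of \cref{lem:exact-ironed-optimizer}, which is a pair of seller cutoffs $k_1\le k_2$ for the two buyer types. I plan to lower-bound $V_\alpha$ by a convex combination of finitely many such cutoff rules and compare with $\FB$ term by term. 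The intended split, following the overview, is $D_m=(1-q^m)+q(1-q)$:
\begin{itemize}
\item The seller part: against a single buyer value, the best cutoff captures at least $(1-q^m)^{-1}$ of the conditional first best, up to normalization.
\item The buyer rent: serving the low type costs the high type a rent proportional to $q(b_2-b_1)$ times the low type's mass. Trading it off against dropping the low type contributes the additive $q(1-q)$.
\end{itemize}
For the seller part I would argue by induction on $m$. Peel off the highest seller type, and note that conditional on the first $r$ types the virtual costs are unchanged. Write the Lagrangian value of cutoff $r$ as $\sum_{k\le r}f_k(v-s_k)-q\,(s_r-s_{r-1})F_{r-1}$-type terms. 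For the adversarial seller that makes all cutoffs equally good, this yields the recurrence $x_{r}=q\,x_{r-1}+(\cdots)$, whose solution is the truncated geometric sum $1+q+\dots+q^{m-1}$, i.e. the factor $1-q^m$ after normalization.

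\emph{Upper instances and main obstacle.} For tightness I would solve the equalizing conditions from the lower-bound argument. Choose seller costs on a geometric grid with ratio $q$ in suitable coordinates so that every cutoff gives the same Lagrangian value. Choose the low buyer's probability so that serving and not serving it tie. Then compute $V_\alpha$ as the common value and $\FB$ directly, obtaining $1/D_m(q)$ up to vanishing terms, with rational data in $[0,1]$ after rescaling.

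The main obstacle is the lower bound's combination step. One must show that a single mixture of cutoff rules beats $\FB/D_m$ for \emph{every} seller prior and both buyer weights simultaneously, rather than just in the equalized extremal case. My first attempt would be an LP-duality or weighting argument: exhibit explicit nonnegative weights over the $O(m^2)$ cutoff pairs, with weights $\propto q^{k}$ along seller cutoffs and a $q$ versus $1-q$ split between the two buyer types. Then verify coefficientwise that the weighted Lagrangians dominate $\FB/D_m$ in each $f_k(b_\ell-s_k)_+$, using regularity of the seller to control signs.
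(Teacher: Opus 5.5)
Your outer layers match the paper. The binary-buyer, $m$-type class is closed under reweighting the buyer, full-seller regularization and cap conditioning, so the reduction needs only the bilateral premise. Weak duality with a minimizing multiplier then gives $\gamma_m$, and your constants are right, with one slip: $D_1(q)=1-q^2$, not $1$, so the $m=1$ curve is $V_\alpha\ge\FB/(1-q^2)$; $\gamma_1=1$ is unaffected.

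The genuine gap is the bilateral inequality $\FB\le D_m(q)V_\alpha$ itself. This is the substance of the theorem, and you leave it as an unresolved ``combination step.'' Your seller-part recurrence is derived only for the seller that equalizes all cutoffs, which says nothing about a general seller. The rent part (``trading it off against dropping the low type'') is not an argument.

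Your proposed repair is unlikely to work: fixed weights $\propto q^k$ over cutoff pairs, checked coefficientwise against $f_k(b_\ell-s_k)_+$. Which prefixes have positive value depends on where $b_\ell$ falls among the costs, so useful weights must be instance-dependent. Also, the Lagrangian terms $\alpha F_ks_k$ do not decompose in that basis. Nor do you need one rule that works for all seller priors and buyer weights at once. Premise (i) of \cref{thm:compatible-targets} asks only for one allocation per instance. You already noted that the optimizer is a pair of best prefixes, so no mixture and no regularity are needed.

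What is missing are two inequalities for $K_\alpha(v)=\max_k\{(1+\alpha)vF_k-\sum_{h\le k}f_hs_h-\alpha F_ks_k\}$, valid for every seller.

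\emph{First inequality.} Put $A_k=\sum_{h\le k}f_h(v-s_h)$ for $s_k<v$. Testing prefix $k$ against the maximum and using $F_k\ge f_k$ gives $K_\alpha(v)\ge(1+\alpha)A_k-\alpha A_{k-1}$, i.e.\ $A_k\le(1-q)K_\alpha(v)+qA_{k-1}$. Hence $H(v)\le(1-q^m)K_\alpha(v)$.

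\emph{Second inequality.} Testing the prefix through $x$ at value $y$ gives $K_\alpha(y)\ge(1+\alpha)(y-x)F(x)$.

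\emph{Combining them.} Evaluate the low type at its combined score $(1+\alpha)z$, with $z=x-q\frac{t}{1-t}(y-x)$. The best prefixes at $z$ and $y$ are nested, so $V_\alpha\ge(1-t)K_\alpha(z)+tK_\alpha(y)$. Convexity of $H$ gives $H(x)\le H(z)+(x-z)F(x)$. Then $(1-t)(x-z)F(x)=qt(y-x)F(x)\le q(1-q)tK_\alpha(y)$, which is exactly the additive $q(1-q)$. Adding $tH(y)\le(1-q^m)tK_\alpha(y)$ yields $\FB\le D_m(q)V_\alpha$.

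Your tight-instance paragraph is likewise only qualitative. The paper's family makes the low buyer's and lowest seller's combined scores $0$ and every other seller's combined score $1+\alpha$, so all prefixes tie for the high buyer. Its ratio reaches $D_m(q)^{-1}$ only in the limit $\eta,t\downarrow0$.
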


Thus competition among possible trading partners causes no further loss
in the worst-case guarantee. For example, two types on each side give the
ratio $8/9$, while allowing arbitrarily many seller types lowers the ratio
to $4/5$. We prove the bilateral lower bound first, transfer it to matching
markets, and then give bilateral instances approaching equality.

\subsection{The Bilateral Lower Bound and Its Matching Extension}\label{app:binary-lower-proof}
Fix a seller with $r\le m$ distinct costs $s_1<\cdots<s_r$, masses $f_k>0$, and
prefix probabilities $F_k=\sum_{h\le k}f_h$. Its discrete virtual costs satisfy
\[
 \tau(s_1)=s_1,\qquad
 \tau(s_k)=s_k+(s_k-s_{k-1})\frac{F_{k-1}}{f_k},\qquad
 \sum_{h\le k}f_h\tau(s_h)=F_ks_k.
\]
Regularity is not assumed. A seller cutoff $k$ trades exactly at the first
$k$ costs $s_1,\ldots,s_k$. Define the expected efficient surplus at buyer
value $v$ and the best Lagrangian value among these cutoffs by
\begin{align}
 H(v)&=\E[(v-S)_+],\label{hs:eq:H}\\
 K_\alpha(v)&=\max_{0\le k\le r}
 \left\{(1+\alpha)vF_k-\sum_{h\le k}f_hs_h-\alpha F_ks_k\right\}.
 \label{hs:eq:K}
\end{align}
The $k=0$ term is zero. For $k>0$, the expression in braces is the Lagrangian
contribution of the allocation that trades exactly on that seller prefix,
when the buyer-side coefficient is $(1+\alpha)v$. The following inequalities
relate this cutoff value to efficient surplus and to an increase in the
buyer's value. They will control the seller's contribution and the buyer's
information rent, respectively.

\begin{lemma}[Cutoff Bounds]\label{hs:lem:cutoff}
For every  $v$, every $\alpha\ge0$, and $q=\alpha/(1+\alpha)$,
\begin{equation}\label{hs:eq:cutoff-bound}
  H(v)\le(1-q^m)K_\alpha(v).
\end{equation}
Moreover, if $x<y$ and $F(x)=\Pr[S\le x]$, then
\begin{equation}\label{hs:eq:cutoff-slope}
  K_\alpha(y)\ge(1+\alpha)(y-x)F(x).
\end{equation}
\end{lemma}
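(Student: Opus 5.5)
The plan is to prove both inequalities by explicit comparison with particular seller cutoffs in the maximum \eqref{hs:eq:K}. For the slope bound \eqref{hs:eq:cutoff-slope}, and for the recurrence behind \eqref{hs:eq:cutoff-bound}, I would rewrite the braced expression for cutoff $k\ge1$ as
\[
 G_k(v)=\sum_{h\le k}f_h(v-s_h)+\alpha F_k(v-s_k).
\]
This identity is immediate from $(1+\alpha)vF_k=\sum_{h\le k}f_hv+\alpha F_kv$. No regularity of the seller is needed, because we only use that $K_\alpha(v)$ is a maximum and therefore dominates every individual $G_k(v)$.

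For \eqref{hs:eq:cutoff-slope}, if $F(x)=0$ the right side is zero and $K_\alpha(y)\ge0$ from the $k=0$ term. Otherwise, I let $k$ be the largest index with $s_k\le x$, so that $F_k=F(x)$. Since $s_h\le s_k$ for $h\le k$, we get $\sum_{h\le k}f_hs_h\le F_ks_k$. Hence
\[
 G_k(y)\ge(1+\alpha)F_k(y-s_k)\ge(1+\alpha)(y-x)F(x).
\]

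For \eqref{hs:eq:cutoff-bound}, fix $v$ and write $K=K_\alpha(v)$. Let $k_0$ be the number of costs strictly below $v$, and set $w_k=v-s_k>0$ for $k\le k_0$. Define the partial surpluses $H_k=\sum_{h\le k}f_hw_h$, so $H_0=0$ and $H(v)=H_{k_0}$. The case $\alpha=0$ is trivial because $G_{k_0}=H(v)$, so assume $\alpha>0$. From $G_k\le K$ we obtain $\alpha F_kw_k\le K-H_k$. Since $f_k\le F_k$, this gives
\[
 H_k-H_{k-1}=f_kw_k\le F_kw_k\le\frac{K-H_k}{\alpha}.
\]
Rearranging yields $(1+1/\alpha)H_k\le K/\alpha+H_{k-1}$, that is,
\[
 H_k\le(1-q)K+qH_{k-1},\qquad q=\alpha/(1+\alpha).
\]

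Inducting from $H_0=0$ gives $H_k\le(1-q^k)K$. Hence $H(v)=H_{k_0}\le(1-q^{k_0})K\le(1-q^m)K$, because $k_0\le r\le m$ and $K\ge0$.

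The only delicate point is finding a recurrence that loses nothing. The crude bound $f_k\le F_k$ turns out to be exactly what produces the factor $q$ at each step. The number of steps is controlled by the number of seller types, which is where $m$ enters.
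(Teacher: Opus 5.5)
Your proof is correct and matches the paper's argument: it tests each prefix $k\le k_0$ and weakens $\alpha F_k(v-s_k)$ to $\alpha f_k(v-s_k)$, which yields $H_k\le(1-q)K_\alpha(v)+qH_{k-1}$ and hence $H(v)\le(1-q^{k_0})K_\alpha(v)\le(1-q^m)K_\alpha(v)$. For the slope bound it tests, at value $y$, the largest prefix with $s_k\le x$; you bound $\sum_{h\le k}f_hs_h\le F_ks_k$ where the paper expands the objective as $H(x)+(1+\alpha)(y-x)F(x)+\alpha(x-s_k)F(x)$, which is only a bookkeeping difference.
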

\begin{proof}
Let $k_*$ be the number of costs strictly below $v$. If it is zero,
\cref{hs:eq:cutoff-bound} is immediate. Otherwise put
$A_k=\sum_{h\le k}f_h(v-s_h)$ for $0\le k\le k_*$. Testing prefix $k$ gives
\[
 K_\alpha(v)\ge A_k+\alpha F_k(v-s_k)
 \ge A_k+\alpha f_k(v-s_k)
 =(1+\alpha)A_k-\alpha A_{k-1}.
\]
Thus $A_k\le(1-q)K_\alpha(v)+qA_{k-1}$. Starting with $A_0=0$, induction yields
$H(v)=A_{k_*}\le(1-q^{k_*})K_\alpha(v)\le(1-q^m)K_\alpha(v)$.
This also covers $\alpha=0$.

For \cref{hs:eq:cutoff-slope}, use the prefix through $s_k\le x$; if there is none,
$F(x)=0$. For a nonempty prefix the tested objective equals
\[
 H(x)+(1+\alpha)(y-x)F(x)+\alpha(x-s_k)F(x).
\]
The omitted terms are nonnegative. This proves the claim, including atoms at $x$.
\end{proof}

Each additional seller type contributes one step to the recurrence, which
explains the factor $q^m$. With no support-size bound we still obtain the
useful inequality $H(v)\le K_\alpha(v)$.

We now add the buyer's incentive constraint. For two buyer types, only the
high type can gain by reporting a lower value, so one information-rent term
suffices. The second cutoff inequality controls exactly this term.
\begin{lemma}[Binary-Buyer Bilateral Bound]\label{hs:lem:bilateral}
A buyer with at most two types and an independent seller with at most $m$ types
satisfy $V_\alpha\ge D_m(\alpha/(1+\alpha))^{-1}\FB$ for every $\alpha\ge0$.
\end{lemma}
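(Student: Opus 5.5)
The plan is to lower-bound $V_\alpha$ by explicit deterministic cutoff mechanisms and compare them with $\FB$ through \cref{hs:lem:cutoff}. Let the buyer have types $b_1<b_2$ with probabilities $g_1,g_2$; if she has one type, the rent term disappears and \cref{hs:eq:cutoff-bound} already gives $V_\alpha\ge K_\alpha(b_1)\ge(1-q^m)^{-1}H(b_1)\ge D_m(q)^{-1}\FB$, since $D_m(q)\ge1-q^m$. Write $\Delta=b_2-b_1$ and $A=1+\alpha$. By \cref{eq:fixed-cap-lagrangian} with $\sum_{h\le k}f_h\tau(s_h)=F_ks_k$, a rule that gives buyer type $b_\ell$ the seller prefix $k_\ell$ with $k_1\le k_2$ is monotone and has Lagrangian
\[
 g_1\Phi_{k_1}(b_1)+g_2\Phi_{k_2}(b_2)-\alpha g_2\Delta F_{k_1},
\]
where $\Phi_k(v)$ is the bracket in \cref{hs:eq:K}. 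The first-best benchmark is $\FB=g_1H(b_1)+g_2H(b_2)$.

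\textbf{Step 1 (two candidate rules).}
\begin{enumerate}[label=(\alph*),leftmargin=2em,itemsep=1pt]
\item Serve only the high type, at her optimal prefix. This gives $V_\alpha\ge g_2K_\alpha(b_2)$.
\item Serve both types at a common prefix $k$, namely the one attaining $K_\alpha(b_1)$. Since $\Phi_k(b_2)=\Phi_k(b_1)+A\Delta F_k$, this gives
\[
 V_\alpha\ge K_\alpha(b_1)+g_2\Delta F_k .
\]
Here $g_1+g_2=1$, the $A\Delta F_k$ gain is reduced by the rent $\alpha\Delta F_k$, and the net $\Delta F_k$ is nonnegative. If the high type prefers a larger prefix, that choice only helps.
\end{enumerate}
Since $V_\alpha$ is at least any convex combination $\lambda\cdot$(a)$+(1-\lambda)\cdot$(b), the task reduces to choosing $\lambda$.

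\textbf{Step 2 (bounding $\FB$ by these quantities).} Bound $H(b_1)\le(1-q^m)K_\alpha(b_1)$ by \cref{hs:eq:cutoff-bound}. For $H(b_2)$ I will use two bounds and interpolate between them. The first is $H(b_2)\le(1-q^m)K_\alpha(b_2)$. The second is
\[
 H(b_2)\le H(b_1)+\Delta F(b_2),
\]
and I will combine it with \cref{hs:eq:cutoff-slope}, which gives $K_\alpha(b_2)\ge A\Delta F(b_1)$, to control the mass of seller types between $b_1$ and $b_2$. The value-increase bound is what makes the factor $q(1-q)=\alpha/A^2$ appear. The high type's rent costs $\alpha\Delta F$ per unit of low-type mass, while option (a) earns $A$ per unit of the corresponding slope. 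Normalizing by $A$ twice produces $q(1-q)$, which is to be added to $1-q^m$ to give $D_m(q)$.

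\textbf{Step 3 (main obstacle).} The hard part is choosing $\lambda$, possibly depending on $g_2$, $F_k$ and $F(b_2)$, together with the case split on whether the low type's optimal prefix already includes all costs below $b_2$. The goal is the inequality
\[
 D_m(q)\bigl[\lambda g_2K_\alpha(b_2)+(1-\lambda)\bigl(K_\alpha(b_1)+g_2\Delta F_k\bigr)\bigr]\ge g_1H(b_1)+g_2H(b_2).
\]
I would first try $\lambda=q$. With that choice the option (b) contribution $(1-q)K_\alpha(b_1)$ covers the $q^m$ shortfall left by the recurrence, and the slope term from option (a) covers the rent. I would then verify the result as a linear inequality in $(K_\alpha(b_1),K_\alpha(b_2),\Delta F)$ whose worst case is the extremal configuration appearing in the tight instances. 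If $\lambda=q$ fails in some regime, I would instead optimize $\lambda$ separately in the two regimes $F(b_2)=F_k$ and $F(b_2)>F_k$.
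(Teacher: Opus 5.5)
\textbf{There is a genuine gap: the two candidate rules are not enough, so the Step~3 inequality is false for every choice of $\lambda$.} The left side of that inequality is linear in $\lambda$, so optimizing $\lambda$, even instance by instance, only yields $D_m(q)\max\{(\mathrm a),(\mathrm b)\}$. That quantity can fall strictly below $\FB$.

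Take $\alpha=1$, so $q=1/2$ and $D_2(1/2)=1$. Let the seller have cost $0$ with probability $1/10$ and cost $10$ with probability $9/10$. Let the buyer have $b_1=21/2$ with probability $99/100$ and $b_2=221/2$ with probability $1/100$, so $\Delta=100$. Then:
\begin{itemize}
\item The prefix values are $\Phi_1(v)=v/5$ and $\Phi_2(v)=2v-19$.
\item $K_\alpha(b_1)=21/10$, attained only by the bottom prefix, so $F_k=1/10$.
\item $K_\alpha(b_2)=202$, $H(b_1)=3/2$, $H(b_2)=203/2$, and $\FB=5/2$.
\end{itemize}
Your rules give (a) $=101/50$ and (b) $=21/10+1/10=11/5$. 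Both are below $5/2$, yet the lemma asserts $V_1\ge 5/2$.

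The failure has a clear cause. Rule (a) discards the low type entirely. The common prefix in (b) ties the high type to the bottom atom and forfeits her large surplus. Your remark that a larger high-type prefix ``only helps'' is not reflected in the quantity $K_\alpha(b_1)+g_2\Delta F_k$ that you carry into Step~3.

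\textbf{The missing idea is to decouple the two prefixes and let the low type's prefix respond to her rent-adjusted coefficient.} Your own Lagrangian formula already contains this. Set $z=b_1-q\frac{g_2}{g_1}\Delta$. Then
\[
g_1\Phi_{k_1}(b_1)-\alpha g_2\Delta F_{k_1}=g_1\Phi_{k_1}(z).
\]
Choose $k_1$ optimal for $z$ and $k_2$ optimal for $b_2$, taking the largest maximizer in each case. Since $z<b_2$, this gives $k_1\le k_2$, so the rule is monotone, and
\[
V_\alpha\ge g_1K_\alpha(z)+g_2K_\alpha(b_2).
\]
In the example above, this rule has the low type trade only at cost $0$ and the high type trade at both costs. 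It yields $3999/1000$.

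The comparison with $\FB$ then needs no case split and no interpolation involving $H(b_2)-H(b_1)$. Use the following five facts:
\begin{enumerate}
\item $H(b_1)\le H(z)+(b_1-z)F(b_1)$, by convexity of $H$.
\item $g_1(b_1-z)=qg_2\Delta$, by definition of $z$.
\item $H(z)\le(1-q^m)K_\alpha(z)$.
\item $\Delta F(b_1)\le(1-q)K_\alpha(b_2)$, by \cref{hs:eq:cutoff-slope}.
\item $H(b_2)\le(1-q^m)K_\alpha(b_2)$.
\end{enumerate}
Adding these gives
\[
\FB\le(1-q^m)g_1K_\alpha(z)+\bigl(1-q^m+q(1-q)\bigr)g_2K_\alpha(b_2)\le D_m(q)V_\alpha .
\]
This is exactly the paper's argument.
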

\begin{proof}[Proof of \cref{hs:lem:bilateral}]
A buyer with only one possible value $y$ has $V_\alpha\ge K_\alpha(y)$, and
\cref{hs:eq:cutoff-bound} suffices because $D_m(q)\ge1-q^m$.
Now write the buyer's values as $x<y$, with high probability $t\in(0,1)$. Set
\begin{equation}\label{hs:eq:z}
 z=x-q\frac{t}{1-t}(y-x).
\end{equation}
The combined buyer scores $b+\alpha\phi(b)$ are $(1+\alpha)z$ and $(1+\alpha)y$.
For each of these coefficients choose a maximizing seller prefix in
\cref{hs:eq:K}, breaking ties by the largest prefix. The chosen prefix is
nondecreasing in its argument: adding the two maximizing inequalities at
$v<v'$ gives $(v'-v)(F_{k'}-F_k)\ge0$. The two cutoff allocations are therefore
pointwise monotone in both reports and are implementable. Their Lagrangian is
\begin{equation}\label{hs:eq:two-cutoffs}
 V_\alpha\ge (1-t)K_\alpha(z)+tK_\alpha(y).
\end{equation}
The Lagrangian permits expected revenue of either sign, so the two cutoff
rules need not balance the budget separately. Budget balance will enter when
we minimize over $\alpha$.

Convexity of $H$, or a direct pointwise calculation of positive parts, gives
\[
 H(x)-H(z)\le(x-z)F(x).
\]
Since $(1-t)(x-z)=qt(y-x)$, \cref{hs:lem:cutoff} implies
\begin{align}
 (1-t)H(x)
 &\le(1-q^m)(1-t)K_\alpha(z)+qt(y-x)F(x)\notag\\
 &\le(1-q^m)(1-t)K_\alpha(z)+q(1-q)tK_\alpha(y).
 \label{hs:eq:binary-rent-bound}
\end{align}
Applying \cref{hs:eq:cutoff-bound} again at $y$ gives
\begin{align*}
 \FB&=(1-t)H(x)+tH(y)\\
 &\le(1-q^m)(1-t)K_\alpha(z)
       +\bigl(1-q^m+q(1-q)\bigr)tK_\alpha(y)\\
 &\le D_m(q)\bigl((1-t)K_\alpha(z)+tK_\alpha(y)\bigr)
 \le D_m(q)V_\alpha.
\end{align*}
Both cutoff values are nonnegative. No support interleaving, density, or
regularity condition was used.
\end{proof}

The bilateral bound holds for every choice of probabilities on the two
buyer values. It therefore has exactly the uniformity required by the
finite-support construction in \cref{thm:compatible-targets}.

\begin{proof}[Matching lower bound in \cref{thm:curve}]
Fix $\alpha$ and apply the seller regularization of
\cref{sec:global-ironing-v2} to each full seller distribution. This preserves
seller support sizes and leaves buyer supports unchanged. Every probability
distribution on a binary buyer support remains binary, and every seller prior conditioned below a cap
has at most $m$ types. Thus \cref{hs:lem:bilateral} supplies all
bilateral inequalities required by \cref{thm:compatible-targets}, with factor
$D_m(q)^{-1}$. Composition via \cref{eq:cap-target-bound}, followed by
\cref{eq:regularization-summary}, gives the same factor for the original
matching market.
\end{proof}

This application uses the reduction for fixed buyer supports; it does not
require a bound for buyers with arbitrarily many values.

\subsection{Tight Instances and the Resulting Constants}\label{app:tight-family}
To show that the guarantee cannot be improved, it suffices to construct
bilateral instances approaching it. We choose geometrically spaced
cumulative seller probabilities so that only one buyer--seller type pair
has positive Lagrangian weight. This makes the optimum explicit, while
approaching equality in both the seller recurrence and the buyer's
information-rent bound.

Fix $m\ge1$ and $\alpha>0$. Choose small positive rational parameters
$\eta,t<1$. Set
\begin{equation}\label{eq:family}
 \begin{gathered}
 F_k=\eta^{m-1-k}\quad(0\le k<m),\qquad
 f_0=F_0,\quad f_k=F_k-F_{k-1}\ (k\ge1),\\
 r_k=\left(\frac{\alpha\eta}{1+\alpha-\eta}\right)^k,
 \qquad s_k=1-r_k,\\
 \Pr[B=x]=1-t,\quad \Pr[B=1]=t,
 \qquad x=\frac{\alpha t}{1+\alpha-t}.
 \end{gathered}
\end{equation}
Choose $t$ small enough that $0<x<s_1$ when $m\ge2$; when $m=1$, the sole
seller cost is zero. The seller costs are strictly increasing and below one.
All input values already lie in $[0,1]$. In this construction, the low
buyer value cancels its information-rent term, and each positive seller cost
cancels the high buyer's Lagrangian coefficient. The next calculation
verifies these cancellations and computes the resulting ratio.

\begin{lemma}[A Tight Bilateral Family]\label{lem:family}
The low buyer's combined score $b+\alpha\phi(b)$ is zero. The lowest
seller's combined score $s+\alpha\tau(s)$ is zero, and every other seller
type has combined score $1+\alpha$. Thus
\begin{equation}\label{eq:family-V}
 V_\alpha=(1+\alpha)t f_0.
\end{equation}
Moreover,
\begin{equation}\label{eq:family-F}
 \frac{\FB}{tf_0}
 =\frac{\alpha(1-t)}{1+\alpha-t}
   +1+\alpha-\alpha
       \left(\frac{\alpha}{1+\alpha-\eta}\right)^{m-1}.
\end{equation}
As $\eta,t\downarrow0$, its ratio $V_\alpha/\FB$ converges to $D_m(q)^{-1}$.
\end{lemma}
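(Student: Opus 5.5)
The plan is to verify the three claims by direct computation. Throughout, index seller types by $k=0,\dots,m-1$, so that $s_0=1-r_0=0$ and $F_{m-1}=1$, and write
\[
\rho=\frac{\alpha\eta}{1+\alpha-\eta},\qquad \lambda=\frac{\rho}{\eta}=\frac{\alpha}{1+\alpha-\eta}.
\]

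\emph{Scores.} For the low buyer, \cref{eq:finite-virtual-scores} gives $\phi(x)=x-(1-x)t/(1-t)$. Then $x+\alpha\phi(x)=0$ is equivalent to $x(1+\alpha-t)=\alpha t$, which is the definition of $x$.

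For the sellers, $d_0=s_0=0$. For $k\ge1$ the geometric masses give
\[
\frac{F_{k-1}}{f_k}=\frac{\eta}{1-\eta},\qquad s_k-s_{k-1}=r_{k-1}(1-\rho).
\]
Hence
\[
d_k=(1+\alpha)(1-r_k)+\alpha\,\frac{\eta}{1-\eta}\,(1-\rho)\,r_{k-1}.
\]
This equals $1+\alpha$ exactly when $(1+\alpha)\rho=\alpha\eta(1-\rho)/(1-\eta)$. That identity follows from
\[
1-\rho=\frac{(1+\alpha)(1-\eta)}{1+\alpha-\eta}.
\]
The high buyer's score is $(1+\alpha)\cdot1$ because she is the top type.

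\emph{Value of $V_\alpha$.} Both score sequences are already nondecreasing: the buyer scores are $0$ and $1+\alpha$, and the seller scores are $0,1+\alpha,\dots,1+\alpha$. So ironing changes nothing. By \cref{lem:exact-ironed-optimizer}, $V_\alpha$ is the expected positive part of the score difference. The only positive difference is between the high buyer and the seller type $s_0$. This gives $V_\alpha=(1+\alpha)tf_0$, which is \cref{eq:family-V}.

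\emph{First best.} Since $x<s_1$, the low buyer trades only with the zero-cost seller, contributing $(1-t)f_0x$. After dividing by $tf_0$ this is the first term of \cref{eq:family-F}.

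The high buyer contributes $t\sum_k f_kr_k$. After dividing by $f_0=\eta^{m-1}$, this becomes
\[
1+(1-\eta)\sum_{k=1}^{m-1}\lambda^k
=1+(1-\eta)\,\lambda\,\frac{1-\lambda^{m-1}}{1-\lambda}.
\]
Using $1-\lambda=(1-\eta)/(1+\alpha-\eta)$, the last term simplifies to $\alpha(1-\lambda^{m-1})$. This gives the remaining terms of \cref{eq:family-F}.

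\emph{Limit.} As $\eta,t\downarrow0$, the first term tends to $\alpha/(1+\alpha)$ and $\lambda$ tends to $q$. Dividing $V_\alpha/(tf_0)=1+\alpha$ by the limit of \cref{eq:family-F}, and normalizing by $1+\alpha$, gives the ratio
\[
\Bigl[\frac{\alpha}{(1+\alpha)^2}+1-q\,q^{m-1}\Bigr]^{-1}=\bigl[q(1-q)+1-q^m\bigr]^{-1}=D_m(q)^{-1}.
\]

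The only delicate point is bookkeeping. One must confirm that the costs satisfy $0<x<s_1$ and are strictly increasing; for $m=1$ only the zero-cost seller remains. One must also check that the exchange of ironing for raw scores in \cref{lem:exact-ironed-optimizer} is legitimate here, which holds because the combined scores are monotone. Everything else is the geometric sum above.
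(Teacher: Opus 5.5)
Your proof is correct and follows the paper's argument. You verify the two cancellations (the low buyer's combined score is zero, and every positive seller cost has combined score $1+\alpha$), read off $V_\alpha=(1+\alpha)tf_0$ from the single positive-weight type pair, and compute $\FB=(1-t)f_0x+t\E[1-S]$. The only difference is cosmetic: the paper obtains $\E[1-S]=(1+\alpha)f_0-\alpha r_{m-1}$ by summing the combined seller scores via $\sum_{h\le k}f_h\tau(s_h)=F_ks_k$, whereas you sum the geometric series $\sum_k f_kr_k$ directly, and both routes give \cref{eq:family-F}.
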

\begin{proof}
The low buyer coefficient is
$[(1+\alpha-t)x-\alpha t]/(1-t)=0$.
For $k\ge1$ we have $F_{k-1}/F_k=\eta$ and
\[
 r_k=\frac{\alpha F_{k-1}}{(1+\alpha)F_k-F_{k-1}}r_{k-1}.
\]
Substitution in the discrete virtual-cost formula gives
$(1+\alpha)s_k+\alpha F_{k-1}(s_k-s_{k-1})/f_k=1+\alpha$.
Hence only the type pair $(B,S)=(1,0)$ has strictly positive Lagrangian
weight. Trading only at this pair is monotone in both reports, so it attains
the upper bound obtained by keeping every positive weight. This proves
\cref{eq:family-V}.

Summing the seller coefficients through the final type gives
\[
 \E[1-S]+\alpha(1-s_{m-1})=(1+\alpha)f_0.
\]
Also $(1-s_{m-1})/f_0=(\alpha/(1+\alpha-\eta))^{m-1}$.
The low buyer trades efficiently only with the lowest seller, so
$\FB=(1-t)f_0x+t\E[1-S]$, proving \cref{eq:family-F}.
Its limit is $(1+\alpha)D_m(q)$. This proves the claimed limiting ratio.
\end{proof}

We have now matched the Lagrangian lower bound for each fixed multiplier.
It remains to minimize over the multiplier to obtain the budget-balanced
ratios and evaluate the constants.

\begin{proof}[Completion of \cref{thm:curve}]
The family proves the matching upper bound because a single edge is allowed.
At $\alpha=0$, $V_0=\FB$. For irrational fixed $\alpha$, approximate the finitely
many support values in each fixed member of the family by rationals;
finite-dimensional objective continuity preserves the limiting ratio.

For $m\ge2$, $D_m'(q)=1-2q-mq^{m-1}$ and
$D_m''(q)=-2-m(m-1)q^{m-2}<0$. The derivative is positive at zero and negative
at $1/2$, so the unique maximizer $q_m\in(0,1/2)$ satisfies
\begin{equation}\label{eq:root}
 1-2q_m-mq_m^{m-1}=0.
\end{equation}
Set $q_1=0$. The case $m=1$ has
$D_1(q)=1-q^2$ and maximum one at zero. Budget duality gives the corresponding
lower bounds for $\SB$. Applying weak duality to the tight family at a
minimizing multiplier gives the reverse worst-case inequalities; no exchange
of a maximum with an infimum is required.

For $0<q<1$, $D_{m+1}(q)>D_m(q)$, so $\gamma_m$ decreases strictly. Also
\[
  \frac{5}{4}-2^{-m}=D_m(1/2)\le\max_qD_m(q)\le\frac{5}{4}.
\]
Thus $\gamma_m\downarrow4/5$. Pointwise in every finite $\alpha$, the infimum
over $m$ gives
\begin{equation}\label{eq:one-sided-curve}
 \beta_\infty(\alpha)=\frac{(1+\alpha)^2}{1+3\alpha+\alpha^2},
\end{equation}
whose minimum is $4/5$ at $\alpha=1$. For $m=2,3$, the maximizing $q$ is
$1/4,1/3$, giving $8/9,27/32$, respectively.
\end{proof}

\paragraph{Exact Optima for Finite Instances.}
The same construction also explains the gap between a finite numerical
example and the worst-case limit. Whenever an instance in \cref{eq:family}
has $V_\alpha<\FB$, its chosen $\alpha$ is already an optimal budget
multiplier. To see this, let $g_*=tf_0$ be the gains and expected revenue
from trading only at $(B,S)=(1,0)$, and write
$L=V_\alpha=(1+\alpha)g_*$. Every other type pair with positive surplus has
zero Lagrangian weight. The efficient rule therefore also has Lagrangian
value $L$, with expected revenue
\[
 -d=\frac{L-\FB}{\alpha}<0.
\]
Mixing these two rules gives every expected revenue in $[-d,g_*]$ while
keeping Lagrangian value $L$. In particular, the mixture with zero expected
revenue attains $\SB=L$ for each such finite instance.

For example, Schottm\"uller's reported binary grid minimum~\cite{Schott23}
uses buyer values $1/100,1$ with high-type probability $1/25$, and seller
costs $0,99/100$ with lowest-cost probability $1/25$. This is precisely
\cref{eq:family} with $m=2$, $\eta=t=1/25$, and $\alpha=8/25$. The formulas
above give $\FB=37/15625$ and $\SB=V_\alpha=33/15625$, hence the exact ratio
$33/37=0.891891\ldots$. This agrees with the reported $0.89189$; the smaller
worst-case ratio $8/9$ is approached as the parameters tend to zero.

\section{  {Conclusion} }\label{sec:further-consequences}\label{sec:conclusion}
The   reduction transfers bilateral  second-best bounds   without
loss, while the  binary-support   and MHR applications combine  the same
cap-monotone composition step  with different local bilateral arguments.
The remaining refinements and   the offering-mechanism application  are
independent consequences   rather than  ingredients of the two sharp bounds.
We summarize them before turning to open questions.

  {Keeping}  the seller probabilities in the cutoff recurrence gives sharp
bounds in terms of the probability of the lowest seller cost, both with a
seller support bound (\cref{thm:mass-finite}) and without one
(\cref{thm:bottom-mass}). Only that atom is restricted; the buyer's high type
may still be rare. When the bilateral bound varies across seller caps,
\cref{ct:cor:envelope} constructs cap-dependent targets and
\cref{ct:thm:global} combines them into a matching-market guarantee.
The local linear program in \cref{cor:local-coefficient} computes the best factor that holds
uniformly over the given buyer support and seller caps. These factors give
market-level guarantees through composition; the program does not compute
an entire market's second best.

Public additive match-specific values preserve the single-parameter payment
identities. In particular, the binary-support bounds require only that each
compatible edge have a binary endpoint, not that all binary agents lie on
the same side (\cref{thm:edgewise-premia}). The sharp constants remain those
of the corresponding bilateral classes.

 For specified offering mechanisms,  if $G_{\rm offer}$ is the expected GFT of
the uniform mixture of the generalized seller- and buyer-offering mechanisms,
\cref{lem:offering-comparison} proves $2G_{\rm offer}\ge V_1$.
Together with the finite-support   bounds  this gives the factors in
\cref{cor:offering-guarantees}. These are guarantees for those mechanisms;
the reduction alone does not preserve a supplied bilateral mechanism, and
we make no sharpness claim for these offering factors.

These results suggest three directions. First, what are the sharp factors when both sides
have more than two types? New bilateral bounds would immediately yield
matching-market bounds through the finite-support reduction. Second, can
bilateral value inequalities alone imply cap-monotone rules for
continuous distributions? This would require a continuous counterpart of
the finite minimax and smallest-mass construction, together with measurable
choices and valid limiting arguments. Resolving it would extend the
automatic construction to settings where we currently build the bilateral
rules directly. Third, which other matching-market mechanisms satisfy a
comparison $\lambda G\ge V_\alpha$? Searching systematically for such
inequalities would let us combine a growing collection of bilateral bounds
with guarantees for additional mechanisms, as generalized offering already
illustrates.

\paragraph{AI Disclosure.}
ChatGPT assisted in organizing the manuscript, developing and checking proof
arguments, and writing auxiliary validation programs. The human authors are responsible for
all mathematical claims, attribution, and the final submission.
\printbibliography[heading=bibintoc,title={References}]
\clearpage
\appendix
\crefalias{section}{appendix}
\crefalias{subsection}{appendix}
\crefalias{subsubsection}{appendix}
\section{Payment Identities, Budget Duality, and Regularization}\label{app:foundations}
The reduction uses two properties of the finite mechanism-design problem:
normalized payments express revenue as a linear function of the allocation,
and ironing permits exact optimization of the resulting Lagrangian.
We prove these properties here, including the budget-duality statement used
to convert bounds for every multiplier into a second-best guarantee. We
then give the seller regularization proof deferred from \cref{sec:global-ironing-v2}.

\subsection{Finite Payment Identities and Budget Duality}\label{sec:finite-calculus}
We use the supports and coefficients of \cref{sec:model}. First we derive
the maximum expected revenue compatible with a given allocation. The
single expected-budget constraint will then let us implement a second-best
optimum by mixing at most two Lagrangian maximizers.

\begin{proof}[Proof of \cref{lem:allocation-characterization}]
Put $\Delta b^\ell=b^{\ell+1}-b^\ell$. Fix a buyer and abbreviate $Q_\ell=Q(b^\ell)$.  The adjacent BIC
inequalities are equivalent to $Q_\ell\le Q_{\ell+1}$ and
\[
 \Delta b^\ell Q_\ell
 \le U_{\ell+1}-U_\ell
 \le \Delta b^\ell Q_{\ell+1},
\]
where $U_\ell=b^\ell Q_\ell-T_\ell$ is interim utility and $T_\ell$ is
interim net payment to the mechanism.  Hence every IIR implementation has
at least the utilities obtained from $U_1=0$ and the lower adjacent
envelope $U_{\ell+1}-U_\ell=\Delta b^\ell Q_\ell$.  The corresponding
revenue-maximizing interim payment is
\begin{equation}\label{eq:buyer-envelope-payment}
 T_\ell^{B}=b^\ell Q_\ell-
        \sum_{r<\ell}\Delta b^r Q_r.
\end{equation}
Monotonicity makes these adjacent inequalities telescope to every pair
of reports, so the payments are BIC and leave nonnegative truthful utility.

For a seller put $Y_k=Y(s^k)$ and
$U_k=-T_k-s^kY_k$.  Adjacent BIC is equivalent to
$Y_k\ge Y_{k+1}$ and
\[
 (s^{k+1}-s^k)Y_{k+1}
 \le U_k-U_{k+1}
 \le (s^{k+1}-s^k)Y_k.
\]
The net-payment-maximizing IIR choice has $U_m=0$ and uses the lower
adjacent envelope.  It is
\begin{equation}\label{eq:seller-envelope-payment}
 T_k^{S}=-s^kY_k-
   \sum_{r=k}^{m-1}(s^{r+1}-s^r)Y_{r+1}.
\end{equation}
Again, the adjacent inequalities telescope to all deviations. The resulting
utilities are nonnegative, and every other BIC, IIR payment rule leaves at
least these utilities at each type.

Taking expectations in \cref{eq:buyer-envelope-payment} and exchanging
the order of summation gives $\E[T^B]=\E[\phi(v)Q(v)]$.
Likewise, \cref{eq:seller-envelope-payment} gives
$\E[T^S]=-\E[\tau(c)Y(c)]$.  Summing over agents shows that
$\Lambda(x)$ is the largest expected net revenue compatible with BIC and
IIR for the allocation $x$.  Therefore ex-ante WBB is possible only if
$\Lambda(x)\ge0$, and the displayed normalized payments prove sufficiency.
\smallskip
\noindent\emph{Budget duality.}
The payment identities reduce budget feasibility to the single linear
inequality $\Lambda(x)\ge0$. We now prove the duality and two-allocation
implementation claims of the lemma.
The polytope $\cX$ has finitely many vertices. Write their gain--budget pairs
as $(G_r,L_r)$. Then $V_\alpha=\max_r(G_r+\alpha L_r)$ is a continuous
convex piecewise-affine function. The no-trade allocation ensures that its
largest eventual slope is nonnegative. If that slope is positive,
$V_\alpha\to\infty$; if it is zero, the function is eventually constant.
Thus a minimum is attained at a finite $\alpha^*\ge0$.

For a fixed multiplier, call a vertex active when its affine function
attains $V_\alpha$. The right derivative is the largest active slope,
and the left derivative is the smallest: all inactive lines have a
strict gap and remain inactive sufficiently near the given multiplier.
If $\alpha^*=0$, the right derivative is nonnegative, so an active vertex
has $L_r\ge0$ and $G_r=V_0$. If $\alpha^*>0$, the active slopes straddle
zero. An active zero-slope vertex is already feasible and has gains
$V_{\alpha^*}$. Otherwise select active slopes $L_-<0<L_+$ and mix their
vertices with weights $L_+/(L_+-L_-)$ and $-L_-/(L_+-L_-)$.
The mixture has budget zero and gains $V_{\alpha^*}$, since both active
lines equal that value at the same multiplier.

Finally, every $x\in\cX$ with $\Lambda(x)\ge0$ satisfies
$\GFT(x)\le\mathcal L_\alpha(x)\le V_\alpha$ for every $\alpha\ge0$.
This weak-duality inequality and the constructed feasible optimizer prove
both the identity and the implementation alternatives.
\end{proof}

\paragraph{Strong Budget Balance.}\label{app:sbb-conversion}
For independent finite types and unrestricted signed transfers, ex-ante WBB
and profilewise SBB give the same allocation benchmark. To see this, take
interim net payments $T_a$, with means $\mu_a$, and let $R=\sum_a\mu_a\ge0$.
For $n\ge2$ agents use
\[
 p_a(\theta)=T_a(\theta_a)-\frac{1}{n-1}\sum_{h\ne a}
                 \left(T_h(\theta_h)-\mu_h+\frac{R}n\right).
\]
Their sum is identically zero. Every report by agent $a$ has expected payment
$T_a(\theta_a)-R/n$, so BIC is unchanged and IIR weakly improves. Fewer than
two agents permit no trade. This conversion does not assert universal DSIC
or ex-post IR.

\subsection{Fixed-Multiplier Optimization}\label{app:optimizer}
\label{sec:transfer-proofs}

Having reduced the budget constraint to a multiplier, we now prove that
the ironed matching rule optimizes the resulting objective. The proof also
checks that every selected edge creates positive actual surplus.

For a fixed multiplier, abbreviate
$a_{i\ell}=a_i^\alpha(b_i^\ell)$ and $d_{jk}=d_j^\alpha(s_j^k)$.
Bars denote the corresponding ironed coefficients. Each block produced by   weighted ironing  has coefficient equal
to its probability-weighted mean.

We first bound these block coefficients by the actual values and costs.

\begin{lemma}
  \label{lem:ironing-endpoint-bounds}
 For every $\alpha\ge0$, every intermediate buyer block $I$ and seller block
$J$   produced by weighted ironing  satisfy
\[
 a_I\le(1+\alpha)\min_{\ell\in I}b_i^\ell,
 \qquad
 d_J\ge(1+\alpha)\max_{k\in J}s_j^k.
\]
In particular, $\bar a_i(b;\alpha)\le(1+\alpha)b$ and
$\bar d_j(s;\alpha)\ge(1+\alpha)s$ at every report.
\end{lemma}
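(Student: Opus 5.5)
The bound comes from computing the ironing-block means directly from the discrete virtual-value and virtual-cost formulas. Take the buyer side first. For a block $I=\{\ell_0,\dots,\ell_1\}$ of consecutive buyer types with weights $g_\ell$, the block coefficient is
\[
a_I=\frac{\sum_{\ell\in I}g_\ell\bigl(b^\ell+\alpha\phi(b^\ell)\bigr)}{\sum_{\ell\in I}g_\ell}.
\]
By \cref{eq:finite-virtual-scores}, $\sum_{\ell\in I}g_\ell\phi(b^\ell)$ telescopes. The identity $g_\ell\phi(b^\ell)=b^\ell G_{\ge\ell}-b^{\ell+1}G_{>\ell}$ holds for $\ell<n$, where $G_{\ge\ell}=\sum_{r\ge\ell}g_r$, and it also holds at $\ell=n$ if we set $G_{>n}=0$. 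Summing over $I$ gives
\[
\sum_{\ell\in I}g_\ell\phi(b^\ell)=b^{\ell_0}G_{\ge\ell_0}-b^{\ell_1+1}G_{>\ell_1},
\]
with the last term zero when $\ell_1=n$. We then bound this by $b^{\ell_0}\sum_{\ell\in I}g_\ell$, which is the minimum value on $I$ times the block mass. The step needed is $b^{\ell_0}G_{>\ell_1}\le b^{\ell_1+1}G_{>\ell_1}$, and it holds when $b^{\ell_0}\ge0$.

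This sign issue is the point I expect to be the main obstacle. If negative reports are allowed, the telescoped bound can fail. The fallback is to use the standard intermediate-block structure of ironing instead. Every block other than the top one has a successor block, and the pool-adjacent-violators procedure merges blocks only when their means violate monotonicity. So I would argue with prefix sums directly: by the defining property of the ironed curve as a concave hull, the weighted mean of $\phi$ over any terminal segment of a block is at least the block mean. With that, the bound on $a_I$ reduces to comparing the block mean against the raw score at its lowest element in the tail-sum form. I would first try to check that the paper's convention (the types after translation, or reports on a common lower bound) makes $b^{\ell_0}\ge0$ harmless. Failing that, I would rewrite the inequality as
\[
\sum_{\ell\in I}g_\ell(b^\ell-b^{\ell_0})\le\alpha\,(b^{\ell_1+1}-b^{\ell_0})\,G_{>\ell_1},
\]
which is sign-free, and check it against the ironing condition at the right boundary of the block.

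The rest of the proof goes as follows. Adding the actual-value part $\sum_{\ell\in I}g_\ell b^\ell$ to the telescoped $\phi$-sum, I need
\[
\sum_{\ell\in I} g_\ell b^\ell+\alpha\bigl(b^{\ell_0}G_{\ge\ell_0}-b^{\ell_1+1}G_{>\ell_1}\bigr)\le(1+\alpha)\,b^{\ell_0}\sum_{\ell\in I}g_\ell.
\]
This is not automatic, because $\sum_{\ell\in I} g_\ell b^\ell\ge b^{\ell_0}\sum_{\ell\in I}g_\ell$. So here the ironing property is genuinely used: a block is formed only because its upper part had low scores. For the seller side, the same telescoping with prefix masses $F_k$ gives $\sum_{k\in J}f_k\tau(s^k)=s^{k_1}F_{k_1}-s^{k_0-1}F_{k_0-1}$, and the mirrored argument gives $d_J\ge(1+\alpha)\max_{k\in J}s^k$. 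The pointwise statements then follow because each report lies in some block, with $\min_{\ell\in I}b^\ell\le b$ and $\max_{k\in J}s^k\ge s$.
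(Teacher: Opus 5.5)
There is a genuine gap: the step that actually needs the ironing structure is announced but never carried out, and the tools you propose for it point the wrong way.

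Your telescoping identities are correct. The ``sign issue'' you flag is not real: $b^{\ell_0}G_{>\ell_1}\le b^{\ell_1+1}G_{>\ell_1}$ holds for any real support, since $b^{\ell_0}<b^{\ell_1+1}$ and $G_{>\ell_1}\ge0$. But that computation only shows that the block mean of $\phi$ is at most $b^{\ell_0}$. This gives $a_I\le(\text{block average of }b)+\alpha b^{\ell_0}$, which is strictly weaker than the claim.

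As you note, the claim needs the ironing structure. It is false for arbitrary consecutive blocks: with support $\{0,1\}$, equal masses and $\alpha=1$, the scores are $-1$ and $2$, and the two-type block has mean $1/2>0=(1+\alpha)\min b$.

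The ingredient you propose for this step is backwards, and it concerns the wrong function. For nondecreasing ironing, the greatest convex minorant of the cumulative score is affine on a block and touches the cumulative curve at both block endpoints. The curve lies weakly above it in between. Hence \emph{initial} segments of a block have mean at least the block mean, and \emph{terminal} segments have mean at most the block mean. Also, the object being ironed is the combined score $b+\alpha\phi$, not $\phi$. Even your stated version, applied to the last element, would only give $a_I\le(1+\alpha)b^{\ell_1}$: the maximum, not the minimum. Likewise, the condition at the right boundary (that $I$ is not pooled with its successor) only bounds $a_I$ by the next block's mean, so it cannot yield the claim.

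The missing idea is to compare the block mean with the raw score of the block's lowest element.
\begin{itemize}
\item In the merge process, a left block of mean $u$ is pooled with a right block of mean $v\le u$, so the pooled mean is at most $u$.
\item Inductively, every intermediate buyer block has mean at most $a_{\ell_0}=b^{\ell_0}+\alpha\phi(b^{\ell_0})$. Equivalently, the minorant touches the cumulative curve at the block's left end and lies weakly below it after the first step, so $a_{\ell_0}\ge a_I$.
\item The correction term in the virtual-value formula is nonnegative, so $\phi\le b$. Hence $a_{\ell_0}\le(1+\alpha)b^{\ell_0}=(1+\alpha)\min_{\ell\in I}b^\ell$.
\item For sellers, the pooled mean is at least the right block's mean. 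Inductively it is at least $d_{k_1}$, and $d_{k_1}\ge(1+\alpha)s^{k_1}=(1+\alpha)\max_{k\in J}s^k$ because $\tau\ge s$.
\end{itemize}
This is the paper's induction, inheriting the bound from the left buyer block and from the right seller block. Your telescoped sums are not needed for it. Your final step deriving the pointwise bounds is fine.
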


\begin{proof}
The singleton bounds follow from $\phi_i(b)\le b$ and $\tau_j(s)\ge s$.
When adjacent blocks with means $u\ge v$ are pooled, their positive-weight
mean $w$ lies in $[v,u]$.  For a buyer, inherit the upper bound from the
left block: its smallest value is also the smallest value of the union and
$w\le u$.  For a seller, inherit the lower bound from the right block:
its largest cost is the largest cost of the union and $w\ge v$.
Induction proves both invariants, including equal-mean merges and $\alpha=0$.
\end{proof}

The block bounds will ensure positive surplus. To prove optimality, we
first show that ironing can only increase the objective of a monotone
allocation, then verify that the canonical matching rule attains equality.

\begin{proof}[Proof of \cref{lem:exact-ironed-optimizer}]
Fix any $x\in\cX$.  For buyer $i$, let $Q_{i\ell}$ be the interim service
probability at $b_i^\ell$.  It is nondecreasing in $\ell$.    {Consider one ironing
 merge of adjacent blocks }  $A,B$ having masses $p_A,p_B$, coefficient means
$a_A\ge a_B$, and weighted-average services $Q_A\le Q_B$.  Replacing the two
coefficient means by their common weighted average changes the buyer contribution
by
$\frac{p_Ap_B}{p_A+p_B}(a_A-a_B)(Q_B-Q_A)\ge0$.

Iterating over all merges gives
  \begin{equation}
 \sum_\ell g_{i\ell}a_{i\ell}Q_{i\ell}
 \le
 \sum_\ell g_{i\ell}\bar a_{i\ell}Q_{i\ell}.
 \label{eq:buyer-ironing-global}
\end{equation}

For seller $j$, the interim service probabilities $Y_{jk}$ are
nonincreasing.  The same two-block calculation, now remembering the minus
sign on the seller coefficient, gives
  \begin{equation}
 -\sum_k f_{jk}d_{jk}Y_{jk}
 \le
 -\sum_k f_{jk}\bar d_{jk}Y_{jk}.
 \label{eq:seller-ironing-global}
\end{equation}
Consequently,
\begin{align}
 \mathcal L_\alpha(x)
 &\le
 \E\!\left[\sum_{(i,j)\in E}
   (\bar a_i(v_i;\alpha)-\bar d_j(c_j;\alpha))x_{ij}\right]\notag\\
 &\le
 \E\!\left[\max_{M\in\cF}
   \sum_{(i,j)\in M}
   (\bar a_i(v_i;\alpha)-\bar d_j(c_j;\alpha))\right].
 \label{eq:ironed-oracle-upper}
\end{align}
The oracle rule attains the second inequality profilewise.

We next verify implementability.  Fix all reports except buyer $i$'s and write
$q=\bar a_i(v_i;\alpha)$.  Every feasible matching serving $i$ has weight
$q+C_M$, while every matching not serving $i$ has weight independent of $q$.
Among the served matchings, the maximum intercept, minimum cardinality, and
fixed-order choice are all independent of $q$.  Hence the canonical outcome
switches from not serving $i$ to serving $i$ at most once as $q$ rises.
Because $q$ is nondecreasing in $v_i$, buyer service is nondecreasing.
For a seller the same argument has parameter $q=\bar d_j(c_j;\alpha)$ and
slope $-1$, so service is nonincreasing.  Thus $X^\alpha\in\cX$.

Finally, all reports in one buyer ironing block induce the same incident
edge-weight vector.  Since the oracle's tie rule is report independent, its
outcome, and hence that buyer's interim allocation, is constant on the block.
The analogous statement holds for every seller block.  Therefore equality
holds in both   \cref{eq:buyer-ironing-global} and
\cref{eq:seller-ironing-global}  for $X^\alpha$.  It attains the upper bound in
\cref{eq:ironed-oracle-upper}, proving optimality.

If a selected edge had negative ironed weight, deleting it would improve
the objective; if its weight were zero, deletion would preserve the objective
and reduce cardinality.  Both contradict the oracle rule and downward
closure.  Thus every selected $(i,j)$ has
$0<\bar a_i(b;\alpha)-\bar d_j(s;\alpha) \le(1+\alpha)(b-s),$

by   \cref{lem:ironing-endpoint-bounds}, proving $b>s$.

Weighted   ironing  uses linearly many merges.   Evaluation of the resulting
rule  then uses one matching call.
\end{proof}

\subsection{Seller Regularization}\label{app:regularization-proof}
We prove the regularization statement used in \cref{sec:global-ironing-v2}.
The construction recovers actual seller costs from ironed scores; the
four lemmas below establish that these costs form a valid prior, improve
the first-best benchmark, and preserve the Lagrangian optimum.

Fix one seller with costs $s_1<\cdots<s_m$ and masses $f_k$.  Put
\[
 F_k=\sum_{h\le k}f_h,
 \qquad C_k=\sum_{h\le k}f_hs_h,
 \qquad \Phi_k=C_k+\alpha F_ks_k,
\]
with $F_0=C_0=\Phi_0=0$ and $d_k^\alpha=s_k+\alpha\tau(s_k)$. By direct subtraction,
\begin{equation}\label{eq:phi-slopes-v2}
 \frac{\Phi_k-\Phi_{k-1}}{f_k}=d_k^\alpha.
\end{equation}
Join the points $(F_k,\Phi_k)$ linearly, and let
$\overline\Phi$ be their greatest convex minorant, the pointwise largest
convex function below this polygonal curve. It touches both endpoints.
Define its slopes
$\overline d_k= \frac{\overline\Phi(F_k)-\overline\Phi(F_{k-1})}{f_k},$
which are nondecreasing. These slopes are exactly the   ironed  scores:
replacing a decreasing block of slopes by its weighted mean replaces that
part of the curve by a chord, and the completed merges give this convex minorant.

Keep the probabilities $f_k$, but define reconstructed costs
$\widehat s_k$ recursively by
\begin{equation}\label{eq:reconstructed-cost-recursion-v2}
 \widehat C_k+\alpha F_k\widehat s_k
 =\overline\Phi(F_k),
 \qquad
 \widehat C_k=\sum_{h\le k}f_h\widehat s_h.
\end{equation}
The coefficient of $\widehat s_k$ is
$f_k+\alpha F_k>0$, so the recursion is well defined.

The recursion must define a genuine ordered prior, rather than merely a list of convenient coefficients. Its consecutive differences verify this.

\begin{lemma}
\label{lem:pseudo-valid-v2}
For $\alpha>0$, the sequence $(\widehat s_k)$ is strictly increasing,
and its combined seller scores are exactly $(\overline d_k)$.
Hence the ironed seller distribution is $\alpha$-weakly regular.  For
$\alpha=0$ no transformation is needed: $d_k^0=s_k$ is already
increasing.
\end{lemma}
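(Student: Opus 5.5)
The plan is to work directly with the recursion \eqref{eq:reconstructed-cost-recursion-v2} and compute the combined scores of the reconstructed prior. First, observe that the identity \eqref{eq:phi-slopes-v2} holds for any ordered prior with masses $f_k$: defining $\widehat\Phi_k=\widehat C_k+\alpha F_k\widehat s_k$, direct subtraction gives
\[
 \frac{\widehat\Phi_k-\widehat\Phi_{k-1}}{f_k}=\widehat s_k+\alpha\widehat\tau(\widehat s_k),
\]
where $\widehat\tau$ is the discrete virtual cost of the reconstructed prior. This uses only the formula $\sum_{h\le k}f_h\tau(s_h)=F_ks_k$. For that formula to be valid, however, the reconstructed costs must be strictly increasing, because the virtual-cost formula \eqref{eq:finite-virtual-scores} presumes ordered support points. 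By construction $\widehat\Phi_k=\overline\Phi(F_k)$. So once monotonicity is established, the combined scores are exactly $\overline d_k$. These are nondecreasing because $\overline\Phi$ is convex, which is precisely $\alpha$-weak regularity.

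The main step is strict increase of $\widehat s_k$. I would subtract consecutive instances of the recursion:
\[
 f_k\widehat s_k+\alpha F_k\widehat s_k-\alpha F_{k-1}\widehat s_{k-1}=f_k\overline d_k,
\]
which rearranges to
\[
 (f_k+\alpha F_k)(\widehat s_k-\widehat s_{k-1})=f_k(\overline d_k-\widehat s_{k-1}).
\]
So it suffices to show $\overline d_k>\widehat s_{k-1}$ for $k\ge2$. From the recursion at index $k-1$, one can write
\[
 \overline\Phi(F_{k-1})=\widehat C_{k-1}+\alpha F_{k-1}\widehat s_{k-1}=\sum_{h\le k-1}f_h\overline d_h .
\]
The plan is an induction showing $\overline d_{k-1}\ge(1+\alpha)\widehat s_{k-1}$, or at least $\overline d_{k-1}\ge\widehat s_{k-1}$. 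The subtracted identity at $k-1$ gives
\[
 \overline d_{k-1}-\widehat s_{k-1}=\alpha\bigl(F_{k-1}\widehat s_{k-1}-F_{k-2}\widehat s_{k-2}\bigr)/f_{k-1}=\alpha\bigl(\widehat s_{k-1}+F_{k-2}(\widehat s_{k-1}-\widehat s_{k-2})/f_{k-1}\bigr),
\]
and this is positive provided $\widehat s_{k-1}>\widehat s_{k-2}$ and $\widehat s_{k-1}>0$. Positivity of the costs is not given, since costs may be negative. So I would instead argue directly that $\overline d_k-\widehat s_{k-1}\ge\overline d_{k-1}-\widehat s_{k-1}=\alpha\widehat\tau$-rent term$\,\ge\alpha F_{k-2}(\widehat s_{k-1}-\widehat s_{k-2})/f_{k-1}+\alpha\widehat s_{k-1}$. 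This runs into the same sign issue.

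The cleaner route, which I expect to be the real obstacle, is translation invariance. Shifting all original costs by a constant $c$ shifts $d_k$ and $\overline d_k$ by $(1+\alpha)c$, and the recursion then shifts every $\widehat s_k$ by $c$. This is checked from linearity, since $\overline\Phi$ shifts by $(1+\alpha)cF$. Hence I may assume all costs are positive, and moreover that $\overline d_1=(1+\alpha)s_1>0$. The induction then carries two hypotheses: $\widehat s_{k-1}>0$, and $\overline d_{k-1}-(1+\alpha)\widehat s_{k-1}\ge0$, the latter being $\alpha F_{k-2}(\widehat s_{k-1}-\widehat s_{k-2})/f_{k-1}\ge0$. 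These give
\[
 \overline d_k\ge\overline d_{k-1}\ge(1+\alpha)\widehat s_{k-1}>\widehat s_{k-1},
\]
using $\alpha>0$. This yields strict increase, so $\widehat s_k>\widehat s_{k-1}>0$ and the induction closes. For $\alpha=0$, $\Phi_k=C_k$ is already convex with slopes $s_k$, so no transformation is needed.
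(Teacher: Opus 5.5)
There is a genuine gap, and it starts with an algebra slip in your key rearrangement. Subtracting consecutive recursions gives $f_k\widehat s_k+\alpha F_k\widehat s_k-\alpha F_{k-1}\widehat s_{k-1}=f_k\overline d_k$, as you wrote. Since $\alpha F_k\widehat s_{k-1}-\alpha F_{k-1}\widehat s_{k-1}=\alpha f_k\widehat s_{k-1}$, this rearranges to
\[
 (f_k+\alpha F_k)(\widehat s_k-\widehat s_{k-1})=f_k\bigl(\overline d_k-(1+\alpha)\widehat s_{k-1}\bigr),
\]
not to $f_k(\overline d_k-\widehat s_{k-1})$. As a check, take $\alpha=1$, $f_1=f_2=1/2$, $s_1=1$, $s_2=2$. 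This seller is already regular, so $\widehat s=s$ and $\overline d_2=5$, and your version reads $3/2=2$. Strict increase is therefore equivalent to $\overline d_k>(1+\alpha)\widehat s_{k-1}$. Your final chain $\overline d_k\ge\overline d_{k-1}\ge(1+\alpha)\widehat s_{k-1}$ then delivers only $\widehat s_k\ge\widehat s_{k-1}$. Weak monotonicity is not the lemma, and, as you observed yourself, the virtual-cost identity you rely on needs distinct ordered support points.

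Two ingredients are missing.

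\emph{The case $k=2$.} Here the rent term vanishes because $F_0=0$, so $\overline d_1=(1+\alpha)\widehat s_1$ exactly. Strictness must therefore come from $\overline d_2>\overline d_1$, a property of the convex minorant that your argument never uses. You also assume $\overline d_1=(1+\alpha)s_1$ without proof. Both facts follow from $d_1^\alpha=(1+\alpha)s_1<(1+\alpha)s_h\le d_h^\alpha$ for $h>1$. The line $u\mapsto d_1^\alpha u$ lies below the polygon and meets it at $F_1$, so $\overline\Phi$ touches $(F_1,\Phi_1)$. This gives $\overline d_1=d_1^\alpha$ and $\widehat s_1=s_1$. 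Moreover, every later hull slope is a weighted mean of original slopes strictly above $d_1^\alpha$.

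\emph{The case $k\ge3$.} Carry the strict hypothesis $\widehat s_{k-1}>\widehat s_{k-2}$ in the induction. Your own identity $\overline d_{k-1}-(1+\alpha)\widehat s_{k-1}=\alpha F_{k-2}(\widehat s_{k-1}-\widehat s_{k-2})/f_{k-1}$ is then strictly positive, because $\alpha>0$ and $F_{k-2}>0$. The inequality $\overline d_k\ge\overline d_{k-1}$ passes this strict inequality on.

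This is exactly the paper's argument. With the correct identity the sign of the costs never enters. Your translation step is valid, but it and the positivity hypothesis only served to patch the incorrect criterion $\overline d_k>\widehat s_{k-1}$, and both can be dropped.
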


\begin{proof}
Subtract consecutive equations in
\cref{eq:reconstructed-cost-recursion-v2}:
\begin{equation}\label{eq:pseudo-slope-v2}
 \overline d_k=(1+\alpha)\widehat s_k
 +\alpha\frac{F_{k-1}}{f_k}
       (\widehat s_k-\widehat s_{k-1}),
\end{equation}
with the final term absent at $k=1$.  These are exactly the discrete
combined seller scores of the ironed seller distribution.

It remains to prove strict increase.  The first original slope is
$d_1^\alpha=(1+\alpha)s_1$, while
$d_h^\alpha\ge(1+\alpha)s_h>d_1^\alpha$ for every $h>1$.
Consequently every chord from the origin to a later breakpoint has
slope strictly larger than $d_1^\alpha$; the greatest convex minorant
therefore touches the first breakpoint and
$\overline d_1=d_1^\alpha$.  Hence $\widehat s_1=s_1$.
For $k=2$, every block made from later original slopes has mean strictly above
$d_1^\alpha$; hence $\overline d_2>\overline d_1=(1+\alpha)\widehat s_1$.
For $k>2$, if $\widehat s_{k-1}>\widehat s_{k-2}$, then
\cref{eq:pseudo-slope-v2} gives
$\overline d_{k-1}>(1+\alpha)\widehat s_{k-1}$; monotonicity of the hull
slopes yields the same strict inequality with $\overline d_k$ in place
of $\overline d_{k-1}$.  Solving \cref{eq:pseudo-slope-v2} for the
increment gives
\[
 \widehat s_k-\widehat s_{k-1}
 =\frac{f_k\bigl(\overline d_k-(1+\alpha)
        \widehat s_{k-1}\bigr)}{f_k(1+\alpha)+\alpha F_{k-1}}>0.
\]
Induction completes the proof.
\end{proof}

Because both $(s_k)$ and $(\widehat s_k)$ are strictly increasing in the
same index, the original and reconstructed-cost markets have the same indexed
feasibility and interim-monotonicity polytope.  Only the seller coefficients in the
Lagrangian change.

To compare first-best gains, we use cumulative costs: the cost averaged
over every initial probability interval can only fall. Write $s(u)$ and
$\widehat s(u)$ for the quantile functions, equal to $s_k$ and
$\widehat s_k$ on $(F_{k-1},F_k]$, respectively.

\begin{lemma}\label{lem:tail-order-v2}
For every quantile $q\in[0,1]$,
\[
 \int_0^q\widehat s(u)\dd u
 \le \int_0^qs(u)\dd u.
\]
Consequently, for every  $\kappa$,
\begin{equation}\label{eq:put-order-v2}
 \E[(\kappa-\widehat S)_+]
 \ge\E[(\kappa-S)_+].
\end{equation}
\end{lemma}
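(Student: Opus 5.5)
We compare the cumulative cost curves at the breakpoints $F_k$ and then extend to all quantiles by piecewise linearity. By definition $\widehat C_k=\int_0^{F_k}\widehat s(u)\dd u$ and $C_k=\int_0^{F_k}s(u)\dd u$. Both functions of $q$ are piecewise linear with the same breakpoints $F_k$. Hence it suffices to prove $\widehat C_k\le C_k$ for every $k$; linear interpolation between consecutive breakpoints then gives the inequality for every $q\in[0,1]$.

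\textbf{Breakpoint comparison.} By \cref{eq:reconstructed-cost-recursion-v2},
\[
 \widehat C_k=\overline\Phi(F_k)-\alpha F_k\widehat s_k,
 \qquad C_k=\Phi_k-\alpha F_ks_k,
\]
and $\overline\Phi(F_k)\le\Phi_k$ because the convex minorant lies below the polygon. It therefore remains to show $\widehat s_k\ge s_k$ wherever $\overline\Phi(F_k)<\Phi_k$. At touching points we would like $\widehat s_k=s_k$ and $\widehat C_k=C_k$.

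I would not attempt this directly. Instead I would argue block by block over the ironing intervals. On an untouched stretch between touching points $F_a<F_b$, the hull is a chord with constant slope $\bar d$. By \cref{eq:pseudo-slope-v2}, $\widehat s$ then solves a linear recursion, $\widehat s_k-\widehat s_{k-1}=f_k(\bar d-(1+\alpha)\widehat s_{k-1})/(f_k(1+\alpha)+\alpha F_{k-1})$. The original $s$ satisfies the same recursion with $d_k$ in place of $\bar d$.

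A cleaner route is to work with $G_k=C_k$ as the unknown. Since $s_k=(C_k-C_{k-1})/f_k$, we have $\Phi_k=C_k+\alpha F_k(C_k-C_{k-1})/f_k$, so $\Phi_k$ is an increasing affine function of $C_k$ given $C_{k-1}$. The same holds for the hats with the same coefficients. The key step is an induction on $k$. If $\widehat C_{k-1}\le C_{k-1}$, then
\[
 \widehat C_k\Bigl(1+\tfrac{\alpha F_k}{f_k}\Bigr)=\overline\Phi(F_k)+\tfrac{\alpha F_k}{f_k}\widehat C_{k-1}\le\Phi_k+\tfrac{\alpha F_k}{f_k}C_{k-1}=C_k\Bigl(1+\tfrac{\alpha F_k}{f_k}\Bigr).
\]
This gives $\widehat C_k\le C_k$ directly, using only $\overline\Phi\le\Phi$ and positivity of the coefficients. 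The base case is $C_0=\widehat C_0=0$. This induction is the main step, and it appears painless; the only check needed is that the recursion is indeed affine with positive coefficient $\alpha F_k/f_k$ on $C_{k-1}$, which holds because $f_k>0$. The case $\alpha=0$ is trivial since the costs are unchanged.

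\textbf{Consequence \cref{eq:put-order-v2}.} Write
\[
 \E[(\kappa-S)_+]=\sup_{q\in[0,1]}\Bigl(\kappa q-\int_0^qs(u)\dd u\Bigr).
\]
This holds because $s$ is nondecreasing, so the supremum is attained at $q=\Pr[S<\kappa]$, or at any $q$ with $s$ crossing $\kappa$. The same formula holds for $\widehat S$, since $\widehat s$ is nondecreasing by \cref{lem:pseudo-valid-v2}. Each term in the hatted supremum dominates the corresponding unhatted term by the first part, so the suprema are ordered as well. The only care needed is this variational formula, which is standard: the integrand $\kappa-s(u)$ is nonincreasing in $u$, so integrating over its positive part maximizes the integral.
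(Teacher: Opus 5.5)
Your induction $\widehat C_k(f_k+\alpha F_k)=f_k\overline\Phi(F_k)+\alpha F_k\widehat C_{k-1}\le f_k\Phi_k+\alpha F_kC_{k-1}=C_k(f_k+\alpha F_k)$ is the paper's recursion $D_k=(f_kg_k+\alpha F_kD_{k-1})/(f_k+\alpha F_k)\ge0$, with $D_k=C_k-\widehat C_k$ and $g_k=\Phi_k-\overline\Phi(F_k)$; you then use the same interpolation between breakpoints and the same variational formula for $\E[(\kappa-S)_+]$, so your proof is correct and essentially the paper's. You should delete the abandoned detour, because the facts it hoped for are false in general: with $\alpha=1$, equal masses and costs $0,1,1.1$, one gets $\widehat s_2=0.9<s_2$ at $F_2$, where the hull lies strictly below the polygon, and $\widehat C_3=0.675<0.7=C_3$ at the touching point $F_3=1$. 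Nothing in your final argument relies on those claims.
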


\begin{proof}
At a breakpoint let
\[
 D_k=C_k-\widehat C_k,
 \qquad
 g_k=\Phi_k-\overline\Phi(F_k)\ge0.
\]
Using \cref{eq:reconstructed-cost-recursion-v2} and
$s_k-\widehat s_k=(D_k-D_{k-1})/f_k$,
\[
 g_k=D_k+\alpha F_k\frac{D_k-D_{k-1}}{f_k},
\]
so
\begin{equation}\label{eq:D-recursion-v2}
 D_k=\frac{f_kg_k+\alpha F_kD_{k-1}}{f_k+\alpha F_k}\ge0.
\end{equation}
Linear interpolation proves the cumulative inequality at every
quantile. For any nondecreasing quantile function $a$,
\[
 \int_0^1(\kappa-a(q))_+\dd q
 =\max_{0\le q\le1}
   \left\{\kappa q-\int_0^qa(u)\dd u\right\}.
\]
The integrand $\kappa-a(u)$ is nonincreasing, so the part where it is
positive is an initial interval. Integrating to the end of that interval
proves the displayed maximum identity; a flat interval at equality adds
zero, and the empty and full intervals cover both endpoint cases.
The cumulative inequality gives \cref{eq:put-order-v2}.
\end{proof}

After other types are fixed, first-best gains are a constant plus a positive-part term $(\kappa-c)_+$ in one seller's cost.
The expectation comparison in \cref{lem:tail-order-v2} therefore applies directly.

\begin{lemma}
\label{lem:fb-rises-v2}
Replacing one seller by its ironed seller distribution weakly increases
first-best gains under every downward-closed family of matchings.
\end{lemma}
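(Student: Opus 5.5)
We fix every report other than the replaced seller's cost $c$ and write the canonical first-best value as a function of $c$ alone. If no feasible matching in $\mathcal F$ contains an edge at this seller, her prior is irrelevant to first-best gains and there is nothing to show. Otherwise we let $A$ be the maximum weight over feasible matchings that leave this seller unmatched. We let $B-c$ be the maximum over those that match her. This is legitimate because every matching serving her has weight equal to $-c$ plus an intercept independent of $c$, and the best intercept $B$ is attained over the finitely many matchings. Then the first-best value at cost $c$ equals $\max\{A,B-c\}=A+(B-A-c)_+$. The quantities $A$ and $B$ depend only on the other agents' reports, and by independence these are distributed identically in the original and replaced markets.

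We then take the expectation over $c$ with the residual reports held fixed. This gives $A+\E[(\kappa-S)_+]$ with $\kappa=B-A$ in the original market and $A+\E[(\kappa-\widehat S)_+]$ after replacement. By \cref{eq:put-order-v2} of \cref{lem:tail-order-v2}, which holds for every real $\kappa$, the second expression is at least the first. We then average over the residual reports; independence lets the seller's marginal be integrated conditionally. The result is that first-best gains weakly increase.

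The one point needing care is that downward closure is not used in this decomposition: the maximum over all of $\mathcal F$ always splits into matchings that serve the seller and those that do not. Downward closure only guarantees the set is nonempty, since the empty matching lies in $\mathcal F$, so $A$ is well defined. A seller whose edges appear in no feasible matching is the trivial case above.

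Iterating one seller at a time then gives $\FB(\mathcal I)\le\FB(\widehat{\mathcal I}^\alpha)$ in \cref{prop:regularization-main}. At each step the other sellers' priors, whether original or already replaced, are independent, so the argument applies unchanged. I expect no real obstacle here. The substantive work, the cumulative-cost comparison, is already done in \cref{lem:tail-order-v2}.
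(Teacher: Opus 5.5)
Your proof is correct and follows the paper's argument essentially step for step. You fix the other types, write conditional first best as $\max\{A,B-c\}=A+(\kappa-c)_+$ with $\kappa=B-A$, apply \cref{eq:put-order-v2} conditionally, average over the independent residual profile, and replace sellers one at a time. Your remark that downward closure is needed only to put the empty matching in $\mathcal F$ (so that $A$ is well defined) is accurate, and it is slightly more explicit than the paper's proof.
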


\begin{proof}
Fix all other types.  The best matching not using this seller has a
constant value $A$.  If no feasible matching uses the seller, the
replacement has no effect.  Otherwise every matching using the seller
contains it once, so the best such value is $B-c$ for a finite
constant $B$.  The conditional first-best value is
\[
 \max\{A,B-c\}=A+(\kappa-c)_+,
 \qquad \kappa=B-A.
\]
For every realization of the other agents' types, $A$ and $B$ are constants
with respect to this seller's cost, so \cref{eq:put-order-v2} applies
conditionally.  Average over the other agents.  Sellers may then be replaced
one at a time; each replacement changes only one marginal and preserves
independence of the product prior.
\end{proof}

Unlike the benchmark, the optimized Lagrangian is unchanged: the two markets have identical ironed edge weights at each coupled type-index profile.

\begin{lemma}
\label{lem:value-preserved-v2}
Let $F$ be the original profile of priors and $\widehat F^\alpha$ the profile
obtained by reconstructed-cost replacement for every seller.  Then
\[
  V_\alpha(F)=V_\alpha(\widehat F^\alpha).
\]
\end{lemma}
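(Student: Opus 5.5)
The plan is to compare both Lagrangian values with one common quantity: the expected optimal ironed matching weight. By \cref{lem:exact-ironed-optimizer} and the upper bound \eqref{eq:ironed-oracle-upper}, for any product prior we have
\[
 V_\alpha=\E\Bigl[\max_{M\in\cF}\sum_{(i,j)\in M}\bigl(\bar a_i^\alpha(v_i)-\bar d_j^\alpha(c_j)\bigr)\Bigr].
\]
The rule $X^\alpha$ attains this value. The expectation is over the product of type-index distributions, because the ironed coefficients are functions of the type index. So the whole task is to show that both markets induce the same ironed coefficient at every index, and that the index distributions coincide.

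The first step concerns the buyers. Buyers are not modified, so their supports, masses, and ironed scores $\bar a_i^\alpha$ are identical in the two markets.

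The second step concerns the sellers. For each seller, the reconstructed prior keeps the masses $f_k$ and the index order, by \cref{lem:pseudo-valid-v2}, so the index distribution is unchanged. By \cref{lem:pseudo-valid-v2}, the combined scores of the reconstructed seller are $\widehat d_k=\overline d_k$, the original ironed scores. Since these are already nondecreasing, ironing the reconstructed seller leaves them unchanged: weighted ironing only merges adjacent blocks whose means violate monotonicity. Hence the ironed seller score in $\widehat F^\alpha$ at index $k$ equals $\overline d_k$, which is the ironed score in $F$ at index $k$.

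I would make this precise by noting that the ironed scores are the slopes of the greatest convex minorant of the points $(F_k,\Phi_k)$. For the reconstructed seller, the analogous points are $(F_k,\widehat C_k+\alpha F_k\widehat s_k)=(F_k,\overline\Phi(F_k))$ by \eqref{eq:reconstructed-cost-recursion-v2}. These points already lie on a convex curve, so their minorant is $\overline\Phi$ itself, and its slopes are $\overline d_k$.

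The third step couples the two markets by type indices. At every index profile the edge weights $\bar a_i^\alpha-\bar d_j^\alpha$ coincide, the feasibility family $\cF$ is the same, and the profile probabilities are the same. Therefore the expected maximum ironed weight is the same, and the two $V_\alpha$ coincide. The canonical matching, with its minimum-cardinality and fixed-order tie rule, also coincides at each index profile, which gives the statement in \cref{prop:regularization-main}.

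The main obstacle is checking that the characterization of $V_\alpha$ as expected maximum ironed weight really depends only on indices and ironed scores. It must not depend on the actual support values except through the scores. This holds because $\mathcal L_\alpha(x)$ is linear in the interim allocation, with coefficients $a$ and $d$. The polytope $\cX$ is determined by the index orders, which are preserved. The ironing upper bound and its attainment use only masses and coefficients.

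A secondary point is $\alpha=0$, where no transformation is applied and the claim is trivial.
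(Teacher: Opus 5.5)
Your proposal is correct and takes essentially the same route as the paper: couple the two markets by type indices, note that buyer scores are unchanged and that each reconstructed seller's combined scores are already nondecreasing and equal the original ironed scores, and conclude that the expected maximum ironed matching weight is identical in both markets (this quantity is $V_\alpha$ by \cref{lem:exact-ironed-optimizer}). Your convex-minorant check, that the points $(F_k,\overline\Phi(F_k))$ already lie on a convex curve so ironing the reconstructed seller changes nothing, makes explicit a step the paper states in one line.
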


\begin{proof}
Couple the two markets by the common type indices.  They have the same
index probabilities and, at every index profile, the same ironed Lagrangian
coefficients: buyer scores are unchanged, while each transformed seller's
unironed scores are already nondecreasing and equal the original seller's
ironed scores.
With the same canonical tie rule, \cref{lem:exact-ironed-optimizer} therefore
selects the same matching at every coupled profile, for every compatibility
graph and matching-based feasibility family.  The profilewise ironed objective
and hence its expectation are identical.
\end{proof}

\begin{proof}[Proof of \cref{prop:regularization-main}]
\Cref{lem:pseudo-valid-v2} gives an ordered, weakly regular prior with the same
masses, and \cref{lem:value-preserved-v2} preserves the Lagrangian optimum.
Apply \cref{lem:fb-rises-v2} one seller at a time for first-best dominance;
independent marginal replacement preserves the product prior. The   {support
points}  can change, but both the uniform and support-size premises allow
arbitrary finite   {support points}.
\end{proof}

\section{Extensions to General Priors}\label{app:general-priors}
This appendix justifies the extensions beyond finite priors stated in
\cref{sec:model,sec:mhr-overview}. One-sided rounding   handles  bounded Borel
priors, and interval projection then transfers every uniform finite-prior
second-best guarantee to independent Borel priors  with finite
expected first-best gains.   The separate clipping argument extends the MHR
bounds under their stated lower-support condition.
For Borel priors below, $\SB$ denotes the supremum of expected GFT over
BIC, IIR, ex-ante-WBB mechanisms with integrable transfers and truthful
utilities.

\subsection{  {General Borel  Priors and the Half Guarantee } }\label{sec:borel}

  We first  transfer a finite guarantee to   priors supported on a bounded
interval by rounding  buyer values down and seller costs up, and then remove
the support bounds by projecting reports onto expanding intervals. Let  $R$
be the maximum cardinality of a feasible matching.  For the bounded step,
suppose all types lie in a common interval $[L,U]$.

For an allocation with bounded monotone interim services on $[L,U]$,
define the following net payments, which depend only on the agent's own
report:
\begin{equation}\label{eq:borel-envelope}
 T_i(v)=vQ_i(v)-\int_L^v Q_i(t)\,\dd t,\qquad
 T_j(s)=-sY_j(s)-\int_s^U Y_j(t)\,\dd t.
\end{equation}
For any buyer report $r$, truthful utility minus deviation utility is
$\int_r^v(Q_i(t)-Q_i(r))\,\dd t\ge0$.
For a seller it is
$\int_s^r(Y_j(t)-Y_j(r))\,\dd t\ge0$.
Both inequalities follow from monotonicity, with the usual orientation
of an integral when its limits are reversed. Truthful utilities are the
nonnegative displayed envelope integrals. These formulas therefore give
BIC and IIR directly, including at atoms; no differentiability of a prior
is required. Write
$\Lambda_{\rm gen}(x)=\sum_i\E[T_i(v_i)]+\sum_j\E[T_j(c_j)]$
for the expected net revenue of this implementation.

The next lemma extends a rounded allocation and its normalized payments
together. Defining the allocation also at zero-probability reports is
necessary because incentive compatibility compares truthful reporting with
every allowed deviation.

\begin{lemma}\label{lem:grid-extension-v2}
Fix a grid containing $L$ and $U$, with consecutive gaps at most
$\delta$.  Round buyers down to lower grid endpoints and sellers up to upper
grid endpoints.  Any feasible interim-monotone allocation for the occupied
rounded reports extends to the full report domains with the same generalized
expected revenue.  Its actual GFT is at least its rounded GFT.  Revenue
preservation holds for either sign of the discrete revenue functional.
\end{lemma}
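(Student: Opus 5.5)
The plan is to extend the rounded allocation by composing it with the rounding maps, after first filling in the unoccupied grid points. Write $\rho_B(v)$ for the largest grid point $\le v$ and $\rho_S(c)$ for the smallest grid point $\ge c$. Both maps are nondecreasing, and both fix $L$ and $U$.

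\emph{Step 1: fill in unoccupied grid points.} The given allocation $x'$ is defined only at profiles of occupied rounded reports. At an unoccupied grid point, a buyer is treated as if she reported the nearest occupied grid point below. If there is no such point, she is removed from every selected matching. Sellers are treated symmetrically, using the nearest occupied grid point above, and are removed if there is none. Removing an agent's edges preserves feasibility by downward closure. The resulting interim services $Q_i$ and $Y_j$ are monotone on the whole grid:
\begin{itemize}[leftmargin=2em,itemsep=1pt]
\item each buyer's $Q_i$ is nondecreasing, with value zero below her lowest occupied point;
\item each seller's $Y_j$ is nonincreasing, with value zero above her highest occupied point.
\end{itemize}
Interim services of the \emph{other} agents are unaffected, because these points have probability zero under the rounded prior.

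\emph{Step 2: extend to the report intervals.} Define
\[
x(v,c,\dots)=x'\bigl(\rho_B(v),\rho_S(c),\dots\bigr).
\]
The rounded reports have exactly the rounded prior, so the distribution of matchings is unchanged. Each buyer's interim service is the step function $Q_i(\rho_B(v))$, constant on half-open grid cells $[g_\ell,g_{\ell+1})$. Each seller's is the step function $Y_j(\rho_S(c))$, constant on cells $(g_{k-1},g_k]$. Both are monotone, so the payments in \cref{eq:borel-envelope} yield BIC and IIR.

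\emph{Step 3: check revenue.} The main computation is that the envelope payment is constant on each cell and equals the discrete normalized payment at the rounded report. For a buyer and $v\in[g_\ell,g_{\ell+1})$,
\[
T_i(v)=vQ_\ell-\int_L^{g_\ell}Q-(v-g_\ell)Q_\ell=g_\ell Q_\ell-\int_L^{g_\ell}Q .
\]
The integral is the sum of grid gaps weighted by the step values. Consecutive gaps between occupied points then aggregate correctly, because unoccupied points inherit the lower occupied value and the region below the lowest occupied point carries zero. This reproduces \cref{eq:buyer-envelope-payment} for the rounded buyer prior. The seller case is the mirror image, using $\int_s^U$ and \cref{eq:seller-envelope-payment}. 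Taking expectations, $\Lambda_{\rm gen}(x)$ equals the discrete revenue $\Lambda(x')$ of the rounded instance. Since this is an identity, it holds whatever the sign of the revenue.

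\emph{Step 4: check gains.} Whenever the extended rule trades on edge $(i,j)$ at $(v_i,c_j)$, the rounded pair trades. Since $v_i\ge\rho_B(v_i)$ and $c_j\le\rho_S(c_j)$, we get $v_i-c_j\ge\rho_B(v_i)-\rho_S(c_j)$ profilewise. Taking expectations gives actual GFT at least rounded GFT.

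I expect the main obstacle to be Step 3: the bookkeeping showing that the continuous envelope integral collapses exactly to the discrete formula. This requires care with the normalizations (zero utility at $L$ for buyers, zero utility at $U$ for sellers), with the treatment of unoccupied points and the half-open cell conventions, and with the endpoints themselves (a value equal to $U$ rounds to $U$, and a cost equal to $L$ rounds to $L$).
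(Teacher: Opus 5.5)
Your proposal is correct and follows the paper's proof essentially step for step. Your two-stage map (round to the grid, then move a buyer to the nearest occupied point below or a seller to the nearest occupied point above, deleting the agent's edges if none exists) is exactly the paper's composite map $\rho_i$ or $\sigma_j$. The remaining checks are also the same: the cellwise constancy of the envelope payments, the exact recovery of the discrete normalized payments, and the profilewise inequality $v_i-c_j\ge\rho_B(v_i)-\rho_S(c_j)$. The one detail you leave implicit is which occupied report to feed into the finite rule for a removed agent before deleting her edges; the paper uses the smallest occupied buyer type or the largest occupied seller type, but any choice works, since such reports have probability zero under truthful reporting.
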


\begin{proof}
Let $\widehat{\mathcal T}_i^B$ and $\widehat{\mathcal T}_j^S$ be the occupied
supports of the rounded buyer and seller marginals.  For a buyer report $v$,
let
\[
 \rho_i(v)=\max\{b\in\widehat{\mathcal T}_i^B:
                    b\le \lfloor v\rfloor_\delta\},
\]
when the set is nonempty, and write $\rho_i(v)=\bot$ otherwise.  For a seller
report $c$, define
\[
 \sigma_j(c)=\min\{s\in\widehat{\mathcal T}_j^S:
                     s\ge \lceil c\rceil_\delta\},
\]
with $\sigma_j(c)=\bot$ if the set is empty.  Both maps are nondecreasing where finite. Composing them with
nondecreasing buyer service or nonincreasing seller service preserves the
respective incentive direction.

To define the allocation at an arbitrary report profile, replace every active
report by its image under $\rho$ or $\sigma$, replace an inactive buyer report by its smallest occupied type and an
inactive seller report by its largest occupied type, evaluate the finite rule, and
then delete every edge incident to an inactive agent.  The result is feasible
by downward closedness.  Under the truthful product distribution, every
opponent is active almost surely and its image has exactly the corresponding
rounded marginal.  Hence the extended interim services are
\[
 Q_i^{\rm ext}(v)=
 \begin{cases}Q_i^{\rm grid}(\rho_i(v)),&\rho_i(v)\ne\bot,\\0,&\rho_i(v)=\bot,
 \end{cases}
 \qquad
 Y_j^{\rm ext}(c)=
 \begin{cases}Y_j^{\rm grid}(\sigma_j(c)),&\sigma_j(c)\ne\bot,\\0,&\sigma_j(c)=\bot.
 \end{cases}
\]
They are monotone, and at every occupied rounded report they equal the finite
interim services.  Thus the extension is interim monotone; no pointwise
monotonicity at zero-probability reports is required.

The buyer service is constant on each interval mapped to an occupied report
$a$.  For every $v$ in such an interval,
\[
 vQ_a-\int_L^vQ(t)\dd t
 =aQ_a-\int_L^aQ(t)\dd t.
\]
Similarly, if a seller interval maps to the occupied report $u$, then
\[
 -cY_u-\int_c^UY(t)\dd t
 =-uY_u-\int_u^UY(t)\dd t.
\]
The integral-envelope payments are therefore constant on the sets of reports mapped to each value of
$\rho_i$ and $\sigma_j$. Since the images of truthful reports have exactly
the rounded marginal distributions, expected buyer and seller payments equal the normalized
discrete envelope payments exactly.  This proves preservation of
$\Lambda_{\rm gen}$, irrespective of its sign, and the envelopes implement the
extension by a BIC, IIR mechanism.  A nonnegative discrete revenue remains ex-ante WBB.

Finally, every allocated edge has active endpoints and satisfies
\[
 v_i-c_j\ge \rho_i(v_i)-\sigma_j(c_j).
\]
Averaging gives actual GFT at least the rounded GFT.
\end{proof}

\paragraph{  From finite priors to arbitrary Borel priors.}\label{app:borel-half}
  For the unbounded extension, the only additional ingredient needed from the
finite problem is that second best can be attained without trading on
nonpositive-surplus edges.

  \begin{lemma}[Positive-Surplus Finite Optimum]
\label{lem:positive-surplus-finite-optimum}
For independent finite priors and  downward-closed   matching feasibility,
$\SB$ is attained by a feasible interim-monotone allocation that trades only
when $v_i>c_j$ and has nonnegative normalized expected revenue.
\end{lemma}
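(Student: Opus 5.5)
Start from \cref{lem:allocation-characterization}. It gives a minimizing multiplier $\alpha^*\ge0$ and an optimum that is either a single Lagrangian maximizer with nonnegative $\Lambda$, or a mixture of two maximizers active at $\alpha^*$ whose budget slopes straddle zero. The plan is to take these maximizers to be specific positive-surplus rules, so that the mixture inherits that property. Everything is then assembled in three steps.

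\textbf{Step 1: choose the active vertices to be ironed matching rules.} At $\alpha^*$, consider the family $X^{\alpha}$ from \cref{lem:exact-ironed-optimizer} for $\alpha$ slightly above and slightly below $\alpha^*$. Each such rule is interim monotone, hence in $\mathcal X$, and trades only when $v_i>c_j$. Because $V_\alpha$ is piecewise affine with finitely many pieces, and the vertex set is finite, I would argue as follows.
\begin{enumerate}[label=(\alph*),leftmargin=2em,itemsep=1pt]
\item For $\alpha$ in a small right neighbourhood of $\alpha^*$, the maximizer $X^\alpha$ is optimal on an interval. It is therefore active at $\alpha^*$ by continuity, and its slope $\Lambda(X^{\alpha})$ equals the right derivative of $V$ at $\alpha^*$, which is nonnegative.
\item Symmetrically, the left neighbourhood gives an active rule whose slope equals the left derivative, which is nonpositive when $\alpha^*>0$.
\item If $\alpha^*=0$, the right-neighbourhood rule alone has $\Lambda\ge0$ and $\GFT=V_0$ by activity at $0$.
\end{enumerate}
One point needs care. $X^\alpha$ depends on $\alpha$ through the ironing blocks and the tie rule, so I must check that it is constant on small one-sided intervals. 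Each ironed coefficient is a piecewise-rational function of $\alpha$, so only finitely many orderings and ties change. Even without constancy it suffices: take any sequence $\alpha_n\downarrow\alpha^*$. Finitely many deterministic rules occur, so one rule repeats infinitely often. Optimality at $\alpha_n$ for a single affine line forces that line to be active at $\alpha^*$ and to have slope at least every active slope, hence equal to the right derivative.

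\textbf{Step 2: mix.} If $\alpha^*>0$ and one of the two rules has zero slope, use it alone. Otherwise mix the two rules with weights $L_+/(L_+-L_-)$ and $-L_-/(L_+-L_-)$ as in the proof of \cref{lem:allocation-characterization}. The mixture has $\Lambda=0$ and gains $V_{\alpha^*}=\SB$. It lies in $\mathcal X$ by convexity, and it trades only on edges where one of the two components trades, hence only when $v_i>c_j$. Nonnegative normalized revenue and implementability follow from the payment identities of \cref{lem:allocation-characterization}.

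\textbf{Step 3: weak duality.} Since $\GFT(x)\le V_\alpha$ for every budget-feasible $x$, the constructed allocation attains $\SB$.

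\textbf{Main obstacle.} The delicate point is Step 1, identifying a positive-surplus rule with an extreme active slope. An arbitrary active vertex may trade at zero or negative surplus. The remedy is the one-sided-limit selection described above, which uses only finiteness and optimality of $X^{\alpha_n}$.
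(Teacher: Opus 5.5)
Your proof is correct, but it takes a different route from the paper's.

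\emph{The paper's route.} The paper never chooses which active maximizers enter the mixture. It intersects $\mathcal X$ with the linear constraints $x_{ij}(v,c)=0$ whenever $v_i\le c_j$. By \cref{lem:exact-ironed-optimizer}, $X^\alpha$ lies in this restricted polytope for every $\alpha\ge0$, so the restricted and unrestricted Lagrangian curves coincide. The paper then reruns the finite duality argument of \cref{lem:allocation-characterization} verbatim on the restricted polytope, which still contains no trade. Any budget-feasible optimum produced this way automatically trades only at positive surplus and has value $\min_\alpha V_\alpha=\SB$.

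\emph{Your route.} You keep the original polytope and instead control which active lines are mixed. You take $X^{\alpha_n}$ along sequences $\alpha_n\to\alpha^*$ from each side and use finiteness of the deterministic rules to extract a rule that is active at $\alpha^*$. Its slope dominates every active slope, so it equals the one-sided derivative, which has the right sign at the minimizer. This step is sound, and your fallback correctly avoids needing constancy of $X^\alpha$ on one-sided intervals.

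\emph{What each buys.} The paper's route is shorter and sidesteps exactly the selection step you flag as delicate. Yours costs the one-sided limit argument but gives more structure. It shows that $\SB$ is attained by a lottery over at most two deterministic, pointwise-monotone ironed matching rules taken at multipliers adjacent to $\alpha^*$. The paper's argument only yields a mixture of two possibly fractional vertices of the restricted polytope.
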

\begin{proof}
Intersect $\mathcal X$  with the   linear constraints
$x_{ij}(v,c)=0$ whenever $v_i\le c_j$. By
\cref{lem:exact-ironed-optimizer}, for every $\alpha\ge0$ the Lagrangian
maximum $V_\alpha$ has an optimizer in this restricted polytope. Hence the
restricted and unrestricted Lagrangian values coincide for every multiplier.
Applying the same finite linear-programming duality as in
\cref{lem:allocation-characterization} to the restricted polytope shows that
its budget-feasible optimum is
$\min_{\alpha\ge0}V_\alpha=\SB$. An optimizer therefore has the stated
properties.
\end{proof}

Fix $U>0$ and define the interval projection
\[
 \Pi_U(t)=\max\{-U,\min\{t,U\}\},\qquad
 v_i^U=\Pi_U(v_i),\quad c_j^U=\Pi_U(c_j).
\]
Let $\FB_U$ be first best in the projected market on $[-U,U]$.

\begin{lemma}[Interval-Projection Extension]
\label{lem:interval-projection-extension}
Consider a BIC, IIR mechanism on the full report interval $[-U,U]$ for the
projected market. Suppose it trades only on positive-surplus edges and uses
the normalized envelope transfers in \cref{eq:borel-envelope}. Composing every
report with $\Pi_U$ gives a feasible BIC, IIR mechanism for the original market with
the same payment distribution and weakly larger GFT. Expected revenue is
preserved, so ex-ante WBB is preserved whenever it holds in the projected
market.
\end{lemma}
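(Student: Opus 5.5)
The plan is to define the original-market mechanism by composition, $\tilde x(v,c)=x(\Pi_U(v),\Pi_U(c))$, where $x$ is the projected mechanism. Transfers are $\tilde p_a(\theta)=p_a(\Pi_U(\theta))$, so each agent pays according to her projected report. Feasibility is immediate because the matching chosen at each profile is one chosen by $x$. Since $\Pi_U$ maps the original product prior onto the projected product prior, the joint law of projected reports is exactly that of the projected market. Hence the payment distribution and expected revenue coincide, and ex-ante WBB carries over.

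For incentives, fix buyer $i$ with value $v$ and consider any report $r$. Her interim service and payment are $Q(\Pi_U r)$ and $T(\Pi_U r)$, with $T$ the normalized envelope payment from \cref{eq:borel-envelope} on $[-U,U]$. When $|v|\le U$, a deviation to $r$ is equivalent to a deviation to $\Pi_U r\in[-U,U]$ in the projected market, which is unprofitable by BIC there.

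When $v>U$, the utility from report $r'=\Pi_U r$ is
\[
 vQ(r')-T(r')=(v-U)Q(r')+\bigl(UQ(r')-T(r')\bigr).
\]
The first term is maximized at $r'=U$ because $Q$ is nondecreasing. The second is maximized at $r'=U$ by projected BIC for type $U$. So truthful reporting, which projects to $U$, is optimal. For IIR, the utility is $(v-U)Q(U)+(\text{utility of type }U)\ge0$. The case $v<-U$ is symmetric: utility equals utility of type $-U$ plus $(v+U)Q(r')$. That correction is maximized by making $Q$ smallest, i.e.\ at $r'=-U$, and IIR uses $Q(-U)=0$ or the nonnegative envelope. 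Actually, for $v<-U$ the correction is $(v+U)Q(-U)\le 0$, so IIR needs the normalization $Q(-U)$-term: the envelope gives type $-U$ zero utility, so I will use that buyer type $-U$ cannot trade on a positive-surplus edge, because seller costs are at least $-U$. Hence $Q(-U)=0$. Sellers are handled symmetrically, with slope $-1$. A seller with $c>U$ cannot trade at projected cost $U$ by positive surplus, so $Y(U)=0$ gives IIR. A seller with $c<-U$ gains $(-U-c)Y(r')$, which is maximized at $r'=-U$.

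For GFT, on every traded edge $\Pi_U(v_i)>\Pi_U(c_j)$. Since $\Pi_U$ is nondecreasing and $1$-Lipschitz,
\[
 v_i-c_j\ge\Pi_U(v_i)-\Pi_U(c_j)>0,
\]
because projection never increases the gap between an ordered pair. Taking expectations gives actual GFT at least projected GFT.

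The main obstacle is the IIR/BIC check for types outside $[-U,U]$. That check relies on positive-surplus trading forcing zero service at the extreme projected types $-U$ for buyers and $U$ for sellers. I will verify this carefully, including atoms at $\pm U$ created by projection and integrability of the transfers.
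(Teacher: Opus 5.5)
Your proof is correct and follows the paper's argument. It uses the same decomposition of a type outside $[-U,U]$ into the utility of its projected boundary type plus the correction $(v-\Pi_U v)Q(\cdot)$, the same appeal to positive-surplus trading to get $Q_i(-U)=0$ and $Y_j(U)=0$ for IIR, and the same distributional argument for payments. Your GFT step via monotonicity and $1$-Lipschitzness of $\Pi_U$ is an equivalent repackaging of the paper's observation that trade forces $v_i\ge\Pi_U(v_i)$ and $c_j\le\Pi_U(c_j)$; and, as you noticed, BIC below $-U$ needs only monotonicity, with $Q_i(-U)=0$ required just for IIR.
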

 \begin{proof}
  Projection is measurable and does not affect feasibility. Positive-surplus
trade implies $Q_i(-U)=0$ and $Y_j(U)=0$, so the normalized boundary payments
are $T_i(-U)=T_j(U)=0$.

Fix a buyer's true value $v_i$ and write $v_i^U=\Pi_U(v_i)$. For any
alternative projected report $b_i\in[-U,U]$,
\begin{align*}
 &v_iQ_i(v_i^U)-T_i(v_i^U)-v_iQ_i(b_i)+T_i(b_i)\\
 &=\bigl[v_i^UQ_i(v_i^U)-T_i(v_i^U)-v_i^UQ_i(b_i)+T_i(b_i)\bigr]
 +(v_i-v_i^U)\bigl[Q_i(v_i^U)-Q_i(b_i)\bigr]\ge0.
\end{align*}
The bracket is nonnegative by BIC on $[-U,U]$. The last term vanishes inside
the interval; it is nonnegative above $U$ by buyer monotonicity and below
$-U$ because $Q_i(-U)=0$.

For a seller's true cost $c_j$, write $c_j^U=\Pi_U(c_j)$. For any
alternative projected report $s_j\in[-U,U]$,
\begin{align*}
 &-T_j(c_j^U)-c_jY_j(c_j^U)+T_j(s_j)+c_jY_j(s_j)\\
 &=\bigl[-T_j(c_j^U)-c_j^UY_j(c_j^U)+T_j(s_j)+c_j^UY_j(s_j)\bigr]
 +(c_j^U-c_j)\bigl[Y_j(c_j^U)-Y_j(s_j)\bigr]\ge0.
\end{align*}
The bracket is nonnegative by BIC. The last term is nonnegative below
$-U$ by seller monotonicity and above $U$ because $Y_j(U)=0$. Every
deviation is evaluated through a projected report, proving BIC on the full
report space. Buyers below $-U$ and sellers above $U$ receive zero service and
zero payment; outside the other two boundaries, monotonicity adds a
nonnegative utility increment to the boundary-type utility. Thus IIR is also
preserved.

Under truthful reporting, projection induces exactly the projected product
prior, so the payment distribution is unchanged. If an edge trades, then
$v_i^U>c_j^U$, which implies $v_i\ge v_i^U$ and $c_j\le c_j^U$. Hence
$v_i-c_j\ge v_i^U-c_j^U>0$, proving the GFT comparison.
\end{proof}

\begin{lemma}[Integrability of the Projection Extension]
\label{lem:interval-projection-integrability}
If the original independent Borel priors satisfy $\FB<\infty$,
the transfers and truthful utilities in
\cref{lem:interval-projection-extension} are integrable.
\end{lemma}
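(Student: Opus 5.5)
We have to show that the transfers $T_i(v_i^U)$, $T_j(c_j^U)$ of the projected mechanism, composed with $\Pi_U$, and the truthful utilities are integrable under the original priors whenever $\FB<\infty$. The plan is to bound each agent's truthful utility and each payment pointwise by quantities tied to first-best surplus. Since the projected reports lie in $[-U,U]$, everything is in fact bounded. Integrability therefore holds trivially for fixed $U$, and the real content is a uniform integrable domination that will serve the later limit $U\to\infty$. I would prove both: boundedness first, then the domination.

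\emph{Boundedness.} Under the normalized envelope payments \cref{eq:borel-envelope} on $[-U,U]$, interim services lie in $[0,1]$. This gives $|T_i(v)|\le |v|+2U\le 3U$ and $|T_j(s)|\le 3U$ for projected reports. Truthful buyer utility in the original market is
\[
v_iQ_i(v_i^U)-T_i(v_i^U)=\int_{-U}^{v_i^U}Q_i+(v_i-v_i^U)Q_i(v_i^U).
\]
The extra term is nonzero only for $v_i>U$, where it is at most $v_i-U$. The sellers are symmetric: the extra term is at most $-U-c_j$ when $c_j<-U$.

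\emph{Domination by first-best surplus.} The key step is to show that the extra terms are dominated by first-best surplus, which is integrable since $\FB<\infty$. Positive-surplus trade gives $Q_i(v_i^U)>0$ only if buyer $i$ has a partner $j$ with $c_j^U<v_i^U$. So for $v_i>U$ the term $(v_i-U)Q_i(U)$ is at most $\E_{-i}[(v_i-c_j)_+]$ summed over buyer $i$'s possible trades. I plan to compare this with the first-best value, but the bound has to be set up carefully. Summing over $i$ gives $\E[\sum_i (v_i-U)_+\,\1\{\text{traded}\}]$, and by the GFT comparison in \cref{lem:interval-projection-extension} this is at most the actual GFT of the extended mechanism, hence at most $\FB$. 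Indeed $v_i-U\le v_i-c_j$ whenever $c_j^U<U$, since in that case $c_j\le c_j^U<U$. The seller terms are handled by the analogous bound.

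Utilities are nonnegative by IIR, so they are integrable once their upper bounds are. Payments then equal value times service minus utility, where the value-times-service term is $v_iQ_i(v_i^U)$ and its seller analogue. Integrability of $v_iQ_i(v_i^U)$ reduces to $\E[|v_i|\1\{\text{trade}\}]$. I expect this to be the main obstacle, because $\FB<\infty$ controls positive parts of differences $v_i-c_j$, not $|v_i|$ itself. To handle it I would use that on trade, $v_i\in(c_j^U,\cdot)$. The payment formula depends only on $v_i^U$, a bounded quantity, so $T_i(v_i^U)$ is bounded and hence integrable. The utility was already shown integrable above. Combining the boundedness of the composed payments with the integrability of the utilities finishes the proof.
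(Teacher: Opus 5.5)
Your argument is correct and is essentially the paper's. Transfers are bounded because they depend only on the projected reports. The only unbounded contributions are dominated by realized gains on positive-surplus trades, and hence by $\FB$: in your version the overshoot terms $(v_i-U)_+$ and $(-U-c_j)_+$, in the paper's the single inequality $|v_i|+|c_j|\le (v_i-c_j)+2U$.

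One correction to your framing: integrability is not trivial at fixed $U$, because truthful utilities involve the unprojected types. Your domination step is therefore the actual fixed-$U$ content of the lemma, not something needed for uniformity as $U\to\infty$; your bound depends on $U$ anyway.
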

\begin{proof}
For $v,c\in[-U,U]$, monotonicity and the normalized envelopes imply
\[
 -UQ_i(v)\le T_i(v)\le vQ_i(v),\qquad
 -UY_j(c)\le T_j(c)\le-cY_j(c).
\]
Thus all transfers remain bounded by $U$ after projection. On every realized
trade in the original market, $v_i>-U$, $c_j<U$, and $v_i>c_j$, so
\[
 |v_i|+|c_j|\le(v_i-c_j)+2U.
\]
At  most $R$   edges trade, and the mechanism's realized gains are bounded by
the profile's first-best gains. Therefore
\[
 \E\!\left[\sum_i |v_i|q_i+\sum_j|c_j|y_j\right]
 \le \FB+2RU<\infty.
\]
Together with bounded transfers, this proves integrability of truthful
utilities and payments.
\end{proof}

\begin{lemma}[Convergence of Projected First Best]
\label{lem:projected-fb-convergence}
For independent Borel priors and downward-closed matching
feasibility,
\[
 \FB_U\uparrow\FB\qquad\text{as }U\to\infty.
\]
\end{lemma}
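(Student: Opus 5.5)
We argue profilewise and pass to expectations by monotone convergence. Fix a type profile $(v,c)$ and write $W(v,c)=\max_{M\in\cF}\sum_{(i,j)\in M}(v_i-c_j)$ for the first-best value, and $W_U(v,c)=W(v^U,c^U)$ for its value at the projected profile. Then $\FB=\E[W]$ and $\FB_U=\E[W_U]$, and both integrands are nonnegative because the empty matching belongs to the downward-closed family $\cF$. The claim therefore follows from two profilewise facts: $W_U(v,c)$ is nondecreasing in $U$, and it converges to $W(v,c)$ as $U\to\infty$.

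For convergence, fix $(v,c)$. Once $U\ge\max_i|v_i|\vee\max_j|c_j|$, projection leaves every coordinate unchanged, so $W_U(v,c)=W(v,c)$ for all such $U$. Since there are finitely many agents, every profile reaches this regime.

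Monotonicity is the step that needs care, because projection raises low values and lowers high costs, and so it is not automatically dominated. The key is to restrict attention to positive-surplus edges. By downward closure, $W$ equals the maximum over $M\in\cF$ of $\sum_{(i,j)\in M}(v_i-c_j)_+$, since any edge with nonpositive surplus can be deleted. It therefore suffices to show that each edge weight $(\Pi_U(v_i)-\Pi_U(c_j))_+$ is nondecreasing in $U$.

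\begin{itemize}
\item If $v_i\le c_j$, then $\Pi_U(v_i)\le\Pi_U(c_j)$ by monotonicity of $\Pi_U$, so the weight is identically zero.
\item If $v_i>c_j$, then $\Pi_U(v_i)-\Pi_U(c_j)$ is the length of the intersection $[c_j,v_i]\cap[-U,U]$. Indeed, projecting both endpoints of the interval onto $[-U,U]$ produces exactly the endpoints of that intersection.
\end{itemize}

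In the second case the intersection length increases as $U$ grows, so the weight is nondecreasing in $U$, with the limit $(v_i-c_j)_+$. A maximum of finitely many nondecreasing sums is again nondecreasing, which gives $W_U\uparrow W$ pointwise.

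Monotone convergence then yields $\FB_U\uparrow\FB$. This holds whether $\FB$ is finite or infinite, and no integrability hypothesis is needed because the integrands are nonnegative.
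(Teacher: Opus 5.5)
Your proof is correct and follows the paper's argument: use downward closure to rewrite first best with positive-part edge weights, show each projected edge weight increases to $(v_i-c_j)_+$, take the maximum over finitely many matchings, and apply monotone convergence. Your interval-intersection observation, $\Pi_U(v_i)-\Pi_U(c_j)=\bigl|[c_j,v_i]\cap[-U,U]\bigr|$ when $v_i>c_j$, supplies the justification for edge-weight monotonicity that the paper simply asserts.
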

\begin{proof}
For every edge,
\[
 (v_i^U-c_j^U)_+\uparrow(v_i-c_j)_+.
\]
Downward closedness allows every negative edge to be deleted, so first best is
the maximum feasible sum of these nonnegative edge weights. There are only
finitely many feasible matchings, and each projected matching value increases
to its original value. Thus first best increases profilewise, and monotone
convergence gives the claim.
\end{proof}

\begin{lemma}[Passing Finite Efficiency Guarantees to Borel Priors]
\label{lem:finite-borel-efficiency-transfer}
Fix a compatibility graph, a downward-closed family of feasible matchings,
and $0<\beta\le1$. If every independent finite-prior market on this
feasibility environment satisfies $\SB\ge\beta\FB$, then the same inequality
holds for every independent Borel-prior market with
$\FB<\infty$.
\end{lemma}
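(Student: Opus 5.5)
We prove the statement in two stages: first for priors supported in a common bounded interval, then for general priors via the interval projection $\Pi_U$. Fix a Borel market with $\FB<\infty$ and $\varepsilon>0$; if $\FB=0$ there is nothing to prove.

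\emph{Step 1 (projection).} By \cref{lem:projected-fb-convergence}, choose $U$ with $\FB_U\ge\FB-\varepsilon$. It then suffices to produce, for the projected market on $[-U,U]$, a BIC, IIR, ex-ante WBB mechanism that uses the normalized envelope transfers \cref{eq:borel-envelope}, trades only on positive-surplus edges, and has GFT at least $\beta\FB_U-\varepsilon$. \Cref{lem:interval-projection-extension} lifts such a mechanism to the original market with the same revenue and no smaller GFT. \Cref{lem:interval-projection-integrability} makes its transfers and utilities integrable, so it is admissible in the Borel definition of $\SB$.

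\emph{Step 2 (rounding on $[-U,U]$).} Fix a grid on $[-U,U]$ containing both endpoints, with mesh $\delta$. Round buyer values down and seller costs up to grid points. This gives an independent finite market whose edge surpluses are each at most $2\delta$ below the true projected ones. Because first best uses at most $R$ edges, the rounded first best is at least $\FB_U-2R\delta$: the first-best matching of the projected market remains feasible, and any negative edges can be deleted. The hypothesis then gives a finite $\SB\ge\beta(\FB_U-2R\delta)$. By \cref{lem:positive-surplus-finite-optimum}, this value is attained by an interim-monotone allocation that trades only when the rounded value exceeds the rounded cost and has nonnegative normalized revenue.

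\Cref{lem:grid-extension-v2} extends this allocation to all reports in $[-U,U]$ with the same generalized revenue, hence ex-ante WBB, and with GFT at least the rounded GFT. The extension trades only when $\rho_i(v)>\sigma_j(c)$, so $v\ge\rho_i(v)>\sigma_j(c)\ge c$; it therefore trades only on positive-surplus edges. Its transfers are exactly the envelope transfers \cref{eq:borel-envelope}, which is the form \cref{lem:interval-projection-extension} requires. Choosing $\delta\le\varepsilon/(2R)$ and letting $\varepsilon\to0$ gives $\SB\ge\beta\FB$.

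\emph{Main obstacle.} The delicate point is the interface between the two lemmas. \Cref{lem:interval-projection-extension} requires the boundary normalizations $Q_i(-U)=0$ and $Y_j(U)=0$, but after rounding, a buyer atom at $-U$ could be served if some seller cost also rounds to a low grid point. Strict positive rounded surplus rules this out, because no cost can round to a value below $-U$. I will check this boundary behaviour explicitly, along with the convention for inactive agents whose reports map to $\bot$.
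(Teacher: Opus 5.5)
Your proposal is correct and follows the paper's proof step for step. You project by $\Pi_U$, round buyers down and sellers up on a grid, and invoke the hypothesis together with \cref{lem:positive-surplus-finite-optimum}. You then extend via \cref{lem:grid-extension-v2} and \cref{lem:interval-projection-extension}, with integrability from \cref{lem:interval-projection-integrability}, and your $\varepsilon$ bookkeeping is an explicit version of the paper's limit $\delta\downarrow0$ followed by $U\to\infty$. Your boundary check, that strictly positive rounded surplus forces $Q_i(-U)=0$ and $Y_j(U)=0$, is exactly the fact \cref{lem:interval-projection-extension} relies on.
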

\begin{proof}
Fix $U>0$ and a grid on $[-U,U]$ with mesh at most $\delta$. Project all
original types by $\Pi_U$, then round buyers down and sellers up to the grid.
If $\FB^{\rm grid}$ is first best in the resulting finite market, evaluating
a projected first-best matching at the rounded types loses at most
$2\delta$ per edge, so
\[
 \FB^{\rm grid}\ge\FB_U-2R\delta.
\]
By \cref{lem:positive-surplus-finite-optimum}, the finite market has a
BIC/IIR/WBB positive-surplus allocation with GFT at least
$\beta\FB^{\rm grid}$. Extend it to $[-U,U]$ by
\cref{lem:grid-extension-v2}. Because every finite trade has positive rounded
surplus, every trade in this extension has positive actual surplus. Then apply
\cref{lem:interval-projection-extension} to obtain a mechanism for  the original market.
Both extensions preserve expected revenue and weakly increase GFT, while
\cref{lem:interval-projection-integrability} gives the required integrability.
Hence
\[
 \SB\ge\beta\FB_U-2\beta R\delta.
\]
 Let $\delta\downarrow0$   and then $U\to\infty$, using
\cref{lem:projected-fb-convergence}.
\end{proof}

\begin{corollary}[Unbounded Borel Priors]\label{thm:main-v2}
For independent Borel type distributions with $\FB<\infty$
and every downward-closed family of feasible matchings,
\[
 \SB\ge\tfrac12\FB.
\]
The factor is tight.
\end{corollary}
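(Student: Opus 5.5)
The lower bound comes from chaining two earlier results. By our separate bilateral work~\cite{LQRW26}, every independent finite bilateral-trade instance satisfies $\SB\ge\frac12\FB$. \Cref{thm:finite-main-v2}, through \cref{thm:bilateral-to-matching} applied at every multiplier $\alpha$ and the duality $\SB=\min_{\alpha\ge0}V_\alpha$ of \cref{lem:allocation-characterization}, then gives $\SB\ge\frac12\FB$ for every independent finite-prior market with downward-closed feasibility. One point to check is that the premise of \cref{thm:bilateral-to-matching} really holds at every $\alpha$: this follows from $V_\alpha\ge\SB$, since a budget-feasible allocation has $\Lambda\ge0$ and so its gains are at most its Lagrangian value.

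Next, \cref{lem:finite-borel-efficiency-transfer} with $\beta=\frac12$ passes the finite guarantee to independent Borel priors with $\FB<\infty$, on any fixed compatibility graph and downward-closed family. That lemma does the real work. It projects types onto $[-U,U]$ with $\Pi_U$, rounds buyers down and sellers up to a grid of mesh $\delta$, and takes the positive-surplus finite optimum of \cref{lem:positive-surplus-finite-optimum}. It then extends this optimum by \cref{lem:grid-extension-v2} and \cref{lem:interval-projection-extension}, with integrability from \cref{lem:interval-projection-integrability}. Finally it lets $\delta\downarrow0$ and then $U\to\infty$, using $\FB_U\uparrow\FB$ from \cref{lem:projected-fb-convergence}. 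So the inequality $\SB\ge\frac12\FB$ follows directly. If $\FB=0$ there is nothing to prove, since the no-trade mechanism gives $\SB\ge0$.

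Tightness needs no construction. Finite priors are special cases of Borel priors, and a single-edge market is a downward-closed matching market. The bilateral instances from~\cite{LQRW26} with $\SB/\FB\to\frac12$ therefore already show that no larger factor holds.

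I expect no step to be a genuine obstacle, since each ingredient is stated earlier. The only care needed is bookkeeping. First, $\SB$ for Borel priors is the supremum defined at the start of \cref{app:general-priors}, with integrable transfers, and this is the object the transfer lemma bounds. Second, the finite guarantee invoked must cover every graph and feasibility family that can arise, which \cref{thm:finite-main-v2} provides.
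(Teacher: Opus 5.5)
Your proposal is correct and follows the paper's proof. The paper likewise applies \cref{lem:finite-borel-efficiency-transfer} with $\beta=\tfrac12$ to the finite matching-market half guarantee of \cref{thm:finite-main-v2}, and gets tightness from the finite bilateral instances of~\cite{LQRW26}, which lie in the Borel class; your extra checks ($V_\alpha\ge\SB$ at every multiplier, and the trivial case $\FB=0$) only make explicit what the paper uses implicitly.
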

\begin{proof}
Apply \cref{lem:finite-borel-efficiency-transfer} to the finite half guarantee
in \cref{thm:finite-main-v2}. Tightness remains bilateral because finite
bilateral instances are contained in this Borel class.
\end{proof}

\subsection{ { Unbounded  MHR  Priors via Clipping } }\label{app:clipping}
We return to the continuous MHR guarantee and justify its extension to
unbounded supports. Here clipping replaces each type $t$ by
$\min\{t,K\}$; unlike conditioning below a cap, it keeps all probability mass
and moves the tail to $K$. We first extend a bounded mechanism to the
original types, then check integrability and convergence of first best.
The requirement that trade have positive clipped surplus ensures that
sellers above $K$ are never served.

\begin{lemma}
\label{lem:clipping-extension}
Translate the common lower bound to zero and fix $K>0$.  Consider a BIC, IIR,
ex-ante-WBB mechanism defined on the full report domain $[0,K]$ for the
market of clipped types
$v_i^K=\min\{v_i,K\}$ and $c_j^K=\min\{c_j,K\}$.  Suppose its allocation
trades only when $v_i^K>c_j^K$, and use the normalized own-report
interim-envelope transfers on $[0,K]$.
Composing reports with clipping extends it to the original type domains.  The
extension is BIC, IIR, and ex-ante WBB, has the same distribution of payments,
and obtains weakly larger GFT.
\end{lemma}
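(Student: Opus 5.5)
The plan mirrors the proof of \cref{lem:interval-projection-extension}, with the one-sided clipping map $t\mapsto\min\{t,K\}$ in place of $\Pi_U$. Let $Q_i,Y_j$ be the interim services of the bounded mechanism on $[0,K]$. They are nondecreasing for buyers and nonincreasing for sellers. The transfers are the normalized envelopes \cref{eq:borel-envelope} with $L=0$, $U=K$. The extended mechanism evaluates the bounded rule at the clipped reports. Its allocation is feasible, since feasibility does not depend on reports. Under truthful reporting, every opponent's clipped report has exactly the clipped marginal. Hence the interim service of an original report $r$ is $Q_i(\min\{r,K\})$ or $Y_j(\min\{r,K\})$, and the payment distribution coincides with that of the bounded mechanism. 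Expected revenue is therefore unchanged, which gives ex-ante WBB.

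The first step is to record the boundary facts. Positive clipped surplus forces $c_j^K<K$ on every trade, so $Y_j(K)=0$. The normalized seller payment is then $T_j(K)=0$: a seller with cost at least $K$ is never served and pays nothing. No buyer-side analogue is needed, because there is no lower clipping.

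The second step is BIC. For a buyer with true value $v$ and clipped value $v^K$, and any report $b\in[0,K]$, I split the utility difference as
\[
 \bigl[v^KQ_i(v^K)-T_i(v^K)-v^KQ_i(b)+T_i(b)\bigr]+(v-v^K)\bigl[Q_i(v^K)-Q_i(b)\bigr].
\]
The bracket is nonnegative by BIC on $[0,K]$. The second term vanishes when $v\le K$. When $v>K$ it is nonnegative, because $v^K=K$ and $Q_i$ is nondecreasing. For a seller with true cost $c$, the analogous correction term is $(c^K-c)[Y_j(c^K)-Y_j(s)]$. It vanishes when $c\le K$. When $c>K$ it equals $(K-c)[0-Y_j(s)]\ge0$, using $Y_j(K)=0$. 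Every report in the original domain is evaluated through its clipped image, so these inequalities cover all deviations.

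The third step is IIR and GFT. Buyers with $v>K$ get utility $vQ_i(K)-T_i(K)\ge KQ_i(K)-T_i(K)\ge0$. Sellers with $c\ge K$ get zero utility. All other types keep the utilities they had in the bounded mechanism. For GFT, whenever an edge trades we have $v_i^K>c_j^K$. This gives $c_j^K<K$, hence $c_j=c_j^K$, and also $v_i\ge v_i^K$. So realized surplus is weakly larger pointwise, and averaging gives the GFT comparison.

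The only delicate point is the seller tail above $K$. A seller's clipped report $K$ is a legitimate report in the bounded mechanism, so we need both its service and its payment to be zero; otherwise the correction term $(K-c)(\cdot)$ could have the wrong sign. This is exactly what the positive-clipped-surplus hypothesis supplies, so that is where I would spend care. I would also note that no integrability issue arises here, since transfers are bounded functions of clipped reports; integrability of realized values is deferred to \cref{lem:clipping-integrability}.
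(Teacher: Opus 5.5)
Your proposal is correct and follows essentially the same route as the paper's proof. The buyer decomposition with the correction term $(v-K)[Q(K)-Q(r)]$ is the same, and for sellers above $K$ both arguments use $Y_j(K)=T_j(K)=0$ from positive clipped surplus; your correction-term form is a rewriting of the paper's chain $-T(r)-cY(r)\le -T(r)-KY(r)\le 0$. The payment distribution and the pointwise GFT comparison are handled by the same clipped-prior argument.
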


\begin{proof}
For a buyer with true value $v>K$, compare truthful clipped report $K$ with an
arbitrary clipped report $r\le K$.  Writing $Q,T$ for the clipped interim
service and payment,
\[
 [vQ(K)-T(K)]-[vQ(r)-T(r)]
 =\bigl[KQ(K)-T(K)-KQ(r)+T(r)\bigr]
  +(v-K)\bigl[Q(K)-Q(r)\bigr]\ge0.
\]
The first term is nonnegative by BIC in the clipped market, and the second
is nonnegative by monotonicity. For $v\le K$, the clipped mechanism's BIC
condition applies directly. Truthful utility is also
nonnegative, so buyers satisfy BIC and IIR on the full domain.

A seller with clipped report $K$ is never served by positive-surplus support,
and its normalized envelope payment is zero.  If the true cost is $c>K$, then
for every clipped report $r$,
\[
 -T(r)-cY(r)\le -T(r)-KY(r)\le0,
\]
where the final inequality follows from the clipped mechanism's BIC
condition for type $K$. Thus truthful clipping to $K$ is optimal and gives
utility zero. For $c\le K$, the clipped mechanism's BIC and IIR conditions
apply directly.

Under truthful reports, the clipped type profile has exactly the clipped product
prior, so the payment distribution and expected budget are unchanged.  If an
edge trades, positive clipped surplus rules out $c_j>K$ and gives
\[
 v_i-c_j\ge v_i^K-c_j^K>0.
\]
Hence actual GFT weakly exceeds clipped GFT.
\end{proof}

Clipping preserves incentives, but the unbounded model also requires integrable payments and utilities. Finite expected first best provides the necessary bound.

\begin{lemma}
\label{lem:clipping-integrability}
If the original first-best gains have finite expectation, the normalized
payments and truthful utilities in \cref{lem:clipping-extension} are
integrable.
\end{lemma}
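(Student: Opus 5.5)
The plan is to mimic \cref{lem:interval-projection-integrability}, now with the common lower bound translated to zero and clipping at $K$ in place of two-sided projection. First I bound the transfers of the clipped mechanism. On $[0,K]$ the normalized envelope transfers of \cref{eq:borel-envelope}, with $L=0$ and $U=K$, satisfy
\[
 0\le T_i(v)\le vQ_i(v)\le K,\qquad -KY_j(c)\le T_j(c)\le -cY_j(c)\le0 .
\]
The buyer bounds follow because the envelope integral is between $0$ and $vQ_i(v)$. The seller bounds follow because $\int_c^K Y_j\le (K-c)Y_j(c)$ by monotonicity. Hence every payment lies in $[-K,K]$. In the extension of \cref{lem:clipping-extension} each agent's payment is the clipped payment evaluated at her clipped report, so all payments stay bounded by $K$. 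This makes them integrable, since at most finitely many agents exist.

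Next I handle truthful utilities, which are $v_iQ_i(v_i^K)-T_i(v_i^K)$ for buyers and $-c_jY_j(c_j^K)-T_j(c_j^K)$ for sellers. Given bounded payments, it suffices to integrate $v_i q_i$ and $c_j y_j$ realizationwise. Each is nonnegative because types are at least zero after translation. Trade occurs only when $v_i^K>c_j^K$, so a traded seller has $c_j<K$, which gives $c_jy_j\le K$. For a traded buyer, $v_i\le (v_i-c_j)+c_j\le (v_i-c_j)+K$, with $v_i-c_j>0$. At most $R$ edges trade at any profile, and the sum of their actual surpluses is at most the profile's first-best gains, because the realized matching is feasible. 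Therefore
\[
 \E\Bigl[\sum_i v_iq_i+\sum_j c_jy_j\Bigr]\le \FB+2RK<\infty .
\]
Combining this with the payment bounds gives integrability of truthful utilities. Interim utilities are expectations of these quantities, so they are also integrable.

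The only delicate point is the buyer value term when $v_i>K$. Unbounded buyer values enter utilities only through realized trades, and this is exactly what the finite-$\FB$ hypothesis controls through the surplus bound. I must also make sure to use the actual $c_j\ge0$ rather than the clipped costs, so that $v_i-c_j$ is genuine surplus. Everything else is a bounded-interval estimate.
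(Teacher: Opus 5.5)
Your proposal is correct and takes essentially the same approach as the paper. Both arguments bound the normalized envelope transfers by $K$, note that served sellers have cost below $K$, and control served buyer values through $v_i\le(v_i-c_j)+K$ together with the realized-first-best bound, giving a bound of order $\FB+2RK$.
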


\begin{proof}
Normalized buyer payments lie in $[0,K]$, and normalized seller net payments
lie in $[-K,0]$.  A served seller has original cost below $K$.  On every
realized trade,
\[
 v_i\le (v_i-c_j)+K.
\]
If $R$ is the maximum matching size, the sum of served buyer values is at most
the realized gains of the mechanism plus $RK$.  The mechanism's realized gains
are bounded by realized first best, whose expectation is finite.  Seller costs on served edges are bounded by $RK$, and the sum of absolute
transfers is bounded by the number of agents times $K$.  Therefore utilities and transfers are integrable.
\end{proof}

The final limit concerns only the benchmark. Downward closure makes clipped first-best gains increase to their original value.

\begin{lemma}
\label{lem:capped-fb-convergence}
Let $\FB_K$ be first-best GFT after clipping every value and cost at $K$.  If
types have a common finite lower bound, then after translation to zero,
\[
 \FB_K\uparrow\FB\qquad\text{as }K\to\infty.
\]
\end{lemma}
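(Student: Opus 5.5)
The plan is to copy the argument of \cref{lem:projected-fb-convergence}, with clipping in place of two-sided projection. After translating the common lower bound to zero, every value and cost lies in $[0,\infty)$. For $K'\ge K$ we have $\min\{t,K\}\le\min\{t,K'\}\le t$, and $\min\{t,K\}\to t$ as $K\to\infty$. The first task is to check, for a fixed edge $(i,j)$ and a fixed realization, that the clipped positive surplus $(\min\{v_i,K\}-\min\{c_j,K\})_+$ is nondecreasing in $K$ and converges to $(v_i-c_j)_+$.

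For the monotonicity I will argue by cases on where $K$ sits relative to the two types. If $K\le c_j$, both clipped types equal $K$ (clipping happens whenever $v_i\ge c_j$), so the clipped surplus is zero whenever $v_i\ge K$. If instead $v_i<K\le c_j$, the clipped difference is $v_i-K<0$ and the positive part is again zero. If $c_j<K$, the clipped difference is $\min\{v_i,K\}-c_j$, which is nondecreasing in $K$. Combining the cases, the positive part is zero for $K\le c_j$ and nondecreasing for $K>c_j$. Since it is nonnegative throughout, it is nondecreasing in $K$ overall. Once $K\ge\max\{v_i,c_j\}$, the clipped surplus equals the true surplus, so convergence is in fact eventual equality.

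Next I will rewrite first best as a maximum of nonnegative edge weights. By downward closure, deleting nonpositive edges keeps a matching feasible, so the profilewise first best equals $\max_{M\in\cF}\sum_{(i,j)\in M}(v_i-c_j)_+$. The same identity holds for the clipped types. The family $\cF$ is finite, and each matching's value is a finite sum of nondecreasing, eventually constant terms. Hence the profilewise clipped first best increases to the profilewise true first best.

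Finally, the clipped first best is a nonnegative random variable that increases to the true first best, so the monotone convergence theorem gives $\FB_K\uparrow\FB$. This conclusion holds even when $\FB=\infty$, although elsewhere only the finite case is used. Measurability is immediate because each clipped first best is a finite maximum of Borel functions. The only step needing care is the monotonicity case split, in particular the regime $v_i<K\le c_j$, where the clipped difference is negative. That is why the positive part, and therefore downward closure, is essential. Nothing else here is an obstacle.
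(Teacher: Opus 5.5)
Your proposal is correct and follows the paper's proof essentially step for step. You show edgewise monotone convergence of the clipped positive surplus, use downward closure to write first best as a maximum over the finite family $\cF$ of sums of these nonnegative weights, and finish with monotone convergence; your explicit case split only fills in the edgewise monotonicity that the paper asserts without detail.
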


\begin{proof}
For every edge,
\[
 g_{ij}^K:=\bigl(\min\{v_i,K\}-\min\{c_j,K\}\bigr)_+
 \uparrow (v_i-c_j)_+.
\]
Downward closedness allows every negative edge to be deleted, so profilewise
first best is the maximum feasible sum of these nonnegative edge weights.
It therefore increases pointwise to the original first best.  Monotone
convergence proves the claim.
\end{proof}

\section{Refinements of the Efficiency Bounds}\label{app:refinements}
The local theorem can use more information than a single uniform approximation factor. We first characterize the best uniform factor for a
fixed local support, then allow the target to vary with the seller cap.
We finally strengthen the binary-buyer guarantee using the probability of
the lowest seller cost. These refinements are consequences of the main
reduction, rather than assumptions in either sharp bound.

\subsection{The Best Uniform Local Factor}\label{app:local-factor}
The automatic construction identifies the best factor that can be used
uniformly across all seller caps of a fixed local instance. Use the
buyer support, weakly regular seller, and unnormalized masses of
\cref{sec:cap-monotonicity}.

\begin{corollary}\label{cor:local-coefficient}
If $b_n>s_1$, let
\[
 \kappa_\alpha(\mathbf b,S)=\inf_{r\ge1}\inf_{g\in\Delta_n^\circ}
          \frac{V_\alpha(g,S^{\le r})}{\FB(g,S^{\le r})}.
\]
This is exactly the largest $\beta$ for which a cap-monotone family satisfies
$\mathcal D_\ell^r(z^r)\ge\beta\sum_{k\le r}f_k(b_\ell-s_k)_+$.
The largest coefficient is attained and, on rational inputs, is the value of
a polynomial-size linear program.
\end{corollary}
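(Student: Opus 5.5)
The plan is to apply \cref{thm:compatible-targets} with proportional targets $C_\ell^r=\beta\sum_{k\le r}f_k(b_\ell-s_k)_+$. These targets satisfy $C_\ell^0=0$ and are nonnegative and nondecreasing in $r$, as that theorem requires. By the theorem, a cap-monotone family meeting them exists if and only if condition (i) holds. Condition (i) says $F_rV_\alpha(g,S^{\le r})\ge\beta\sum_\ell g_\ell\sum_{k\le r}f_k(b_\ell-s_k)_+$ for every $r\ge1$ and every positive $g$. The right side equals $\beta F_r\FB(g,S^{\le r})$, since conditioning on the cap rescales by $F_r$. So (i) reads $V_\alpha(g,S^{\le r})\ge\beta\FB(g,S^{\le r})$.

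When $\FB(g,S^{\le r})>0$, this is $\beta\le V_\alpha/\FB$. The hypothesis $b_n>s_1$ makes $\FB>0$ for every $r\ge1$ and every interior $g$, because $g_n>0$ and the pair $(b_n,s_1)$ has positive surplus. Hence the admissible $\beta$ are exactly those with $\beta\le\kappa_\alpha$. This identifies $\kappa_\alpha$ as the supremum of admissible factors.

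For attainment, the admissible set $\{\beta:\text{(ii) holds}\}$ is closed. The conditions in (ii) are finitely many linear inequalities in $(z^1,\dots,z^m,\beta)$ over compact polytopes, so the feasible $\beta$ form a closed interval. Equivalently, the equivalence shows that $\beta=\kappa_\alpha$ itself satisfies (i) pointwise, since an infimum of ratios bounds every ratio. Note also that $\kappa_\alpha\ge0$ and that $\kappa_\alpha$ is finite.

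For the LP, I would maximize $\beta$ subject to $z^r\in\mathcal X_r$, $z^r\le z^{r+1}$, and $\mathcal D_\ell^r(z^r)\ge\beta\sum_{k\le r}f_k(b_\ell-s_k)_+$. These constraints are linear in $(z,\beta)$. The program has $O(m^2n)$ variables and constraints with rational coefficients, hence polynomial size. It is feasible at $\beta=0$ with $z=0$ and bounded because the $z$ are bounded and some coefficient of $\beta$ is positive, so its optimum is attained.

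The main subtle step is the boundary of the simplex. The statement uses the open simplex $\Delta_n^\circ$, while the minimax proof in \cref{thm:compatible-targets} passes to closed $\Delta_n$ by continuity. I would check that no issue arises here: (i) is only demanded for positive $g$, and the theorem's proof already handles the closure. The equivalence is therefore applied verbatim.
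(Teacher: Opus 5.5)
Your proof is correct and follows essentially the paper's route: apply \cref{thm:compatible-targets} to the proportional targets $\beta\sum_{k\le r}f_k(b_\ell-s_k)_+$, which are nondecreasing in $r$, and then maximize $\beta$ in an LP with report- and cap-monotonicity constraints. The differences are minor: you bound the LP using bounded $z$ and the positive coefficient $f_1(b_n-s_1)$ instead of the paper's IIR bound $\beta\le1+\alpha$, and you obtain attainment at $\kappa_\alpha$ directly from the theorem's open-simplex form of (i), whereas the paper repeats the minimax-plus-perturbation argument.
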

\begin{proof}[Proof of \cref{cor:local-coefficient}]
The targets are increasing in $r$, so \cref{thm:compatible-targets} proves the
equivalence. Maximize $\beta$ with the displayed linear inequalities,
monotonicity in reports, and cap-monotonicity. There are $O(nm^2)$ allocation variables.
IIR gives $\Lambda\le\GFT\le\FB$, hence $0\le\beta\le1+\alpha$ is enough.
The feasible region is nonempty and compact. If the coefficient fails, minimax
at some cap gives a strictly violating buyer vector; a small positive
perturbation removes any zero masses while preserving the violation.
\end{proof}

\subsection{Bounds That Depend on Seller Caps}\label{ct:app:targets}
A uniform factor takes the worst value over all seller caps. To
  {preserve}  the stronger bounds available at some caps, we instead build
a table of cumulative targets. The compatibility theorem requires these
targets to increase when the cap expands.

Fix the local buyer support and the $\alpha$-weakly regular seller of
\cref{thm:compatible-targets}; $\mathcal D_\ell^r$ denotes the type contribution
in \cref{eq:type-contribution}. The masses in every target are unnormalized.
Assume $b_n>s_1$. For each seller cap define its own worst-case bilateral factor,
without minimizing over the other caps:
\begin{equation}\label{ct:eq:cap-rate}
 \kappa_r=\inf_{g\in\Delta_n^\circ}
 \frac{V_\alpha(g,S^{\le r})}{\FB(g,S^{\le r})},
 \qquad B_\ell^r=\sum_{k\le r}f_k\pos{b_\ell-s_k}.
\end{equation}
Every denominator is positive. Also $0\le\kappa_r\le1+\alpha$, since interim
individual rationality gives $\Lambda\le\GFT$. By the same minimax argument,
$\kappa_r$ is the maximum $\kappa\in[0,1+\alpha]$ satisfying
$\mathcal D_\ell^r(z)\ge\kappa B_\ell^r$ for some $z\in\mathcal X_r$ and all
$\ell$. This is a polynomial-size linear program, and its maximum is attained.

The individual bounds $\kappa_r B_\ell^r$ need not increase with $r$.
Taking a minimum over all larger caps gives the largest increasing
table lying below them, as the next statement formalizes.

\begin{corollary}\label{ct:cor:envelope}
The targets
\begin{equation}\label{ct:eq:envelope}
 C_\ell^r=\min_{q\ge r}\kappa_q B_\ell^q,
 \qquad C_\ell^0=0,
\end{equation}
admit a cap-monotone family. They are the componentwise largest targets that
increase with the cap and satisfy $C_\ell^r\le\kappa_rB_\ell^r$.
In particular, with $\underline\kappa=\min_q\kappa_q$,
\begin{equation}\label{ct:eq:dominate}
 C_\ell^r\ge\underline\kappa B_\ell^r.
\end{equation}
\end{corollary}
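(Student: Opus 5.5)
The plan is to verify the hypotheses of \cref{thm:compatible-targets}(i) for the targets \cref{ct:eq:envelope}, which then yields the cap-monotone family, and to handle the extremality and domination claims by elementary order arguments.

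\emph{Hypotheses of \cref{thm:compatible-targets}.} We need $C_\ell^0=0$ and $0\le C_\ell^r\le C_\ell^{r+1}$. Nonnegativity is clear, since $\kappa_q\ge0$ and $B_\ell^q\ge0$. Monotonicity in $r$ is immediate: $C_\ell^r$ is a minimum over the index set $\{q\ge r\}$, which shrinks as $r$ grows.

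\emph{Condition (i).} Fix a nonempty cap $r$ and a positive buyer vector $g$. We have
\[
 F_rV_\alpha(g,S^{\le r})\ge F_r\kappa_r\FB(g,S^{\le r})=\kappa_r\sum_\ell g_\ell B_\ell^r\ge\sum_\ell g_\ell C_\ell^r .
\]
The first inequality is the definition of $\kappa_r$ in \cref{ct:eq:cap-rate}. The equality holds because the renormalized first best, multiplied by $F_r$, is exactly the unnormalized $B$-weighted sum. The last inequality uses $C_\ell^r\le\kappa_rB_\ell^r$, which follows by taking $q=r$ in the minimum. One caveat: if some $\FB(g,S^{\le r})=0$ (every $b_\ell\le s_1$), the bound is trivial because $V_\alpha\ge0$ from no trade. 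The theorem then supplies $z^r\in\mathcal X_r$ with $\mathcal D_\ell^r(z^r)\ge C_\ell^r$ and $z^r$ coordinatewise nondecreasing in $r$.

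\emph{Extremality.} Let $\widetilde C$ be any table that is nondecreasing in $r$ and satisfies $\widetilde C_\ell^r\le\kappa_rB_\ell^r$. For every $q\ge r$, monotonicity gives $\widetilde C_\ell^r\le\widetilde C_\ell^q\le\kappa_qB_\ell^q$. Minimizing over $q$ yields $\widetilde C_\ell^r\le C_\ell^r$. Since $C$ itself belongs to this class, it is the componentwise largest member.

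\emph{Domination.} For \cref{ct:eq:dominate}, use $\kappa_q\ge\underline\kappa$ together with the fact that $B_\ell^q$ is nondecreasing in $q$: enlarging the cap only adds nonnegative terms $f_k\pos{b_\ell-s_k}$. Hence $\kappa_qB_\ell^q\ge\underline\kappa B_\ell^r$ for every $q\ge r$.

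\emph{Main obstacle.} The only delicate point is the normalization bookkeeping in condition (i). The $\kappa_r$ are defined with the conditional prior $S^{\le r}$, while the targets use unnormalized masses, so one must check that the factor $F_r$ matches on both sides exactly as in \cref{eq:local-envelope}. The degenerate zero-first-best caps also need the separate check noted above.
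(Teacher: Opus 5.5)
Your proof is correct and follows the paper's argument: monotonicity from the shrinking index set, the premise of \cref{thm:compatible-targets} from $C_\ell^r\le\kappa_rB_\ell^r$ (the paper phrases this as per-cap attainability through that cap's linear program, which is equivalent to your direct use of the infimum defining $\kappa_r$), and the same order arguments for maximality and for \cref{ct:eq:dominate}. Your zero-first-best caveat never arises, since the standing assumption $b_n>s_1$ makes every denominator positive for positive $g$.
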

\begin{proof}
The minimum in \eqref{ct:eq:envelope} is taken over a shrinking set, so the targets
increase with $r$. They are nonnegative and at most $\kappa_rB_\ell^r$.
The latter target is attainable at every buyer type by the linear program for that
cap, so \cref{thm:compatible-targets} applies. If another increasing
target table $T$ lies below $\kappa_rB_\ell^r$, then for every $q\ge r$,
$T_\ell^r\le T_\ell^q\le\kappa_qB_\ell^q$. Taking the minimum proves
maximality. Finally $B_\ell^q\ge B_\ell^r$ and $\kappa_q\ge\underline\kappa$
prove \eqref{ct:eq:dominate}.
\end{proof}
The maximality statement concerns increasing target tables bounded by
$\kappa_rB_\ell^r$; it does not assert optimality among all matching
mechanisms or all attainable target tables. The construction makes the
cumulative gain targets monotone, which is the property required by the
composition theorem.

\paragraph{The Market-Level Benchmark.}
We next sum the local targets over the edges selected by first best. This
gives a benchmark that can   {preserve}  differences among edges as well as among
seller caps.

Consider any finite independent scalar matching market with downward-closed
feasibility. Fix $\alpha$ and assume every seller is $\alpha$-weakly regular.
Use the canonical first-best matching $M^*$ from \cref{lem:stable-partner-v2}. For each edge
$e=(i,j)$ with possible positive surplus, construct a table $C_{e,\ell}^r$
satisfying \cref{thm:compatible-targets} on its original endpoint priors and buyer
support. In particular, one may use \eqref{ct:eq:envelope}. For residual reports
$\omega=\theta_{-ij}$, let $r_{e,\omega}(\ell)$ be the number of seller types
at which $M^*$ selects $e$ when buyer $i$ reports $b_i^\ell$.
It is a prefix length, increases with $\ell$, and does not depend on the
current seller report. Put
\begin{equation}\label{ct:eq:market-target}
 \mathcal B_\alpha(\mathcal I)=
 \sum_{e=(i,j)}\E_{\omega}\Bigl[\sum_\ell g_{i\ell}C_{e,\ell}^{r_{e,\omega}(\ell)}\Bigr].
\end{equation}
Seller masses are already included in $C$, so no second seller-probability
factor appears in this formula.

\begin{theorem}\label{ct:thm:global}
Under the preceding assumptions,
\[
 V_\alpha(\mathcal I)\ge\mathcal B_\alpha(\mathcal I).
\]
For the envelope targets, if
$\underline\kappa_e=\min_r\kappa_{e,r}$ and
$\FB_e=\E[(v_i-c_j)\1\{e\in M^*\}]$, then
\[
 \mathcal B_\alpha(\mathcal I)\ge
 \sum_e\underline\kappa_e\FB_e.
\]
\end{theorem}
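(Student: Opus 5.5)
The plan is to obtain both inequalities from results already proved: the first from the edge-by-edge composition in \cref{thm:edge-composition}, fed by the cap-monotone families of \cref{thm:compatible-targets}, and the second from \eqref{ct:eq:dominate} together with the definition of $\FB_e$.

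\textbf{First inequality.} Fix $\alpha$ and an edge $e=(i,j)$ that can have positive surplus. Seller $j$ is $\alpha$-weakly regular, so the setting of \cref{sec:cap-monotonicity} applies directly to the endpoint priors of $e$. The table $C_{e,\ell}^r$ satisfies the hypotheses of \cref{thm:compatible-targets}: it starts at $C_{e,\ell}^0=0$, is nonnegative, and is nondecreasing in $r$. That theorem therefore gives arrays $z_e^r\in\mathcal X_r$ that are nonincreasing in the seller index, nondecreasing in the buyer index, and coordinatewise nondecreasing in $r$, with $\mathcal D_\ell^r(z_e^r)\ge C_{e,\ell}^r$. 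For edges that never have positive surplus, I use the zero allocation with zero targets.

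These arrays depend only on the endpoint priors of $e$, not on the residual profile $\omega$. So for every $\omega$ they form a family $z^a$ indexed by caps, and the empty cap gets zero allocation and target zero. By the identification preceding \eqref{eq:rent-correction}, $\mathcal D_\ell^r$ equals the fixed-cap objective $C^a(b_\ell)$ of \eqref{eq:fixed-cap-lagrangian}. This uses the normalized seller receipt with discrete virtual costs, and the weak regularity makes those virtual costs the relevant scores.

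Applying \cref{thm:edge-composition} with targets $T^a(b_\ell)=C_{e,\ell}^{r}$ for the cap of $r$ types gives \eqref{eq:composition-bound}, and its right-hand side is \eqref{ct:eq:market-target}. Two points need checking here.
\begin{itemize}[leftmargin=2em,itemsep=1pt]
\item The first-best cap $a_{e,\omega}(b_i^\ell)$ is exactly the prefix of $r_{e,\omega}(\ell)$ seller types. This holds by \cref{lem:stable-partner-v2}: the prefix is independent of the seller's own report and increases with $\ell$.
\item The finite model needs no measurability hypothesis.
\end{itemize}
This proves $V_\alpha(\mathcal I)\ge\mathcal B_\alpha(\mathcal I)$.

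\textbf{Second inequality.} For the envelope targets \eqref{ct:eq:envelope}, the bound \eqref{ct:eq:dominate} gives $C_{e,\ell}^r\ge\underline\kappa_e B_{e,\ell}^r$, where $B_{e,\ell}^r=\sum_{k\le r}f_{jk}(b_i^\ell-s_j^k)_+$. Substituting this into \eqref{ct:eq:market-target} and taking $r=r_{e,\omega}(\ell)$ turns the inner sum into
\[
\sum_\ell g_{i\ell}\sum_{k\le r_{e,\omega}(\ell)}f_{jk}(b_i^\ell-s_j^k)_+
=\E_{v_i,c_j}\bigl[(v_i-c_j)_+\1\{e\in M^*\}\bigm|\omega\bigr].
\]
This equality holds because $M^*$ selects $e$ exactly when $c_j$ lies in that prefix, given $\omega$ and $v_i$. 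By \cref{lem:stable-partner-v2}, every selected edge has positive surplus, so the positive part equals $v_i-c_j$ on the event $e\in M^*$. Averaging over $\omega$ then yields $\FB_e$, and summing over $e$ gives the bound.

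\textbf{Main obstacle.} The delicate step is matching the indexing. The rule $z^r$ is constructed with unnormalized seller masses and depends only on the cap size $r$. It must agree with the composition theorem's convention that $m^a$ uses the original probabilities, not probabilities renormalized by $\Pr[s\in a]$. The rent correction \eqref{eq:rent-correction} is nonnegative only because of the cap-monotonicity supplied by \cref{thm:compatible-targets}, so that property is what carries the first inequality.
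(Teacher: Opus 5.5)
Your proposal is correct and follows the paper's own proof: you take cap-monotone arrays from \cref{thm:compatible-targets}, identify $\mathcal D_\ell^r$ with the fixed-cap Lagrangian $C^a(b_\ell)$, and apply \cref{thm:edge-composition} to reach \eqref{eq:cap-target-bound}, whose right side is $\mathcal B_\alpha(\mathcal I)$; then \eqref{ct:eq:dominate}, together with the prefix description of the event $\{e\in M^*\}$ and independence from $\omega$, gives $\sum_e\underline\kappa_e\FB_e$. Two wording slips, neither a gap: the arrays exist because the table is assumed to satisfy the equivalent conditions (i)/(ii) of \cref{thm:compatible-targets}, not merely because $C^0=0$ and the table is monotone in $r$; and weak regularity is what that construction requires, whereas the identity $\mathcal D_\ell^r=C^a(b_\ell)$ is plain summation by parts and needs no regularity.
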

\begin{proof}
For each edge and residual profile, \cref{thm:compatible-targets} supplies
cap-monotone rules with $\mathcal D_\ell^r\ge C_{e,\ell}^r$.
The finite Lagrangian for a fixed seller cap in \cref{eq:fixed-cap-lagrangian} is exactly
$\mathcal D_\ell^r$. Thus \cref{thm:edge-composition} gives
\cref{eq:cap-target-bound}, whose right side is
$\mathcal B_\alpha(\mathcal I)$ by \cref{ct:eq:market-target}.
For the envelope targets, \cref{ct:eq:dominate} bounds each target below by
$\underline\kappa_e B_{e,\ell}^{r_{e,\omega}(\ell)}$.
Averaging over the original endpoint and residual priors turns the latter
surplus into $\underline\kappa_e\FB_e$. Summing proves the second claim.
\end{proof}

\paragraph{Budget and Irregular Sellers.}
The allocation in Theorem~\ref{ct:thm:global} certifies a fixed-multiplier value;
it is not asserted to have nonnegative revenue. If all sellers are regular
for every multiplier, the finite budget dual gives
$\SB(\mathcal I)\ge\inf_{\alpha\ge0}\mathcal B_\alpha(\mathcal I)$.
For arbitrary sellers, apply the full-distribution ironing
transformation at each multiplier to obtain $\widehat{\mathcal I}^{\alpha}$.
It gives the safe statement
\[
 \SB(\mathcal I)\ge
 \inf_{\alpha\ge0}\mathcal B_\alpha(\widehat{\mathcal I}^{\alpha}),
 \qquad V_\alpha(\mathcal I)=V_\alpha(\widehat{\mathcal I}^{\alpha}).
\]
This evaluates the new benchmark in the transformed market. An unweighted
first-best dominance statement does \emph{not} identify the weighted edge
contributions of the original and transformed markets. No efficient global
minimization of this last displayed bound is claimed.

\subsection{Binary Buyers and the Lowest-Cost Probability}\label{app:bottom-mass}
The sharp binary-buyer bound allows the seller's lowest cost to be
arbitrarily rare. A positive lower bound on its probability improves the
guarantee even when the number of seller types is unbounded.

\begin{theorem}\label{thm:bottom-mass}
Fix $p\in(0,1)$. Consider independent finite matching markets with binary
buyers, arbitrary downward-closed feasibility, and
$\Pr[C_j=\min\operatorname{supp}(C_j)]\ge p$ for every seller.
With no bound on seller support size, for every $\alpha>0$ and
$q=\alpha/(1+\alpha)$ the exact worst-case coefficient is
\begin{equation}\label{eq:mass-curve}
 \inf_{\mathcal I}\frac{V_\alpha(\mathcal I)}{\FB(\mathcal I)}
 =\frac{1}{1+q-q^2-q\,p^{(1-q)/q}}.
\end{equation}
The infimum excludes $\FB=0$. The exact second-best coefficient is
\[
 \gamma(p)=\left[\max_{0\le q\le1}
       \{1+q-q^2-q\,p^{(1-q)/q}\}\right]^{-1},
\]
with the continuous endpoint convention at $q=0$.
It satisfies $\gamma(p)>4/5$, tends to $4/5$ as $p\downarrow0$, and tends
to $1$ as $p\uparrow1$. Both infima are approached bilaterally with rational
data satisfying the probability restriction.
\end{theorem}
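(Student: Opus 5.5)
The plan mirrors the proof of \cref{thm:curve}. First prove a bilateral lower bound $V_\alpha\ge\FB/D(q,p)$, where $D(q,p)=1+q-q^2-q\,p^{(1-q)/q}$. Then transfer it to matching markets through \cref{thm:compatible-targets}, \cref{eq:cap-target-bound} and \cref{prop:regularization-main}. Finally, build bilateral instances approaching equality and analyze $\max_q D(q,p)$.

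\emph{Bilateral lower bound.} I would keep \cref{hs:lem:bilateral} verbatim and replace only the factor $1-q^m$ in \cref{hs:eq:cutoff-bound} by $1-q\,p^{1/\alpha}$, noting that $1/\alpha=(1-q)/q$. With $A_k=\sum_{h\le k}f_h(v-s_h)$, testing prefix $k$ in \cref{hs:eq:K} without discarding $F_{k-1}$ gives
\[
 K_\alpha(v)\ge A_k+\alpha\frac{F_k}{f_k}(A_k-A_{k-1}),
 \quad\text{so}\quad
 A_k\le\frac{f_kK+\alpha F_kA_{k-1}}{f_k+\alpha F_k}.
\]
This is a discretization of the differential inequality $A+\alpha F\,A'(F)\le K$. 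Its continuous solution started at $F=f_1\ge p$ with $A_1\le(1-q)K$ yields $A\le K\bigl(1-q(p/F)^{1/\alpha}\bigr)\le K(1-qp^{1/\alpha})$. I would prove the discrete analogue by induction on the claim $A_k\le K\bigl(1-q(f_1/F_k)^{1/\alpha}\bigr)$. With $x=F_{k-1}/F_k\in(0,1)$, the inductive step reduces to the one-variable inequality
\[
 \frac{\alpha}{1+\alpha-x}\le x^{1/\alpha}\quad\text{in the appropriate direction,}
\]
which controls how much of the deficit $K-A_{k-1}$ survives one step. This is where I expect the main obstacle. The direction of the needed comparison between the discrete step and the exact exponential decay must be checked carefully. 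If the naive inequality fails, my first fallback is to bound the deficit $K-A_k\ge(K-A_{k-1})\cdot\frac{f_k}{f_k+\alpha F_k}+\dots$ multiplicatively and compare $\prod_k\frac{\alpha F_k}{f_k+\alpha F_k}$ with $(f_1/F_r)^{1/\alpha}$ via $\log(1+u)\le u$.

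Once $H(v)\le(1-qp^{1/\alpha})K_\alpha(v)$ holds, \cref{hs:eq:cutoff-slope} is unchanged. The chain \cref{hs:eq:binary-rent-bound} then gives
\[
 \FB\le\bigl(1-qp^{1/\alpha}+q(1-q)\bigr)V_\alpha=D(q,p)V_\alpha .
\]

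\emph{Matching extension.} Regularization keeps the masses $f_k$ and only moves support points, so the bottom mass is preserved. A prefix cap with the original masses, renormalized, has bottom probability $f_1/F_r\ge f_1\ge p$, and buyer supports stay binary. Hence \cref{thm:compatible-targets} applies with factor $D(q,p)^{-1}$ at every cap, and composition plus \cref{eq:regularization-summary} finishes the argument. Budget duality then gives $\gamma(p)$.

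\emph{Tightness and constants.} I would adapt \cref{eq:family}: take $f_0=p$ and geometric $F_k$ with ratio $\eta\uparrow1$, many steps, so that $\prod(F_{k-1}/F_k)\to p$ with bottom mass exactly $p$. Costs are chosen by the same recursion, so that every nonbottom seller has combined score $1+\alpha$, and the buyer is as in \cref{eq:family} with $t\downarrow0$. The recurrence then becomes the ODE in the limit and the ratio tends to $D(q,p)^{-1}$; a rational approximation follows as before. For the constants, $D(q,p)<1+q-q^2\le5/4$ gives $\gamma(p)>4/5$. As $p\downarrow0$, $D(1/2,p)\to5/4$, so $\gamma(p)\to4/5$. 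As $p\uparrow1$, $D(q,p)\to1-q^2\le1$, with uniform convergence on $[0,1]$ handled near $q=0$ by the endpoint convention, so $\gamma(p)\to1$.
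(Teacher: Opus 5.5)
Your proposal is correct. It differs from the paper mainly in how you prove the key cutoff estimate.

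\textbf{The inductive step.} The inequality you need is $\theta_k=\frac{\alpha}{1+\alpha-x}\ge x^{1/\alpha}$ for $x=F_{k-1}/F_k\in(0,1)$: the surviving fraction of the deficit $K-A_{k-1}$ must dominate the exponential decay. The direction is $\ge$, not the $\le$ you displayed. This inequality holds, and your fallback proves it. With $y=1-x=f_k/F_k$, you get $\log\theta_k=-\log(1+y/\alpha)\ge-y/\alpha\ge\frac1\alpha\log(1-y)$. Telescoping from $\theta_1=q$ then gives $\prod_{k\le k_*}\theta_k\ge q(f_1/F_{k_*})^{1/\alpha}\ge q\,p^{1/\alpha}$ directly.

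\textbf{Comparison with the paper.} The paper first proves the sharper finite-support bound of \cref{lem:mass-cutoff}. It applies Jensen's inequality to the concave function $h(u)=\log(1+\alpha-e^{-u})$ at the average of the increments $u_k=\log(F_k/F_{k-1})$, which yields $P_{m,p}(q)$. It then shows that $P_{m,p}$ is nonincreasing in $m$ with limit $q\,p^{1/\alpha}$. Your one-step inequality is exactly the tangent-line bound $h(u)\le h(0)+h'(0)u=\log\alpha+u/\alpha$ for that same concave $h$, i.e.\ the $m\to\infty$ limit of the Jensen step.
\begin{itemize}
\item Your route is shorter: the lower bound needs no limit in $m$ and no monotonicity argument.
\item The paper's route also yields the exact finite-$m$ coefficients of \cref{thm:mass-finite}.
\end{itemize}
Your transfer to matching markets matches the paper's. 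Your tight family is also the paper's: \cref{eq:family} with $\eta=p^{1/(m-1)}$, letting $t\downarrow0$ and then $m\to\infty$.

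\textbf{Details to tighten.}
\begin{itemize}
\item \emph{Rational data.} The approximation must keep the lowest seller mass at least $p$, for instance by approximating $\eta$ from above. Saying ``as before'' does not cover this.
\item \emph{The limit $p\uparrow1$.} Uniform convergence near $q=0$ does not come from the endpoint convention. It comes from bounding the correction term by $0\le q(1-p^{(1-q)/q})\le q$.
\item \emph{The bound $\gamma(p)>4/5$.} You need $D(\cdot,p)<5/4$ pointwise together with attainment of the maximum on $[0,1]$ by continuity. Your strict inequality $D(q,p)<1+q-q^2$ fails at $q=0$, but there $D=1$, which is still below $5/4$.
\end{itemize}
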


We prove a stronger finite-support formula first and then let the number
of seller types grow. Only the cutoff calculation changes; the same local
reduction gives the matching guarantee. Fix $p\in(0,1)$ and $\alpha>0$,
and let $q=\alpha/(1+\alpha)$.
For $m\ge2$ define
\begin{equation}\label{eq:mass-P}
 P_{m,p}(q)=q\left(\frac{\alpha}{1+\alpha-p^{1/(m-1)}}\right)^{m-1},
 \qquad D_{m,p}(q)=1+q-q^2-P_{m,p}(q).
\end{equation}
Put $P_{1,p}(q)=q$. At $q=0$ set every $P_{m,p}$ to zero. Let
$\mathfrak M_{2,m}(p)$ be the class $\mathfrak M_{2,m}$ with every seller's
lowest-cost atom of probability at least $p$. The next theorem quantifies
the combined improvement from this probability restriction and the support
bound.

\begin{theorem}\label{thm:mass-finite}
For every $m\ge1$ and $\alpha\ge0$,
\[
 \inf_{\mathcal I\in\mathfrak M_{2,m}(p),\,\FB>0}
       \frac{V_\alpha(\mathcal I)}{\FB(\mathcal I)}
 =\frac{1}{D_{m,p}(q)}.
\]
The same infimum is obtained on bilateral instances and on rational instances.
The exact worst-case second-best coefficient is $1/\max_{0\le q\le1}D_{m,p}(q)$,
where the endpoint values are defined by continuity.
\end{theorem}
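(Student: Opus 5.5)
The lower bound comes from sharpening the seller cutoff recurrence in \cref{hs:lem:cutoff} using the ratios of cumulative seller masses; the upper bound comes from rerunning the family \cref{eq:family} with those ratios held at $p^{1/(m-1)}$ instead of $\eta\downarrow0$. The buyer-side argument and the composition step are reused unchanged.

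\emph{Refined cutoff bound.} Fix a seller with costs $s_1<\cdots<s_r$, $r\le m$, and $f_1\ge p$. Keep the notation $A_k=\sum_{h\le k}f_h(v-s_h)$ for $k\le k_*$ and put $\rho_k=F_{k-1}/F_k$. Testing prefix $k$ in \cref{hs:eq:K} gives, without discarding $F_{k-1}$,
\[
 K_\alpha(v)\ge A_k+\alpha F_k(v-s_k)=A_k+\alpha\frac{F_k}{f_k}(A_k-A_{k-1}).
\]
In terms of the deficit $e_k=K_\alpha(v)-A_k$ this rearranges to
\[
 e_k\ge\frac{\alpha}{1+\alpha-\rho_k}\,e_{k-1},\qquad e_0=K_\alpha(v).
\]
At $k=1$ we have $\rho_1=0$, so the factor is $q$. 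Hence
\[
 K_\alpha(v)-H(v)\ge qK_\alpha(v)\prod_{k=2}^{k_*}\frac{\alpha}{1+\alpha-\rho_k},
 \qquad\prod_{k=2}^{k_*}\rho_k=\frac{F_1}{F_{k_*}}\ge p.
\]
The case $k_*=0$ is trivial, exactly as in \cref{hs:lem:cutoff}.

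We then minimize the product subject to $\prod\rho_k\ge p$. First pad to exactly $m-1$ factors using $\rho=1$; a padded factor equals $\alpha/\alpha=1$, so padding changes nothing. Substituting $x_k=\log\rho_k\le0$, each factor becomes $\log\alpha-\log(1+\alpha-e^{x})$. This is convex in $x$, because $e^x/(1+\alpha-e^x)$ is increasing. By convexity and monotonicity, the minimum over $\sum x_k\ge\log p$ is attained at equal $x_k$, that is, at $\rho_k=p^{1/(m-1)}$. This yields
\[
 H(v)\le\bigl(1-P_{m,p}(q)\bigr)K_\alpha(v),
\]
and for $m=1$ the same recurrence gives $H\le(1-q)K_\alpha$.

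I expect this optimization to be the main delicate point. It requires confirming the convexity direction, and confirming that inserting unit factors is compatible with a support of size at most $m$. The slope inequality \cref{hs:eq:cutoff-slope} does not change.

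\emph{Lower bound.} Repeat the proof of \cref{hs:lem:bilateral} verbatim, with $1-q^m$ replaced by $1-P_{m,p}(q)$. The two-cutoff allocation and the rent bound \cref{hs:eq:binary-rent-bound} still add the term $q(1-q)$, which gives $\FB\le D_{m,p}(q)V_\alpha$ bilaterally.

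For matching markets, apply \cref{prop:regularization-main}. It keeps the masses, so the lowest atom still has mass at least $p$. Conditioning on a nonempty cap gives lowest-atom probability $f_1/F_r\ge p$, so every local instance lies in $\mathfrak M_{2,m}(p)$. \Cref{thm:compatible-targets} together with \cref{eq:cap-target-bound} then transfers the factor to the matching market.

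\emph{Tightness.} Modify \cref{eq:family} by setting $F_k=p^{(m-1-k)/(m-1)}$, so $f_0=p$ and every ratio $F_{k-1}/F_k$ equals $p^{1/(m-1)}$. Define $r_k=\bigl(\alpha p^{1/(m-1)}/(1+\alpha-p^{1/(m-1)})\bigr)^k$, and keep the binary buyer with $t\downarrow0$.

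The proof of \cref{lem:family} goes through with $\eta$ replaced by $p^{1/(m-1)}$. Again only the pair $(1,0)$ carries positive Lagrangian weight, so $V_\alpha=(1+\alpha)tf_0$. The analogue of \cref{eq:family-F} gives
\[
 \frac{\FB}{tf_0}\to 1+\alpha+\alpha-\alpha\Bigl(\frac{\alpha}{1+\alpha-p^{1/(m-1)}}\Bigr)^{m-1}=(1+\alpha)D_{m,p}(q).
\]
Rational approximation handles irrational $p^{1/(m-1)}$. Raising the lowest mass slightly above $p$ keeps the instance in the class and perturbs the ratio continuously.

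The second-best statement then follows as in the completion of \cref{thm:curve}. Budget duality gives the lower bound $1/\max_q D_{m,p}(q)$. Weak duality applied to the tight family at a maximizing $q$ gives the matching upper bound. The endpoint values at $q=0$ and $q=1$ are taken by continuity.
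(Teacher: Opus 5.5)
Your proposal is correct and follows the paper's proof step for step. It uses the same strengthened recurrence with factors $\theta_k=\alpha F_k/(f_k+\alpha F_k)$ and the same padding-plus-Jensen step; your convexity of $-\log(1+\alpha-e^{x})$ in $x=\log\rho_k$ is the paper's concavity of $\log(1+\alpha-e^{-x})$ after $x\mapsto -x$. It likewise reuses the binary-buyer argument verbatim, transfers to matching markets by mass-preserving regularization and prefix conditioning, and proves tightness with the family \eqref{eq:family} at $\eta=p^{1/(m-1)}$.

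There is one arithmetic slip to fix in the tightness limit. Since $\alpha(1-t)/(1+\alpha-t)\to\alpha/(1+\alpha)$, the limit of $\FB/(tf_0)$ is $1+\alpha+\frac{\alpha}{1+\alpha}-\alpha\bigl(\frac{\alpha}{1+\alpha-p^{1/(m-1)}}\bigr)^{m-1}$, not the expression with a bare $\alpha$ in the middle. Only this corrected expression equals $(1+\alpha)D_{m,p}(q)$.
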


The key step strengthens the cutoff recurrence from
\cref{hs:eq:H,hs:eq:K}. Keeping the cumulative probabilities in that
recurrence allows us to use the lower bound on the first atom.

\begin{lemma}\label{lem:mass-cutoff}
Let a seller have $r\le m$ types, with lowest atom $f_1\ge p$.
For the functions $H,K_\alpha$ in \cref{hs:eq:H,hs:eq:K},
  \begin{equation}\label{eq:mass-cutoff}
 H(v)\le (1-P_{m,p}(q))K_\alpha(v).
\end{equation}
No regularity hypothesis is required.
\end{lemma}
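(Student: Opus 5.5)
The plan is to rerun the cutoff recurrence from the proof of \cref{hs:lem:cutoff}, but without discarding the cumulative probability $F_k$, and then to optimize the resulting product of contraction factors subject to the atom constraint $f_1\ge p$. Let $k_*$ be the number of costs strictly below $v$. If $k_*=0$ the claim is trivial because $H(v)=0\le K_\alpha(v)$. Otherwise put $A_k=\sum_{h\le k}f_h(v-s_h)$, so that $H(v)=A_{k_*}$. Write $\rho_k=F_{k-1}/F_k\in(0,1)$ for $k\ge2$, and note $\rho_1=0$.

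\emph{Step 1: sharpened recurrence.} Testing prefix $k\le k_*$ in \cref{hs:eq:K} gives $K_\alpha(v)\ge A_k+\alpha F_k(v-s_k)$. Since $v-s_k=(A_k-A_{k-1})/f_k$ and $F_k/f_k=1/(1-\rho_k)$, this rearranges to
\[
 K_\alpha(v)-A_k\ \ge\ \frac{\alpha}{1+\alpha-\rho_k}\bigl(K_\alpha(v)-A_{k-1}\bigr).
\]
We then induct from $A_0=0$. At $k=1$ the factor is $\alpha/(1+\alpha)=q$. Hence
\[
 K_\alpha(v)-H(v)\ \ge\ qK_\alpha(v)\prod_{k=2}^{k_*}\frac{\alpha}{1+\alpha-\rho_k},
\]
using $K_\alpha(v)\ge0$. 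The case $m=1$, or more generally $k_*=1$, already gives $H\le(1-q)K_\alpha=(1-P_{1,p})K_\alpha$. The case $\alpha=0$ is trivial.

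\emph{Step 2: optimizing the product.} The constraint is that the product telescopes: $\prod_{k=2}^{k_*}\rho_k=F_1/F_{k_*}\ge f_1\ge p$. Set $j=k_*-1\le m-1$ and $x_k=\log\rho_k\le0$, so $\sum x_k\ge\log p$. We must show $\sum_k g(x_k)\le j\,g(\log p/j)$ for $g(x)=\log(1+\alpha-e^x)$. The function $g$ is decreasing and concave, since $g''(x)=-(1+\alpha)e^x/(1+\alpha-e^x)^2<0$. Jensen's inequality and the monotonicity of $g$ then give
\[
 \prod_{k=2}^{k_*}\frac{\alpha}{1+\alpha-\rho_k}\ \ge\ \Bigl(\frac{\alpha}{1+\alpha-p^{1/j}}\Bigr)^{j}.
\]

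\emph{Step 3: monotonicity in $j$.} It remains to replace $j$ by $m-1$. The cleanest route is to allow $\rho_k\in(0,1]$ in Step 2. Appending a factor with $\rho=1$ contributes exactly $\alpha/\alpha=1$. So the infimum over $j$ factors is at least the infimum over $j+1$ factors, and Jensen on the closed range still yields the same closed-form bound. Therefore the bound at $j=m-1$ is the smallest. This gives $K_\alpha-H\ge P_{m,p}(q)K_\alpha$, which is \cref{eq:mass-cutoff}. No regularity is used anywhere, because Step 1 only tests prefixes.

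I expect the main obstacle to be Step 2, specifically getting the right variable for Jensen. Equalizing the $\rho_k$ is optimal only after passing to $x=\log\rho$, where the product constraint becomes linear. The concavity computation for $g$ in that variable is the crux.
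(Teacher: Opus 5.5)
Your proposal is correct and follows the paper's proof essentially step for step. Both use the same sharpened prefix recurrence, with contraction factors $\alpha F_k/(f_k+\alpha F_k)=\alpha/(1+\alpha-F_{k-1}/F_k)$, and Jensen's inequality for the concave function of the logarithmic ratios: you take $\log(F_{k-1}/F_k)\le 0$, where the paper takes its negative. Both then pad with neutral factors equal to one to reach length $m-1$.
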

\begin{proof}
Let $k_*$ count the costs strictly below $v$. If $k_*=0$, the claim is
immediate. Set $A_k=\sum_{h\le k}f_h(v-s_h)$ for $k\le k_*$. Testing cutoff
$k$ now gives the stronger recurrence
\[
 K_\alpha(v)\ge A_k+\alpha\frac{F_k}{f_k}(A_k-A_{k-1}),\qquad
 A_k\le(1-\theta_k)K_\alpha(v)+\theta_k A_{k-1},\qquad
 \theta_k=\frac{\alpha F_k}{f_k+\alpha F_k}.
\]
Consequently $H(v)\le(1-\prod_{k\le k_*}\theta_k)K_\alpha(v)$.
Here $\theta_1=q$ and, for $k\ge2$,
$\theta_k=\alpha/(1+\alpha-F_{k-1}/F_k)$.

For $m\ge2$, put $x_k=\log(F_k/F_{k-1})$ for $2\le k\le k_*$ and pad the
list to length $m-1$ with zeroes. Its sum is
$\log(F_{k_*}/f_1)\le\log(1/p)$. The function
\[
 h(x)=\log(1+\alpha-e^{-x})\quad(x\ge0)
\]
is increasing and concave: its second derivative is
$-(1+\alpha)e^{-x}/(1+\alpha-e^{-x})^2<0$.
Jensen's inequality therefore yields
\[
 \prod_{k\le k_*}\theta_k
 =\frac{q\alpha^{m-1}}{\prod_{k=2}^{m}(1+\alpha-e^{-x_k})}
 \ge q\left(\frac{\alpha}{1+\alpha-p^{1/(m-1)}}\right)^{m-1}.
\]
Zero padding contributes a factor one to the product of the $\theta_k$.
For $m=1$ that product is just $q$. This proves \cref{eq:mass-cutoff}.
The case $\alpha=0$ is separately $H(v)=K_0(v)$.
\end{proof}

We now substitute this stronger cutoff estimate into the binary-buyer
argument. The probability restriction is preserved under the seller
transformations and conditioning used by the reduction, so the bilateral
bound extends with the same coefficient.

\begin{proof}[Proof of \cref{thm:mass-finite}]
Write $P=P_{m,p}(q)$. In the binary-buyer proof of
\cref{hs:lem:bilateral}, replace each use of $H\le(1-q^m)K_\alpha$ by
\cref{eq:mass-cutoff}. The slope inequality \cref{hs:eq:cutoff-slope} is
unchanged. For a low value $x$, high value $y$, high probability $t$, and
$z=x-qt(y-x)/(1-t)$, the resulting inequalities are
\[
 (1-t)H(x)\le(1-P)(1-t)K_\alpha(z)+q(1-q)tK_\alpha(y),
 \qquad tH(y)\le(1-P)tK_\alpha(y).
\]
Adding gives $\FB\le D_{m,p}(q)V_\alpha$. A deterministic buyer follows
from the cutoff bound alone. The coefficient is positive, since $P\le q<1$.

Full-distribution ironing preserves the ordered atom masses, including the
lowest mass. Conditioning a seller on a prefix of mass $F_r$ changes the
lowest atom to $f_1/F_r\ge f_1\ge p$. Changing the buyer probabilities on the same support preserves the binary
support. Thus all local premises remain in the specified class. Apply
\cref{thm:compatible-targets,eq:cap-target-bound} after full seller ironing,
and return using \cref{eq:regularization-summary}. This proves the matching
lower bound without comparing weighted contributions across transformed markets.

For tightness with $m\ge2$, take the family \cref{eq:family} with
$\eta=p^{1/(m-1)}$ fixed and let $t\downarrow0$. Its lowest seller atom is
exactly $p$, and \cref{eq:family-V,eq:family-F} gives
\[
 \lim_{t\downarrow0}\frac{\FB}{V_\alpha}
 =1+\frac{\alpha}{(1+\alpha)^2}
   -\frac{\alpha}{1+\alpha}
       \left(\frac{\alpha}{1+\alpha-\eta}\right)^{m-1}
 =D_{m,p}(q).
\]
Choose $t$ small enough that the low buyer lies below the second seller cost.
For rational data, approximate $\eta$ from above by rationals below one,
so $\eta^{m-1}\ge p$, and approximate $\alpha$ when necessary. Continuity
of the finite optimum at each fixed positive set of masses preserves the
limit. For $m=1$ the same family has the sole seller cost zero and gives
$D_{1,p}=1-q^2$. At $\alpha=0$ equality is automatic.
Finally $\SB=\inf_\alpha V_\alpha$ and the infima over instances and
multipliers commute. This proves the second-best assertion.
\end{proof}

It remains to remove the support bound. The following limit gives the
uniform guarantee, and the finite tight examples approach that limit.

\begin{proof}[Proof of \cref{thm:bottom-mass}]
For fixed $\alpha>0$, write $a=\log(1/p)>0$. Then
\[
 \lim_{m\to\infty}P_{m,p}(q)
 =q\lim_{h\to\infty}\left(\frac{\alpha}{\alpha+1-e^{-a/h}}\right)^h
 =q e^{-a/\alpha}=q p^{1/\alpha}.
\]
The union over all finite support sizes has coefficient given by this limit.
For completeness, $P_{m,p}$ is nonincreasing in $m$: adding a zero to any
list of logarithmic increments leaves the product unchanged, whereas
redistributing a fixed sum equally maximizes its denominator by the concavity
argument above. Thus every finite-support bound implies the limiting bound,
and the finite tight families approach it as $m\to\infty$.

Define $D_{\infty,p}(q)=1+q-q^2-q p^{(1-q)/q}$ for $0<q<1$.
Its continuous values at zero and one are respectively one and zero.
Taking the two infima gives the displayed formula for $\gamma(p)$.
For any $p>0$, this continuous function is strictly below $5/4$ everywhere:
$1+q-q^2\le5/4$, equality only at $q=1/2$, where the subtracted term is
positive. Hence $\gamma(p)>4/5$. As $p\downarrow0$, its value at $q=1/2$
is $5/4-p/2$, so the maximum tends to $5/4$.
As $p\uparrow1$, the functions converge uniformly on $[0,1]$ to $1-q^2$:
on $[\epsilon,1]$ this follows from the formula, and on $[0,\epsilon]$
the subtracted term is bounded by $q\le\epsilon$. Their maxima therefore
tend to one. Rational finite tight families from \cref{thm:mass-finite}
give the reverse infimum bounds throughout.
\end{proof}

\section{ { Public Values  and Offering Mechanisms } }\label{app:further-scope}
We next apply the reduction beyond the baseline efficiency guarantees.
Public values specific to a match can be incorporated in the local
problems, and   for offering mechanisms  we prove a separate comparison
that translates the Lagrangian bounds into guarantees for those particular
rules.

\subsection{Public Match-Specific Values}\label{app:edgewise-proof}\label{sec:scope}
The scalar private-information model permits different public values for
different matches. Suppose buyer $i$ values item $j$ at $v_i+a_{ij}$,
where $a_{ij}$ is public and only $v_i$ is private; the surplus is then
$v_i-c_j+a_{ij}$. These public offsets preserve the structural properties
used in the reduction.

\begin{theorem}[Binary Endpoints Suffice]\label{thm:edgewise-premia}
For independent finite scalar priors and any downward-closed family of
matchings, suppose every singleton-feasible edge has an endpoint with at most
two types and its other endpoint has at most $m$ types. For arbitrary public
additive values $a_{ij}$,
\[
 V_\alpha\ge D_m\!\left(\frac{\alpha}{1+\alpha}\right)^{-1}\FB,
 \qquad \SB\ge\gamma_m\FB.
\]
With no finite uniform bound on the other endpoint, $\SB\ge4\FB/5$.
The coefficients are tight already when all $a_{ij}=0$ and there is one edge.
\end{theorem}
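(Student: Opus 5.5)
The plan is to rerun the finite reduction with public offsets absorbed into the local problems, and with the binary side allowed to vary from edge to edge. There are two ingredients: the payment and ironing machinery must be unchanged by the offsets, and the bilateral bound must hold whichever endpoint of an edge is binary.

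\emph{Step 1: the offsets leave the machinery intact.} Types remain scalar, and buyer $i$'s utility from any allocation is $v_iq_i+\sum_j a_{ij}x_{ij}-p_i$. The public part is a report-independent function of the allocation, so it can be moved into the objective. BIC still amounts to monotone interim service in $v_i$, and the envelope payments of \cref{lem:allocation-characterization} are unchanged up to a public term that is collected by the mechanism. The Lagrangian edge weights become $\bar a_i^\alpha(v_i)+(1+\alpha)a_{ij}-\bar d_j^\alpha(c_j)$. I will double-check the factor on $a_{ij}$, but any fixed public constant per edge works. With this change, \cref{lem:exact-ironed-optimizer}, \cref{lem:stable-partner-v2} and the regularization \cref{prop:regularization-main} carry over verbatim: every served matching's weight is still own score plus a report-independent intercept, and first-best dominance still uses $\max\{A,B-c\}$ with $B$ absorbing the offsets.

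\emph{Step 2: local bilateral bounds.} For an edge $e=(i,j)$, shift the seller's cost support by $-a_{ij}$, i.e. use $c_j-a_{ij}$. This turns the local problem into an ordinary bilateral instance. The shift is a translation, so virtual costs translate and the cap structure is preserved. If the buyer is binary, \cref{hs:lem:bilateral} applied to every buyer prior on the fixed binary support and every conditioned seller prefix (at most $m$ types) gives the hypothesis (i) of \cref{thm:compatible-targets} with factor $D_m(q)^{-1}$. If instead the seller is binary and the buyer has at most $m$ types, use the side exchange of \cref{thm:curve}: reflect around a common upper bound as in \cref{sec:mhr-overview}, which preserves surplus, IC and revenue. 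This yields the same bilateral inequality, but now the caps live on the buyer side. To fit the composition theorem, I would run \cref{thm:edge-composition} and \cref{thm:compatible-targets} with the roles of the two sides swapped for such edges. \Cref{lem:stable-partner-v2} is symmetric: accepted buyer types form a final set that increases as the seller cost falls. Composition is edge-by-edge with independent residual profiles, so each edge may use its own orientation, and envelope payments add over edges.

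\emph{Step 3: conclusions and the main obstacle.} Summing the edge targets gives $V_\alpha\ge D_m(q)^{-1}\FB$ in the regularized market. Regularizing sellers for buyer-binary edges and buyers (dually) for seller-binary edges preserves support sizes, and \cref{eq:regularization-summary} returns the bound to the original market. \Cref{lem:allocation-characterization} then gives $\SB\ge\gamma_m\FB$. With unbounded $m$, use $D_m(q)\le 1+q-q^2\le5/4$, which gives $4/5$. Tightness follows from \cref{lem:family} with $a_{ij}=0$ on one edge.

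The main obstacle is the mixed-orientation step. An agent may be the binary endpoint on one edge and the many-type endpoint on another. The regularization must then be applied to both buyers and sellers at once, and the first-best matching must remain canonical. I would first check that ironing buyers and sellers simultaneously still preserves $V_\alpha$ and increases $\FB$; the buyer analogue holds by symmetry with $(v-\kappa)_+$. I would also check that it keeps the cap-monotone rules well defined on every edge, which should work because the rent-correction sign argument in \cref{eq:unified-rent-correction} is local to each edge.
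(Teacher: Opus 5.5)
Your Steps 1 and 3 match the paper's proof. The offsets enter the edge weights as $(1+\alpha)a_{ij}$. Stable partners, the ironing equalities and the seller replacement survive because public constants only move intercepts. Duality and the one-edge family of \cref{thm:curve} finish. You translate the seller by $-a_{ij}$ where the paper translates the buyer support by $+a_{ij}$; either is harmless.

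The real difference is how you treat edges whose binary endpoint is the seller. You reorient the composition there, with caps on the buyer side. That forces mirror-image versions of \cref{thm:edge-composition,thm:compatible-targets} and regularization of buyers as well as sellers, which is the mixed-orientation obstacle you flag.

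The paper never reorients. Premise (i) of \cref{thm:compatible-targets} is a purely bilateral value inequality, quantified over every buyer prior on the fixed buyer support and every seller prefix, and it does not care which side is binary. Take a bilateral instance with a buyer of at most $m$ types and a seller prefix of at most two types. Reflecting around a common bound $U$ preserves $V_\alpha$ and $\FB$, since $\phi_{U-C}(U-c)=U-\tau_C(c)$ and $\tau_{U-B}(U-b)=U-\phi_B(b)$. The result is a binary-buyer instance covered by \cref{hs:lem:bilateral}. So reflection is used only to certify the local premise, and every edge keeps seller caps. Only sellers are regularized: binary sellers stay binary, and their prefixes have at most two types. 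No buyer regularity is needed, because the premise already ranges over arbitrary buyer priors. The rent correction \cref{eq:rent-correction} is literally unchanged.

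Your route does close. Mixing orientations across edges is harmless, because each agent has a fixed partner throughout her winning reports and envelope payments decompose linearly edge by edge. Buyers can be regularized one at a time. Reflect the seller construction, and use $\max\{A,B+v\}=A+(v-\kappa)_+$ with $\kappa=A-B$, which gives the first-best comparison. This preserves every ironed score and hence $V_\alpha$. Simplest is to regularize every agent that is the capped side of some edge, or just all agents.

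Still, your route duplicates the apparatus to resolve an obstacle that the value-level form of the premise removes. A two-orientation composition would pay off only where cap-monotone rules are built directly for one side, as in the MHR section, rather than derived from bilateral value bounds.
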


\begin{proof}
Let $A_i(v)=\E[\sum_j a_{ij}x_{ij}\mid v_i=v]$. Replacing interim buyer
payment $T_i(v)$ by $T_i(v)-A_i(v)$ gives the scalar incentive inequalities.
Thus the normalized maximum revenue is
\[
 \Lambda_a(x)=\E\Bigl[\sum_{ij}(\phi_i(v_i)-\tau_j(c_j)+a_{ij})x_{ij}\Bigr].
\]
The Lagrangian edge weight is $a_i^\alpha(v_i)-d_j^\alpha(c_j)+(1+\alpha)a_{ij}$.
Public constants only change the intercepts of served matchings. Stable
partners, cap-monotonicity, and the ironing equalities continue to hold. Conditional
first best is still $\max\{A,B-c\}$ in one seller's cost, so the same full
seller replacement preserves $V_\alpha$ and weakly increases $\FB$.

At edge $(i,j)$ use the shifted buyer support $b_i^\ell+a_{ij}$. Its gaps and
number of types are unchanged. If the binary endpoint is the seller, swap and
reflect the two sides in the bilateral inequality: for a common constant $U$,
$\phi_{U-C}(U-c)=U-\tau_C(c)$ and
$\tau_{U-B}(U-b)=U-\phi_B(b)$. The resulting support bounds are exactly those
of \cref{hs:lem:bilateral}. Changing the buyer probabilities on the same support and conditioning the
seller on a lower prefix preserve them,
as does the full seller replacement. Therefore every local premise holds at
the common coefficient. The rent correction \cref{eq:rent-correction} is
unchanged. Edge-by-edge composition and duality prove the result. Any finite instance has a
finite maximum support size, so letting this size vary gives the $4/5$ statement.
\end{proof}

\subsection{Generalized Offering Mechanisms}\label{app:offering-consequences}
The reduction preserves a value bound, not a chosen bilateral mechanism.
For generalized offering mechanisms there is, however, a useful additional
comparison at $\alpha=1$, relating the optimized Lagrangian to the gains
of two specified rules.

Use independent finite priors and downward-closed matching feasibility.
Let $\bar\phi_i$ and $\bar\tau_j$ be the separately ironed virtual values and
virtual costs. The generalized seller-offering mechanism (GSOM) maximizes
$\sum_{(i,j)\in M}(\bar\phi_i(v_i)-c_j)$; the generalized buyer-offering
mechanism (GBOM) maximizes
$\sum_{(i,j)\in M}(v_i-\bar\tau_j(c_j))$. In both rules minimize cardinality
among maximizers, then use a fixed order, and use the finite-support threshold
payments. These are the mechanisms of \cite{BCWZ17}; see also
\cite{BRTW26}. Write $G_{\rm offer}$ for the expected GFT when choosing
between them with equal probability, independently of reports.

\begin{lemma}[Comparison with Offering Mechanisms]\label{lem:offering-comparison}
Both rules are DSIC, ex-post IR, and ex-ante WBB, and
\begin{equation}\label{eq:offering-comparison}
 2G_{\rm offer}\ge V_1.
\end{equation}
\end{lemma}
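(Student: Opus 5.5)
The plan has two parts: first the incentive and budget properties of each offering rule, then the comparison $2G_{\rm offer}\ge V_1$ at the level of objectives, using the decomposition displayed in the introduction.

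\emph{Incentives, participation and budget.} Consider GSOM. Its edge weights are $\bar\phi_i(v_i)-c_j$, with $\bar\phi_i$ nondecreasing. By the argument of \cref{lem:exact-ironed-optimizer}, at fixed opponents a served matching has weight equal to the agent's own coefficient plus a fixed intercept, and the tie rule does not depend on reports. Hence each agent's service is a threshold function of her own report, and threshold payments give DSIC and ex-post IR. The same holds for GBOM with weights $v_i-\bar\tau_j(c_j)$.

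For the budget, I use the identity of \cref{lem:allocation-characterization} at profile level. For GSOM, the seller receives her threshold, the largest cost at which she is still served. This equals exactly the normalized seller envelope payment, so the seller side contributes $-\E[\sum\tau_j y_j]$ to expected revenue. The buyer side contributes $\E[\sum\phi_i q_i]$. Since GSOM is constant on buyer ironing blocks, this equals $\E[\sum\bar\phi_i q_i]$. Thus expected revenue is
\[
\E\Bigl[\sum(\bar\phi_i-\tau_j)x_{ij}\Bigr]\ \ge\ \E\Bigl[\sum(\bar\phi_i-c_j)x_{ij}\Bigr]\ \ge 0,
\]
because every selected edge has positive weight, by minimal cardinality. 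Threshold payments coincide with normalized envelopes, so this expected revenue is the profilewise value. GBOM is symmetric.

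\emph{The comparison.} Let $x^*=X^1$ attain $V_1$ by \cref{lem:exact-ironed-optimizer}. By \cref{lem:allocation-characterization} and the ironing inequalities \cref{eq:buyer-ironing-global} and \cref{eq:seller-ironing-global}, applied to each side separately,
\[
V_1=\GFT(x^*)+\Lambda(x^*)\le \E\Bigl[\sum(\bar\phi_i-c_j)x^*_{ij}\Bigr]+\E\Bigl[\sum(v_i-\bar\tau_j)x^*_{ij}\Bigr].
\]
Here $\GFT+\Lambda$ is split into the two offering objectives, and each term is ironed on one side only. Each term is at most the profilewise maximum of the corresponding objective, which is the GSOM objective value $\Phi_S$ and the GBOM value $\Phi_B$ respectively.

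It remains to bound each objective by twice the gains of its own rule, for example $\Phi_S\le 2\,\GFT(\mathrm{GSOM})$ and likewise for GBOM. Then
\[
V_1\le \Phi_S+\Phi_B\le 2\cdot\tfrac12\bigl(2\GFT_S+2\GFT_B\bigr)=2G_{\rm offer}.
\]

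\emph{The main obstacle} is this last step. The ironed objective of GSOM is bounded by its gains plus its buyer revenue: $\bar\phi_i\le v_i$ pointwise by \cref{lem:ironing-endpoint-bounds}, so $\Phi_S\le\GFT_S$ directly. That is in fact stronger than needed. With $\Phi_S\le\GFT_S$ and $\Phi_B\le\GFT_B$ we get $V_1\le\GFT_S+\GFT_B=2G_{\rm offer}$, which is the claim.

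The remaining technical point is to check the orientation of $\bar\tau_j(c)\ge c$. This also follows from \cref{lem:ironing-endpoint-bounds}, applied with $\alpha$ replaced by the appropriate pure virtual scores. The separately ironed $\bar\phi$ and $\bar\tau$ correspond to weighted ironing of $\phi$ and $\tau$ alone, and the same merge argument gives $\bar\phi\le v$ and $\bar\tau\ge c$.
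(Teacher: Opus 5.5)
\textbf{The gap: ex-ante WBB.} Your comparison part is correct and is essentially the paper's argument. You split $\GFT(x)+\Lambda(x)$ at $\alpha=1$ into the GSOM and GBOM objectives and iron each term on one side only. You then bound each term by its profilewise maximum and use $\bar\phi_i\le v_i$, $\bar\tau_j\ge c_j$. Delete the intermediate ``twice the gains'' plan, since $2\cdot\tfrac12(2\GFT_S+2\GFT_B)$ equals $4G_{\rm offer}$, not $2G_{\rm offer}$. The direct bounds $\Phi_S\le\GFT_S$ and $\Phi_B\le\GFT_B$ you give afterwards are what is needed. The DSIC and ex-post IR argument is also fine.

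You correctly identify GSOM's expected revenue as $\E[\sum(\bar\phi_i(v_i)-\tau_j(c_j))x_{ij}]$. You then claim it is at least $\E[\sum(\bar\phi_i(v_i)-c_j)x_{ij}]$, which would require $\tau_j\le c_j$. But virtual costs satisfy $\tau_j(c)\ge c$ by \cref{eq:finite-virtual-scores}, so the inequality runs the other way. Positivity of the GSOM edge weights $\bar\phi_i-c_j$ therefore says nothing about the sign of revenue. Your ``symmetric'' GBOM argument fails for the same reason: it would need $\phi_i\ge v_i$. The inequality $2G_{\rm offer}\ge V_1$ does not use WBB, so this gap affects only the first claim of the lemma. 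That claim is, however, part of the statement.

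\textbf{The missing idea: a profilewise bound on the critical payment of the un-ironed side.} In GSOM, fix all reports except seller $j$'s. The best matching $M$ serving $j$ does not depend on $c_j$, so $j$'s partner $i$ is fixed on her winning reports. Write $M$'s weight as $C_M-c_j$ with $C_M=\bar\phi_i(v_i)+W(M\setminus\{(i,j)\})$. By downward closure, $M\setminus\{(i,j)\}$ is a feasible matching not serving $j$. Hence the best unserved value $A$ satisfies $A\ge C_M-\bar\phi_i(v_i)$. The critical cost $C_M-A$ is therefore at most $\bar\phi_i(v_i)$. So is $j$'s finite-support threshold receipt, which is the largest supported cost at which she still trades.

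Summing over served sellers, expected receipts are at most $\E[\sum\bar\phi_i(v_i)x_{ij}]$. This equals the expected buyer payments $\E[\sum\phi_i q_i]$, because GSOM is constant on buyer ironing blocks. Symmetrically, in GBOM each served buyer's threshold payment is at least her partner's $\bar\tau_j(c_j)$. Expected seller receipts equal $\E[\sum\bar\tau_j y_j]$, because GBOM is constant on seller ironing blocks. This is exactly how the paper proves ex-ante WBB for both rules.
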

\begin{proof}
The implementation properties are established in
\cite[Lemma~10]{BCWZ17}. We recall the argument for our finite-support
conventions. The intercept argument of \cref{lem:exact-ironed-optimizer}
gives monotone service and a fixed partner throughout each winning region.
Threshold payments therefore give DSIC and ex-post IR. In GSOM a served
seller's threshold receipt is at most her partner's $\bar\phi_i(v_i)$.
Expected buyer payments equal
$\E[\sum_i\phi_i(v_i)q_i]=\E[\sum_i\bar\phi_i(v_i)q_i]$, because service is
constant on each buyer ironing block. Expected receipts cannot exceed these
payments. For GBOM, each served buyer's threshold payment is at least her
partner's $\bar\tau_j(c_j)$; expected seller receipts equal
$\E[\sum_j\bar\tau_j(c_j)y_j]$. This proves ex-ante WBB for both rules.

For any $x\in\mathcal X$, split its budget Lagrangian into
\[
 \GFT(x)+\Lambda(x)
 =\E\Bigl[\sum_{(i,j)}(\phi_i(v_i)-c_j)x_{ij}\Bigr]
  +\E\Bigl[\sum_{(i,j)}(v_i-\tau_j(c_j))x_{ij}\Bigr].
\]
The two ironing inequalities bound these terms by, respectively,
\[
 \E\Bigl[\max_{M\in\mathcal F}\sum_{(i,j)\in M}
       (\bar\phi_i(v_i)-c_j)\Bigr],\qquad
 \E\Bigl[\max_{M\in\mathcal F}\sum_{(i,j)\in M}
       (v_i-\bar\tau_j(c_j))\Bigr].
\]
These are the objectives optimized by GSOM and GBOM. Since
$\bar\phi_i(v_i)\le v_i$ and $\bar\tau_j(c_j)\ge c_j$, their actual gains
are at least these respective objectives. Maximizing the left side over
$x$ proves \cref{eq:offering-comparison}.
\end{proof}

Consequently, any factor $\beta$ satisfying the premise of
\cref{thm:bilateral-to-matching} at $\alpha=1$ implies
$G_{\rm offer}\ge(\beta/2)\FB$. This is a consequence of the reduction
\emph{and} \cref{lem:offering-comparison}, not of the reduction alone.
Applying this comparison to the unrestricted fixed-multiplier bound and
to our binary-buyer curve gives the following guarantees.

\begin{corollary}[Offering Guarantees]\label{cor:offering-guarantees}
Under independent finite priors and downward-closed matching feasibility,
$G_{\rm offer}\ge\FB/\pi$. If all buyers are binary and all sellers have at
most $m$ types, then
\begin{equation}\label{eq:binary-offering-bound}
 G_{\rm offer}\ge\frac{2}{5-2^{2-m}}\FB.
\end{equation}
In particular, binary types on both sides give $\FB/2$, and binary buyers
with arbitrary finite seller supports give $2\FB/5$. The statements also
hold with the two sides exchanged. No tightness claim is made for these
mechanism guarantees.
\end{corollary}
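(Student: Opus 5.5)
Both claims of \cref{cor:offering-guarantees} follow from \cref{lem:offering-comparison}, which gives $G_{\rm offer}\ge V_1/2$ in every finite market with downward-closed feasibility. It therefore suffices to bound $V_1/\FB$ from below in each class at the single multiplier $\alpha=1$, that is, at $q=\alpha/(1+\alpha)=1/2$. We apply \cref{thm:bilateral-to-matching} with $\alpha=1$ to the relevant bilateral bound.

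For the unrestricted statement we need a bilateral bound $V_1\ge\beta\FB$ with $\beta=2/\pi$. This does not follow from the half guarantee, since $V_1\ge\SB\ge\FB/2$ would only give $G_{\rm offer}\ge\FB/4$. We would instead take the known random-offerer guarantee $\operatorname{RO}\ge\FB/\pi$ of \cite{Jo26} for finite bilateral trade. In bilateral trade the generalized seller- and buyer-offering mechanisms are the ironed versions of the seller- and buyer-offering rules, so their gains are at least those of the random-offerer's profit-maximizing offers; hence $G_{\rm offer}\ge\FB/\pi$ bilaterally. Because the splitting in the proof of \cref{lem:offering-comparison} only yields $V_1\ge$ the random-offerer benchmark bound in one direction, we would instead obtain $V_1\ge 2\cdot(\text{average of the two offering objectives})$ directly. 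The two offering rules are themselves BIC and IIR, and their revenue is nonnegative, so $V_1\ge\GFT+\Lambda\ge\GFT$ for each of them. This gives $V_1\ge\max(\GFT_{\rm GSOM},\GFT_{\rm GBOM})\ge G_{\rm offer}\ge\FB/\pi$ bilaterally, but this is only $\beta=1/\pi$. The hard step is to get the factor $2$: we would try showing that for each offering rule $x$, $\Lambda(x)$ is at least its offerer's profit, so that $\GFT+\Lambda$ summed suitably reproduces twice the random-offerer value. If that fails, we would cite the $V_1\ge(2/\pi)\FB$ form directly from \cite{Jo26}'s analysis, which bounds the sum of the two offering objectives. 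Then \cref{thm:bilateral-to-matching} transfers $V_1\ge(2/\pi)\FB$ to matching markets, and \cref{lem:offering-comparison} gives $\FB/\pi$.

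For binary buyers and sellers with at most $m$ types, \cref{thm:curve} gives $V_1\ge D_m(1/2)^{-1}\FB$ in matching markets. Since $D_m(1/2)=1+\tfrac12-\tfrac14-2^{-m}=\tfrac54-2^{-m}$, we get
\[
 G_{\rm offer}\ge\frac{1}{2(5/4-2^{-m})}\FB=\frac{2}{5-2^{2-m}}\FB .
\]
Taking $m=2$ gives $2/4=1/2$. Letting $m$ be arbitrary, any finite instance has some finite $m$, and $2/(5-2^{2-m})\ge2/5$, which gives the $2\FB/5$ statement. Exchanging the sides uses the reflected version asserted at the end of \cref{thm:curve}; it also requires checking that GSOM and GBOM swap under reflection, which holds because the reflection identities for $\phi$ and $\tau$ in the proof of \cref{thm:edgewise-premia} carry over to the ironed scores. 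We expect the unrestricted $1/\pi$ constant to be the main obstacle, since it depends on importing the external random-offerer bound in its Lagrangian form.
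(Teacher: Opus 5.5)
Your binary-support argument is correct and is exactly the paper's proof. You evaluate \cref{thm:curve} at $q=1/2$, where $D_m(1/2)=5/4-2^{-m}$, and halve via \cref{lem:offering-comparison}. The cases $m=2$, $m\to\infty$ and the exchanged sides then follow as you say. The GSOM/GBOM swap check is unnecessary, since the comparison lemma holds in every market and \cref{thm:curve} already covers the exchanged sides.

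The unrestricted $\FB/\pi$ claim, however, is not established.
\begin{itemize}
\item \emph{Your first route cannot work.} A lower bound on $V_1$ cannot come from the offering rules' gains, because $V_1\ge2\operatorname{RO}$ fails in general. Take a buyer uniform on $\{1,3\}$ and a seller with cost $0$. The combined buyer scores are $0$ and $6$, so $V_1=3$. The seller- and buyer-offering rules have gains $3/2$ and $2$, so $2\operatorname{RO}=7/2>V_1$. Likewise, the buyer-offering allocation (always trade) has $\Lambda=\E[\phi(v)]=1$, below the buyer-offerer's profit $2$. So the factor $2/\pi$ must enter as an independent Lagrangian bound.
\item \emph{Your fallback cites the wrong form of that bound.} You take it from \cite{Jo26}, whose fixed-multiplier theorem is stated for smooth priors. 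But \cref{thm:bilateral-to-matching} needs $V_1\ge(2/\pi)\FB$ for every independent \emph{finite} bilateral instance. Passing from smooth to finite priors for $V_1$ needs an approximation argument. You do not give one, and the paper's appendix only goes from finite to Borel priors.
\end{itemize}
The paper instead takes the finite-prior inequality $V_1\ge(2/\pi)\FB$ from \cite[Lemma~4.9]{BLWZ26}. Its one-edge case also furnishes the premise of \cref{thm:bilateral-to-matching}. The paper then applies \cref{lem:offering-comparison} to obtain $G_{\rm offer}\ge\FB/\pi$.
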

\begin{proof}
For the unrestricted statement, the finite-prior inequality
$V_1\ge(2/\pi)\FB$ follows from the fixed-multiplier bound in
\cite[Lemma~4.9]{BLWZ26}. Its one-edge case also supplies the premise of
\cref{thm:bilateral-to-matching} at $\alpha=1$. The unrestricted input is therefore a prior result; the comparison lemma
explains its consequence for these mechanisms.
For the support restriction, substitute $q=1/2$ in \cref{thm:curve} to get
$V_1\ge(5/4-2^{-m})^{-1}\FB$, then apply
\cref{lem:offering-comparison}. Letting the finite support size grow gives
$2/5$; reflection exchanges the sides.
\end{proof}

Jo's bilateral analysis~\cite{Jo26} obtains the same $2/\pi$ Lagrangian
factor at $\alpha=1$ and converts it to the $1/\pi$ random-offerer guarantee
by comparison with proposer profits. His fixed-multiplier theorem is stated
for smooth priors, followed by a separate approximation argument for the
random-offerer guarantee on bounded Borel priors. The finite-prior version used here is supplied by \cite{BLWZ26}. A bilateral upper bound for random
offerer bounds that specified mechanism only; it gives no upper bound on
$\SB$ and is not an output of our reduction.

\end{document}